\newcommand{\alg}{\mathcal{A}}
\newcommand{\els}{\mathcal{D}}
\newcommand{\preds}{\Psi}
\newcommand{\mint}{\mathsf{mint}}
\newcommand{\inm}{\sqsubset}
\newcommand{\den}[1]{\llbracket #1 \rrbracket}
\newcommand{\sra}{\ensuremath{\mathcal{S}}}
\newcommand{\lang}{\ensuremath{\mathscr{L}}}
\newcommand{\tr}[3][]{\xrightarrow[#1]{#2/#3}}
\newcommand{\dom}{\mathsf{dom}}
\newcommand{\img}{\mathsf{img}}
\newcommand{\fresh}[1]{#1^{\bullet}}
\newcommand{\checkreg}[1]{#1^{=}}
\newcommand{\nrm}[1]{\mathsf{N}(#1)}
\newcommand{\ns}[2]{{#2} \rhd {#1}}
\newcommand{\atom}{\mathsf{atom}}
\newcommand{\fres}[2]{#1\!\restriction\!#2}
\newcommand{\Pow}{\mathcal{P}}
\newcommand{\bigO}{\mathcal{O}}
\newcommand{\clts}{\mathsf{CLTS}}
\newcommand{\sat}{\mathsf{isSat}}
\newcommand{\regab}{\mathscr{E}}
\newcommand{\card}[1]{|#1|}%
\newcolumntype{P}[1]{>{\centering\arraybackslash}m{#1}}
\newcommand{\srasim}{\mathcal{R}}
\newcommand{\ltssim}{\mathcal{R}}
\newcommand{\nrmsim}{\stackrel{\mathtt{N}}{\prec}}
\newcommand{\nrmbsim}{\stackrel{\mathtt{N}}{\sim}}
\def\SFA{SFA\xspace}
\def\SFAs{SFAs\xspace}
\def\SEFA{SEFA\xspace}
\def\SEFAs{SEFAs\xspace}
\def\RA{RA\xspace}
\def\RAs{RAs\xspace}
\def\SRA{SRA\xspace}
\def\SRAs{SRAs\xspace}
\newcommand{\rone}{(\emph{i})~}
\newcommand{\rtwo}{(\emph{ii})~}
\title{Symbolic Register Automata\thanks{This work was partially funded by National Science Foundation Grants CCF-1763871, CCF-1750965, a Facebook TAV Research Award, the ERC starting grant Profoundnet (679127) and a Leverhulme Prize (PLP-2016-129).}}
\author{
Loris D'Antoni\inst{1} \and
Tiago Ferreira\inst{2} \and
Matteo Sammartino\inst{2} \and
Alexandra Silva\inst{2}}
\institute{
University of Wisconsin–Madison, Madison, WI 53706-1685, USA\\
\email{loris@cs.wisc.edu} \and
University College London, Gower Street, London, WC1E 6BT, UK\\
\email{me@tiferrei.com, \{m.sammartino,a.silva\}@ucl.ac.uk}
}
\begin{document}
\maketitle

\begin{abstract}
Symbolic Finite Automata and Register Automata are two orthogonal extensions of finite automata motivated by real-world problems where data may have unbounded domains. These automata address a demand for a model over large or infinite alphabets, respectively. Both automata models have interesting applications and have been successful in their own right.  In this paper, we introduce Symbolic Register Automata, a new model that combines features from both symbolic and register automata, with a view on applications that were previously out of reach. We study their properties and provide algorithms for emptiness, inclusion and equivalence checking, together with experimental results. 
\end{abstract}

\section{Introduction}

Finite automata are a ubiquitous formalism that is simple enough to model many real-life systems and phenomena. 
They enjoy a large variety of theoretical properties that in turn play a role in practical applications.
For example,  finite automata are closed under Boolean operations, and
have decidable emptiness and equivalence checking procedures.
Unfortunately, finite automata have a fundamental limitation: 
they can only operate over finite (and typically small) alphabets.
Two \emph{orthogonal} families of automata models have been proposed to overcome this:
\emph{symbolic automata} and \emph{register automata}. 
In this paper, we show that these two models can be combined yielding a new powerful model that can cover interesting applications previously out of reach for existing models. 

Symbolic finite  automata (\SFA) allow transitions to carry predicates over
rich first-order alphabet theories, such as linear arithmetic, and therefore extend
classic automata to operate over infinite alphabets~\cite{DAntoniV17}. 
For example, an \SFA can define the language of all lists of integers in which the first and last elements are positive 
integer numbers.
Despite their increased expressiveness, \SFAs enjoy the same closure and decidability properties of finite automata---e.g., closure under
Boolean operations and decidable equivalence and emptiness.

Register automata (\RA) support infinite alphabets by allowing 
input characters to be stored in registers during the computation and to be compared
against existing values that are already stored in the registers~\cite{KaminskiF94}.
For example, an \RA can define the language of all lists of integers in which all numbers appearing in even positions are the same.
\RAs do not have some of the properties of finite automata (e.g., they cannot be determinized), but they still enjoy many 
useful properties that have made them a popular model in static analysis, software verification, and program monitoring~\cite{GrigoreDPT13}.

In this paper, we combine the best features of these two models---first order alphabet theories and registers---into a new model, 
\emph{symbolic register automata} (\SRA). \SRA are strictly more expressive than \SFA and \RA. For example, an \SRA can define the language of all lists of integers in which the first and last elements are positive rational numbers
and all numbers appearing in even positions are the same. This language is not recognizable by either an \SFA nor by an \RA.

While other attempts at combining symbolic automata and registers have resulted in undecidable models with limited closure properties~\cite{DAntoni2015},
we show that \SRAs enjoy the same closure and decidability properties of (non-symbolic) register automata. We propose a new application enabled by \SRAs and implement our model in an open-source automata library. 

In summary, our contributions are:
\begin{itemize}
	\item Symbolic Register Automata (\SRA): a new automaton model that can handle complex alphabet theories while allowing symbols at arbitrary positions in the input string to be compared using equality (\S~\ref{sec:model}).
	\item A thorough study of the properties of \SRAs. We show that \SRAs are closed under intersection, union and (deterministic) complementation, and provide algorithms for emptiness and forward (bi)simulation (\S~\ref{sec:decision}).
	\item A study of the effectiveness of our \SRA implementation on handling regular expressions with back-references (\S~\ref{sec:eval}). We compile a set of benchmarks from existing regular expressions with back-references (e.g., \texttt{(\textbackslash d)[a{-}z]$^*$\textbackslash 1}) and 
show that \SRAs are an effective model for such expressions and existing models such as \SFAs and \RAs are not. Moreover, we show that \SRAs are more efficient than the \texttt{java.util.regex} library for matching regular expressions with back-references.

\end{itemize}
\section{Motivating example}
\label{sec:mot-ex}

\begin{figure}[t]  
\centering
     \subfloat[Regular expression $r_p$ (with back-reference). \label{ex:ipregex:regex}]{%
       \texttt{C:(.\{3\}) L:(.) D:[\^{}\textbackslash s]+( C:\textbackslash 1 L:\textbackslash 2 D:[\^{}\textbackslash s]+)+ }
     }\\
\makebox[\textwidth][c]{     
     \subfloat[Example text matched by $r_{p}$.\label{ex:ipregex:match}]{%
       \texttt{     C:X4a L:4 D:bottle C:X4a L:4 D:jar}
     }\hfill
     \subfloat[Example text \emph{not} matched by $r_{p}$.\label{ex:ipregex:notmatch}]{%
       \texttt{C:X4a L:4 D:bottle C:X5a L:4 D:jar}
     }
     }     
     \subfloat[Snippets of a symbolic register automaton $A_{p}$ corresponding to $r_{p}$.\label{ex:ipregex:sra}]{%
            \includegraphics[width=\textwidth]{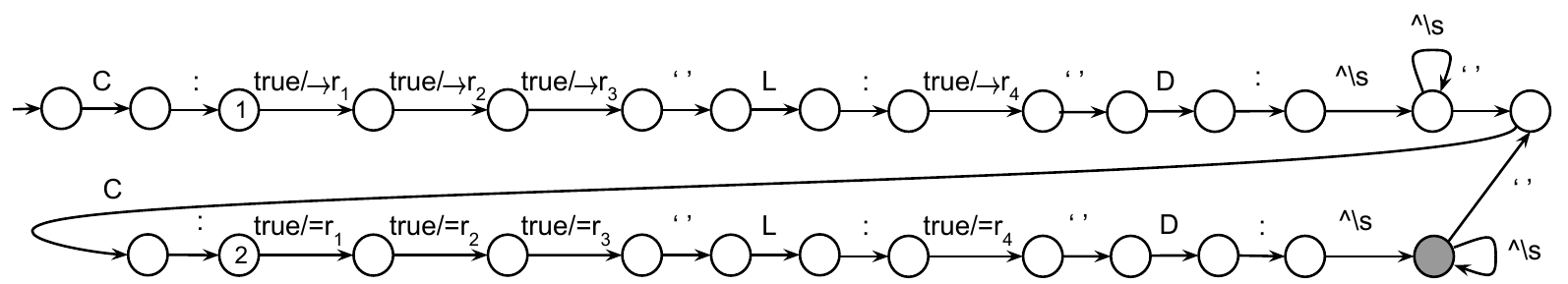}
     }    
     \vspace{-2ex}
\caption{Regular expression for matching products with same code and lot number---i.e., the characters of C and L are the same in all the products.
\label{ex:ipregex}}
\end{figure}

In this section, we illustrate the capabilities of symbolic register automata using a simple example.
Consider the regular expression $r_{p}$ shown in Figure~\ref{ex:ipregex:regex}.
This  expression, given a sequence of product descriptions, checks whether the products have the same code and lot number.
The reader might not be familiar with some of the unusual syntax of this expression. In particular, $r_{p}$ uses
two back-references \texttt{\textbackslash 1} and \texttt{\textbackslash 2}. The semantics of this construct is that the string matched by the regular expression
for \texttt{\textbackslash 1} (resp. \texttt{\textbackslash 2}) should be exactly the string that matched the subregular expression $r$ appearing between the first (resp. second) two parenthesis, in this case \texttt{(.\{3\})} (resp. \texttt{(.)}).
Back-references allow regular expressions to check whether the encountered text is the same or is different from a string/character that appeared earlier in the input (see Figures~\ref{ex:ipregex:match} and \ref{ex:ipregex:notmatch} for examples of positive and negative matches).

Representing this complex  regular expression using an automaton model requires addressing several challenges.
The expression $r_{p}$:
\begin{enumerate}[topsep=2pt,noitemsep]
\item operates over large input alphabets consisting of upwards of 2$^{16}$ characters;
\item uses complex character classes (e.g., \texttt{\textbackslash s}) to describe different sets of characters in the input;
\item adopts back-references to detect repeated strings in the input.
\end{enumerate}
Existing automata models do not address one or more of these challenges.
Finite automata require one transition for each character in the input alphabet and blow-up when representing large alphabets.
Symbolic finite  automata (\SFA) allow transitions to carry predicates over
rich structured first-order  alphabet theories and can describe, for example, character classes~\cite{DAntoniV17}. 
However, \SFAs cannot directly check whether a character or a string is repeated in the input. 
An \SFA for describing the regular expression $r_{p}$ would have to store the characters after \texttt{C:}  directly
in the states to later check whether they match the ones of the second product. Hence,
the smallest \SFA for this example would require billions of states!
Register automata (\RA) and their variants can store characters in registers during the computation and compare
characters
against values already stored in the registers~\cite{KaminskiF94}.
Hence, \RAs can check whether the two products have the same code. However, \RAs only operate over unstructured infinite
alphabets and cannot check, for example, that a character belongs to a given class.

The model we propose in this paper, \emph{symbolic register automata} (\SRA), combines the best features of \SFAs and \RAs---first-order alphabet theories and registers---and can address all the three aforementioned challenges.
Figure~\ref{ex:ipregex:sra} shows a snippet of a symbolic register automaton $A_{p}$ corresponding to $r_{p}$. 
Each transition in $A_{p}$ is labeled with a predicate that describes what characters can trigger the transition.
For example, \texttt{\^{}\textbackslash s} denotes that the transition can be triggered by any non-space character, 
\texttt{L} denotes that the transition can be triggered by the character \texttt{L},
and
\texttt{true} denotes that the transition can be triggered by any character.
Transitions of the form $\varphi/\!\!\rightarrow \! r_i$ denote that, if a character $x$ satisfies the predicate $\varphi$, the character is then stored in the register $r_i$. For example, the transition out of state 1 reads any character and stores it in register $r_1$.
Finally, transitions of the form $\varphi/\!\!=r_i$ are triggered if a character $x$ satisfies the predicate $\varphi$ and $x$ is the same character as the one stored in $r_i$.
For example, the transition out of state 2 can only be triggered by the same character that was stored in $r_1$ when reading the transition
out state 1---i.e., the first characters in the product codes should be the same.

\SRAs are a natural model for describing regular expressions like $r_{p}$, where capture groups are of bounded length, and hence correspond to finitely-many registers.
The \SRA $A_{p}$ has fewer than 50 states (vs. more than 100 billion for \SFAs) and can, for example, be used to
 check whether an input string matches the given regular expression (e.g., monitoring).
More interestingly, in this paper we study the closure and decidability properties of \SRAs and provide an implementation for
our model.
For example, consider the following regular expression $r_{pC}$ that only checks whether the product codes are the same, but not the lot numbers:
\[      
\texttt{C:(.\{3\}) L:. D:[\^{}\textbackslash s]+( C:\textbackslash 1 L:. D:[\^{}\textbackslash s]+)+}
 \]
The set of strings accepted by $r_{pC}$ is a superset of the set of strings accepted by $r_{p}$.
In this paper, we present simulation and bisimulation algorithms that can check this property.
Our implementation can show that $r_{p}$ subsumes $r_{pC}$ in 25 seconds
and we could not find other tools that can prove the same property.
\section{Symbolic Register Automata}
\label{sec:model}

In this section we introduce some preliminary notions, we define symbolic register automata and a variant that will be useful in proving decidability properties.

\smallskip
\noindent
\textbf{Preliminaries.} An \emph{effective Boolean algebra} $\alg$ is a tuple $(\els, \preds, \den{\_}, \bot, \top,\land,\lor,\neg)$, where: $\els$ is a set of domain elements; $\preds$ is a set of predicates closed under the Boolean connectives and $\bot, \top \in \preds$. The denotation function
$\den{\_} \colon : \preds \to 2^{\els}$ is such that $\den{\bot} = \emptyset$ and $\den{\top} = \els$, for all $\varphi,\psi \in \preds$, $\den{\varphi \lor \psi} = \den{\varphi} \cup \den{\psi}$, $\den{\varphi \land \psi} = \den{\varphi} \cap \den{\psi}$, and $\den{\neg \varphi} = \els \setminus \den{\varphi}$. For $\varphi \in \preds$, we write $\sat(\varphi)$ whenever $\den{\varphi} \neq \emptyset$ and say that $\varphi$ is \emph{satisfiable}. $\alg$ is \emph{decidable} if $\sat{}$ is decidable. For each $a \in \els$, we assume predicates $\atom(a)$ such that $\den{\atom(a)} = \{a\}$.

\begin{example}
The theory of linear integer arithmetic forms an effective BA, where $\els = \mathbb{Z}$ and $\preds$ contains formulas $\varphi(x)$ in the theory with one fixed integer variable. For example, $\mathsf{div_k} \coloneqq
(x \; \mathsf{mod} \; k) = 0$ denotes the set of all
integers divisible by $k$. 	
\end{example}

\vspace{-1ex}
\noindent{\bf Notation.} Given a set $S$, we write $\Pow(S)$ for its powerset. Given a function $f \colon A \to B$, we write $f[a \mapsto b]$ for the function such that $f[a \mapsto b](a) = b$ and $f[a \mapsto b](x) = f(x)$, for $x \neq a$. Analogously, we write $f[S \mapsto b]$, with $S \subseteq A$, to map multiple values to the same $b$. The \emph{pre-image} of $f$ is the function $f^{-1} \colon \Pow(B) \to \Pow(A)$ given by $f^{-1}(S) = \{ a \mid \exists b \in S \colon b = f(a) \}$; for readability, we will write $f^{-1}(x)$ when $S = \{x\}$. Given a relation $\mathcal{R} \subseteq A \times B$, we write $a \mathcal{R} b$ for $(a,b) \in \mathcal{R}$.

\medskip
\noindent\textbf{Model definition.} 
Symbolic register automata have transitions of the form:
\[
	p \tr{\varphi}{E,I,U} q
\]
where $p$ and $q$ are states, $\varphi$ is a predicate from a fixed effective Boolean algebra, and $E,I,U$ are subsets of a fixed finite set of registers $R$. The intended interpretation of the above transition is: an input character $a$ can be read in state $q$ if (i) $a \in \den{\varphi}$, (ii) the content of all the registers in $E$ is \emph{equal} to $a$, and (iii) the content of all the registers in $I$ is \emph{different} from $a$. If the transition succeeds then $a$ is stored into all the registers $U$ and the automaton moves to $q$.
\begin{example}
The transition labels in Figure~\ref{ex:ipregex:sra} have been conveniently simplified to ease intuition. These labels correspond to full SRA labels as follows:
\[
	\varphi / \!\! \rightarrow \!\! r \; \Longrightarrow \; \varphi / \emptyset,\emptyset,\{r\} \qquad \varphi / \! = \!r \; \Longrightarrow \; \varphi / \{r\},\emptyset,\emptyset \qquad \varphi \; \Longrightarrow \; \varphi / \emptyset,\emptyset,\emptyset \enspace .
\]
\end{example}
Given a set of registers $R$, the transitions of an \SRA have labels over the following set: $L _R= \preds \times \{ (E,I,U) \in \Pow(R) \times \Pow(R) \times \Pow(R) \mid E \cap I = \emptyset \}$. The condition $E \cap I = \emptyset$ guarantees that register constraints are always satisfiable.
\begin{definition}[Symbolic Register Automaton] A \emph{symbolic register automaton} (\SRA) is a 6-tuple $(R,Q,q_0,v_0,F,\Delta)$, where
$R$ is a finite set of \emph{registers}, $Q$ is a finite set of \emph{states}, $q_0 \in Q$ is the \emph{initial} state,
$v_0 \colon R \to \els \cup \{\sharp\}$ is the \emph{initial register assignment} (if $v_0(r) = \sharp$, the register $r$ is considered \emph{empty}),
$F \subseteq Q$ is a finite set of \emph{final} states, and $\Delta \subseteq Q \times L_R \times Q$ is the \emph{transition relation}. 
Transitions $(p,(\varphi,\ell),q) \in \Delta$ will be written as $p \tr{\varphi}{\ell} q$.
\end{definition}
An SRA can be seen as a finite description of a (possibly infinite) labeled transition system (LTS), where states have been assigned concrete register values, and transitions read a single symbol from the potentially infinite alphabet. This so-called \emph{configuration LTS} will be used in defining the semantics of SRAs.

\begin{definition}[Configuration LTS]
Given an \SRA $\sra$, the \emph{configuration} LTS $\clts(\sra)$ is defined as follows.
A \emph{configuration} is a pair $(p,v)$ where $p \in Q$ is a state in $\sra$ and a $v \colon R \to \els \cup \{\sharp\}$ is \emph{register assignment};
$(q_0,v_0)$ is called the \emph{initial configuration}; every $(q,v)$ such that $q \in F$ is a \emph{final} configuration. 
The set of transitions between configurations is defined as follows:
\[
\frac{p \tr{\varphi}{E,I,U} q \in \Delta \qquad E \subseteq v^{-1}(a) \quad I \cap v^{-1}(a) = \emptyset
}{
(p,v) \xrightarrow{a} (q,v[U \mapsto a]) \in \clts(\sra) 
}
\]
\end{definition}	
Intuitively, the rule says that a SRA transition from $p$ can be instantiated to one from $(p,v)$ that reads $a$ when the registers containing the value $a$, namely $v^{-1}(a)$, satisfy the constraint described by $E,I$ ($a$ is contained in registers $E$ but not in $I$). If the constraint is satisfied, all registers in $U$ are assigned $a$.

A \emph{run} of the \SRA $\sra$ is a sequence of transitions in $\clts(\sra)$ starting from the initial configuration. A configuration is \emph{reachable} whenever there is a run ending up in that configuration. The \emph{language} of an SRA $\sra$ is defined as 
\[	
	\lang(\sra) := \{ a_1 \dots a_n \in \els^n \mid \exists (q_0,v_0) \xrightarrow{a_1} \dots \xrightarrow{a_n} (q_n,v_n) \in \clts(\sra), q_n \in F \}
\]
An SRA $\sra$ is \emph{deterministic} if its configuration LTS is; namely, for every word $w \in \els^\star$ there is at most one run in $\clts(\sra)$ spelling $w$. Determinism is important for some application contexts, e.g., for runtime monitoring. Since \SRAs subsume \RAs, nondeterministic \SRAs are strictly more expressive than deterministic ones, and language equivalence  is undecidable for nondeterministic \SRAs~\cite{Tzevelekos11}.

We now introduce the notions of \emph{simulation} and \emph{bisimulation} for \SRAs, which capture  whether one \SRA behaves ``at least as'' or ``exactly as'' another one.
\begin{definition}[(Bi)simulation for \SRAs]
A simulation $\ltssim$ on SRAs $\sra_1$ and $\sra_2$ is a binary relation $\ltssim$ on configurations such that $(p_1,v_1)\ltssim (p_2,v_2)$ implies:
\begin{itemize}[itemsep=1pt,topsep=3pt]
	\item if $p_1 \in F_1$ then $p_2 \in F_2$;
	\item for each transition $(p_1,v_1) \xrightarrow{a} (q_1,w_1)$ in $\clts(\sra_1)$, there exists a transition $(p_2,v_2) \xrightarrow{a} (q_2,w_2)$ in $\clts(\sra_2)$ such that $(q_1,w_1) \ltssim (q_2,w_2)$. 
\end{itemize}
A simulation $\ltssim$ is a \emph{bisimulation} if $\ltssim^{-1}$ is a also a simulation. We write $\sra_1 \prec \sra_2$ (resp.\ $\sra_1 \sim \sra_2$)  whenever there is a simulation (resp.\ bisimulation) $\ltssim$ such that $(q_{01},v_{01}) \ltssim (q_{02},v_{02})$, where $(q_{0i},v_{0i})$ is the initial configuration of $\sra_i$, for $i=1,2$.
\end{definition}
We say that an \SRA is \emph{complete} whenever for every configuration $(p,v)$ and  $a \in \els$ there is a transition $(p,v) \xrightarrow{a} (q,w)$ in $\clts(\sra)$. The following results connect similarity and language inclusion.
\begin{proposition}
If $\sra_1 \prec \sra_2$ then $\lang(\sra_1) \subseteq \lang(\sra_2)$. If $\sra_1$ and $\sra_2$ are deterministic and complete, then the other direction also holds.
\label{prop:det-sim-incl}
\end{proposition}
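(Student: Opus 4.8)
The plan is to prove the two implications separately, since they rely on different hypotheses: the forward direction needs only the bare definitions, while the converse crucially uses both determinism and completeness.

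For the forward implication, assume a simulation $\ltssim$ witnessing $(q_{01},v_{01}) \ltssim (q_{02},v_{02})$, and take any $w = a_1 \cdots a_n \in \lang(\sra_1)$, realised by an accepting run $(q_{01},v_{01}) \xrightarrow{a_1} (q_1,v_1) \xrightarrow{a_2} \cdots \xrightarrow{a_n} (q_n,v_n)$ in $\clts(\sra_1)$ with $q_n \in F_1$. I would argue by induction on $n$, repeatedly applying the transition clause of simulation: from $(q_i,v_i) \ltssim (p_i,w_i)$ and the step $(q_i,v_i) \xrightarrow{a_{i+1}} (q_{i+1},v_{i+1})$ one obtains a matching step $(p_i,w_i) \xrightarrow{a_{i+1}} (p_{i+1},w_{i+1})$ in $\clts(\sra_2)$ with $(q_{i+1},v_{i+1}) \ltssim (p_{i+1},w_{i+1})$. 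This assembles a run of $\sra_2$ on the very same word $w$, ending in a configuration $(p_n,w_n)$ with $(q_n,v_n) \ltssim (p_n,w_n)$. The finality clause of simulation, together with $q_n \in F_1$, then forces $p_n \in F_2$, so $w \in \lang(\sra_2)$.

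For the converse, assume $\sra_1$ and $\sra_2$ are deterministic and complete and that $\lang(\sra_1) \subseteq \lang(\sra_2)$. The key step is choosing the right candidate relation, namely relating two configurations exactly when a common input word leads to both:
\[
\ltssim = \{ ((p_1,v_1),(p_2,v_2)) \mid \exists w \in \els^\star : (q_{01},v_{01}) \xrightarrow{w} (p_1,v_1) \text{ in } \clts(\sra_1) \text{ and } (q_{02},v_{02}) \xrightarrow{w} (p_2,v_2) \text{ in } \clts(\sra_2) \}.
\]
Choosing $w = \varepsilon$ shows $(q_{01},v_{01}) \ltssim (q_{02},v_{02})$, so it remains to verify the two simulation clauses. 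For the transition clause, suppose $(p_1,v_1)\ltssim(p_2,v_2)$ is witnessed by $w$ and $(p_1,v_1)\xrightarrow{a}(q_1,w_1)$ in $\clts(\sra_1)$; completeness of $\sra_2$ supplies some transition $(p_2,v_2)\xrightarrow{a}(q_2,w_2)$, and appending $a$ to $w$ shows $(q_1,w_1)\ltssim(q_2,w_2)$. For the finality clause, if $(p_1,v_1)\ltssim(p_2,v_2)$ via $w$ and $p_1\in F_1$, then the witnessing run makes $w \in \lang(\sra_1) \subseteq \lang(\sra_2)$; since $\sra_2$ is deterministic the run reaching $(p_2,v_2)$ is the \emph{unique} run of $\sra_2$ on $w$, and the existence of an accepting run on $w$ therefore forces that very run to be accepting, i.e.\ $p_2 \in F_2$.

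I expect the converse to be the delicate part, and the two obstacles there are exactly the two hypotheses. Completeness is precisely what guarantees that the required matching transition on the same symbol $a$ exists out of $(p_2,v_2)$; without it the transition clause could fail even when inclusion holds. Determinism is what lets me transport acceptance from the language inclusion back to the \emph{specific} related configuration: for a nondeterministic $\sra_2$, membership $w \in \lang(\sra_2)$ would only yield \emph{some} accepting run, not necessarily the one terminating in $(p_2,v_2)$, and the finality clause would break. (Note that the forward direction, by contrast, needs neither property.)
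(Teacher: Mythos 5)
Your proof is correct. Note that the paper itself states Proposition~\ref{prop:det-sim-incl} without proof (its appendix proves the other results but omits this one), so there is no official argument to compare against; your proof supplies the standard one. Both halves are sound: the forward direction is the usual induction unfolding the simulation clause along an accepting run, and for the converse your candidate relation (configurations reachable by a common word) is exactly the right choice, with completeness of $\sra_2$ discharging the transition clause and determinism of $\sra_2$ transporting acceptance of $w \in \lang(\sra_2)$ back to the unique run ending in the related configuration, which discharges the finality clause.
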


It is worth noting that given a deterministic \SRA we can define its \emph{completion} by adding transitions so that every value $a \in \els$ can be read from any state.

\begin{remark}
\RAs and \SFAs can be encoded as \SRAs on the same state-space:
\begin{itemize}[itemsep=1pt,topsep=3pt]
	\item 
	An \RA is encoded as an \SRA with all transition guards $\top$;
	\item 
 an \SFA can be encoded as an \SRA with $R = \emptyset$, with each SFA transition $p \xrightarrow{\varphi} q$ encoded as $p \tr{\varphi}{\emptyset,\emptyset,\emptyset} q$. Note that the absence of registers implies that the $\clts{}$ always has finitely many configurations.
\end{itemize}	
\SRAs are \emph{strictly more expressive} than both \RAs and \SFAs. For instance, the language $	\{ n_0 n_1 \dots n_k \mid n_0 = n_k, \mathsf{even}(n_i), n_i \in \mathbb{Z}, i = 1,\dots,k \}$ of finite sequences of even integers where the first and last one coincide, can be recognized by an \SRA, but not by an \RA or by an \SFA.
\label{rem:sfa-ra-sra}
\end{remark}

\medskip
\noindent\textbf{Boolean closure properties.} 
SRAs are closed under intersection and union. Intersection is given by a standard product construction whereas union is obtained by adding a new initial state that mimics the initial states of both automata.
\begin{proposition}[Closure under intersection and union]
Given  \SRAs $\sra_1$ and $\sra_2$, there are \SRAs $\sra_1 \cap \sra_2$ and $\sra_1 \cup \sra_2$ such that $\lang(\sra_1 \cap \sra_2) = \lang(\sra_1) \cap \lang(\sra_2)$ and 
$\lang(\sra_1 \cup \sra_2) = \lang(\sra_1) \cup \lang(\sra_2)$.
\label{prop:union-int}
\end{proposition}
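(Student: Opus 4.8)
The plan is to follow the two constructions sketched in the statement, making the register bookkeeping precise. Throughout I assume, without loss of generality, that the register sets $R_1$ and $R_2$ of $\sra_1$ and $\sra_2$ are disjoint (otherwise rename them), and I write $v = \fres{v}{R_1} \uplus \fres{v}{R_2}$ for the evident decomposition of any assignment on $R_1 \uplus R_2$.

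For intersection, I would take the product automaton $\sra_1 \cap \sra_2$ with registers $R_1 \uplus R_2$, states $Q_1 \times Q_2$, initial state $(q_{01},q_{02})$, initial assignment $v_{01} \uplus v_{02}$, and final states $F_1 \times F_2$. Each pair of transitions $p_1 \tr{\varphi_1}{E_1,I_1,U_1} q_1$ in $\Delta_1$ and $p_2 \tr{\varphi_2}{E_2,I_2,U_2} q_2$ in $\Delta_2$ yields a synchronised transition $(p_1,p_2) \tr{\varphi_1 \land \varphi_2}{E_1 \cup E_2, I_1 \cup I_2, U_1 \cup U_2} (q_1,q_2)$. The first thing to check is that this is a legal \SRA label: since $E_i \cap I_i = \emptyset$ by assumption and $R_1,R_2$ are disjoint, we get $(E_1 \cup E_2) \cap (I_1 \cup I_2) = \emptyset$, so the constraints remain satisfiable.

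The heart of the argument is a single-step lemma relating $\clts(\sra_1 \cap \sra_2)$ to the configuration LTSs of the factors. Because the registers are disjoint, for any $a \in \els$ the preimage splits as $v^{-1}(a) = (\fres{v}{R_1})^{-1}(a) \cup (\fres{v}{R_2})^{-1}(a)$. Unfolding the semantic rule for the synchronised transition, the guard condition $a \in \den{\varphi_1 \land \varphi_2}$ is equivalent to $a \in \den{\varphi_1}$ and $a \in \den{\varphi_2}$, while the register constraints $E_1 \cup E_2 \subseteq v^{-1}(a)$ and $(I_1 \cup I_2) \cap v^{-1}(a) = \emptyset$ decompose, using disjointness, into the separate constraints of the two factor transitions; likewise the update $v[U_1 \cup U_2 \mapsto a]$ restricts on $R_i$ to $(\fres{v}{R_i})[U_i \mapsto a]$. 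Hence $((p_1,p_2),v) \xrightarrow{a} ((q_1,q_2),w)$ holds in the product iff $(p_1,\fres{v}{R_1}) \xrightarrow{a} (q_1,\fres{w}{R_1})$ in $\clts(\sra_1)$ and $(p_2,\fres{v}{R_2}) \xrightarrow{a} (q_2,\fres{w}{R_2})$ in $\clts(\sra_2)$. A routine induction on the length of $w = a_1 \dots a_n$ then shows that $w$ labels a run to a final configuration of the product iff it labels such runs in both factors; since acceptance requires $(q_1,q_2) \in F_1 \times F_2$, this gives $\lang(\sra_1 \cap \sra_2) = \lang(\sra_1) \cap \lang(\sra_2)$. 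For union I would again work over $R_1 \uplus R_2$ with initial assignment $v_{01} \uplus v_{02}$, keep all states and transitions of both automata, and add a fresh initial state $q_0$ whose outgoing transitions are exactly copies of those leaving $q_{01}$ and those leaving $q_{02}$ (each copy keeping its original guard and register sets, which touch only one of $R_1,R_2$). The final set is $F_1 \cup F_2$, together with $q_0$ precisely when $q_{01} \in F_1$ or $q_{02} \in F_2$, so as to handle the empty word. One verifies that once a run leaves $q_0$ via a transition copied from $\sra_i$ it enters $Q_i$ and thereafter uses only $\sra_i$-transitions, which read and write solely the registers $R_i$, so the registers $R_{3-i}$ stay inert at their initial values and never affect the run; thus accepting runs of the union correspond bijectively to accepting runs of exactly one factor, yielding $\lang(\sra_1 \cup \sra_2) = \lang(\sra_1) \cup \lang(\sra_2)$.

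The main obstacle is not conceptual but lies in the register bookkeeping: one must check that the disjoint-union decomposition of assignments and of preimages is respected by the semantic rule at every step, and that the \emph{spurious} registers carried along (the $R_2$-part during an $\sra_1$-run, and symmetrically) are genuinely never inspected, so that the factor runs are reconstructed faithfully. Once the single-step correspondence lemma is established, both language identities follow by straightforward induction on run length.
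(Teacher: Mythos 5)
Your constructions coincide exactly with the paper's own proof: the product automaton over $R_1 \uplus R_2$ with conjoined guards $\varphi_1 \land \varphi_2$ and componentwise-unioned register constraints for intersection, and the fresh-initial-state construction (copying the outgoing transitions of $q_{01}$ and $q_{02}$, and making the new state final iff some $q_{0i} \in F_i$) for union. The only difference is that you additionally spell out the single-step correspondence lemma and the induction on run length, a correctness verification the paper leaves implicit, and your version of it is sound.
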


\SRAs in general are not closed under complementation, because \RAs are not. However, we still have closure under complementation for a subclass of \SRAs. 
\begin{proposition}
Let $\sra$ be a complete and deterministic SRA, and let $\overline{\sra}$ be the SRA defined as $\sra$, except that its final states are $Q \setminus F$. Then $\lang(\overline{\sra}) = \els^{\star} \setminus \lang(\sra)$.
\end{proposition}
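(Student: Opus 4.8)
The plan is to exploit the fact that $\sra$ and $\overline{\sra}$ induce the \emph{same} configuration LTS, so that the only difference between the two automata is which configurations count as accepting. First I would observe that, by construction, $\overline{\sra}$ shares with $\sra$ the registers $R$, the states $Q$, the initial state $q_0$, the initial assignment $v_0$, and—crucially—the transition relation $\Delta$; the two differ solely in their final-state sets, $F$ versus $Q \setminus F$. Since the inference rule defining $\clts$ refers only to $\Delta$ and never to the final states, it follows that $\clts(\sra)$ and $\clts(\overline{\sra})$ have identical configurations and identical transitions; they part ways only on which configurations are declared final, namely $(q,v)$ with $q \in F$ as opposed to $q \in Q \setminus F$.

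The heart of the argument is to show that every word admits exactly one run from the initial configuration. Completeness guarantees that from any configuration and on any symbol there is an outgoing transition, so by a straightforward induction on word length every $w = a_1 \cdots a_n$ has at least one run $(q_0,v_0) \xrightarrow{a_1} \cdots \xrightarrow{a_n} (q_n,v_n)$. Determinism guarantees that this run is also the only one spelling $w$. Hence for each $w$ there is a unique reachable configuration $(q_n,v_n)$ after reading $w$, and—because the two configuration LTSs coincide—this is simultaneously the unique run of $w$ in $\clts(\sra)$ and in $\clts(\overline{\sra})$.

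With uniqueness in hand the conclusion is immediate: $w \in \lang(\sra)$ iff its unique run ends in a configuration with $q_n \in F$, whereas $w \in \lang(\overline{\sra})$ iff the same run ends with $q_n \in Q \setminus F$, i.e.\ $q_n \notin F$. These two conditions are exact negations of one another, so $w \in \lang(\overline{\sra}) \iff w \notin \lang(\sra)$, which is precisely $\lang(\overline{\sra}) = \els^\star \setminus \lang(\sra)$.

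I would flag that the single genuinely load-bearing step is this existence-and-uniqueness of runs, and that both hypotheses are indispensable there. Without completeness a word might have \emph{no} run, so it would belong to neither language and the inclusion $\els^\star \setminus \lang(\sra) \subseteq \lang(\overline{\sra})$ would fail; without determinism a word might have one run into $F$ and another into $Q \setminus F$, so it could lie in \emph{both} languages, breaking the fact that they should partition $\els^\star$. The completion remark preceding the statement ensures these hypotheses can always be arranged for a deterministic \SRA, so no further work is needed beyond invoking them.
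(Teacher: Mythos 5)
Your proof is correct. The paper states this proposition without an explicit proof, treating it as routine, and your argument---that $\sra$ and $\overline{\sra}$ induce the identical configuration LTS, that completeness gives existence and determinism gives uniqueness of the run spelling each word, and that acceptance then flips exactly with the final-state set---is precisely the standard argument the paper implicitly relies on, including the correct observation that both hypotheses are indispensable.
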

\section{Decidability properties}
\label{sec:decision}

In this section we will provide algorithms for checking determinism and emptiness for an \SRA,
and (bi)similarity of two \SRAs.
Our algorithms leverage \emph{symbolic} techniques that use the finite syntax of \SRAs to 
indirectly operate over the underlying configuration LTS, which can be infinite. 

\medskip
\noindent\textbf{Single-valued variant.}
To study decidability, it is convenient to restrict register assignments to \emph{injective} ones on non-empty registers, that is functions $v \colon R \to \els \cup \{\sharp\}$ such that $v(r) = v(s)$ and $v(r) \neq \sharp$ implies $r = s$. This is also the approach taken for \RAs in the seminal papers~\cite{Tzevelekos11,KaminskiF94}. Both for \RAs and \SRAs, this restriction does not affect expressivity.  We say that an \SRA is \emph{single-valued} if its initial assignment $v_0$ is injective on non-empty registers. For single-valued SRAs, we only allow two kinds of transitions:
\begin{description}[itemsep=0pt,topsep=1pt]
	\item[Read transition:] $p \tr{\varphi}{\checkreg{r}} q$ triggers when $a \in \den{\varphi}$ and $a$ is already stored in $r$. 
	\item[Fresh transition:] $p \tr{\varphi}{\fresh{r}} q$ triggers when the input $a \in \den{\varphi}$ and $a$ is \emph{fresh}, i.e., is not stored in any register. After the transition, $a$ is stored into $r$.
\end{description}
\SRAs and their single-valued variants have the same expressive power. Translating single-valued \SRAs to ordinary ones is straightforward:
\[
	p \tr{\varphi}{\checkreg{r}} q \; \Longrightarrow p \;\tr{\varphi}{\{r\},\emptyset,\emptyset} q 
	\qquad\quad
	p \tr{\varphi}{\fresh{r}} q \; \Longrightarrow p \;\tr{\varphi}{\emptyset,R,\{r\}} q 	
\]
The opposite translation requires a state-space blow up, because we need to encode register equalities in the states. 
\begin{theorem}
Given an \SRA $\sra$ with $n$ states and $r$ registers, there is a single-valued \SRA $\sra'$ with $\bigO(nr^r)$ states and $r+1$ registers such that $\sra \sim \sra'$. Moreover, the translation preservers determinism.
\label{thm:sra-to-sv}
\end{theorem}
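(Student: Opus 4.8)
The plan is to record inside each state of $\sra'$ the \emph{equality pattern} of the registers of $\sra$---which registers currently hold a common value and which are empty---so that the concrete values can be stored injectively in the physical registers of $\sra'$. I would take the states of $\sra'$ to be pairs $(q,\lambda)$ with $q \in Q$ and $\lambda \colon R \to R \cup \{\sharp\}$ a normalised representative map, where $\lambda(r) = \lambda(s) \neq \sharp$ records that $r,s$ hold the same non-empty value and $\lambda(r) = \sharp$ records that $r$ is empty. There are at most $(r+1)^r = \bigO(r^r)$ such maps, giving $\bigO(nr^r)$ states; the registers of $\sra'$ are those of $R$ together with one extra buffer register $r_{r+1}$; and the initial state is $(q_0,\lambda_0)$, with $\lambda_0$ the pattern induced by $v_0$ (which need not be injective).

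Next I would fix the candidate bisimulation $\srasim$, relating a configuration $(p,v)$ of $\sra$ to a configuration $((p,\lambda),v')$ of $\sra'$ precisely when $\lambda$ encodes the equality pattern of $v$ (so $v(r)=v(s)\neq\sharp$ iff $\lambda(r)=\lambda(s)\neq\sharp$, and $v(r)=\sharp$ iff $\lambda(r)=\sharp$) and $v'$ is injective on non-empty registers with $v(r)=v'(\lambda(r))$ whenever $v(r) \neq \sharp$. Intuitively $v'$ keeps exactly one physical copy of each distinct value of $v$, while $\lambda$ is the lookup table from logical to physical registers; the initial configurations are then $\srasim$-related by construction.

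The core step is to define $\Delta'$ so that $\srasim$ is a bisimulation, by refining each $p \tr{\varphi}{E,I,U} q$ of $\sra$, under each $\lambda$ compatible with $E,I$, into single-valued read/fresh transitions. Since $E \cap I = \emptyset$, reading $a$ at $(p,v)$ amounts to $E \subseteq v^{-1}(a)$ and $I \cap v^{-1}(a) = \emptyset$, and under the encoding $v^{-1}(a)$ is either empty ($a$ fresh) or the $\lambda$-class of a unique physical register $s$. In the first case, enabled only when $E=\emptyset$, I emit a fresh transition $(p,\lambda) \tr{\varphi}{\fresh{t}} (q,\lambda')$; in the second, enabled when $E$ and $I$ are respectively inside and disjoint from the class of $s$, I emit a read transition $(p,\lambda) \tr{\varphi}{\checkreg{s}} (q,\lambda')$. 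In both cases $\lambda'$ is obtained from $\lambda$ by moving every register of $U$ into the class of the physical register now holding $a$, and by freeing (and later reusing) any physical slot whose whole class is consumed by $U$. It then remains to verify, in both directions, that the successor configurations stay $\srasim$-related, which is a case analysis on the position of $a$ relative to the current register contents.

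The main obstacle is the register bookkeeping during updates, and this is exactly where the extra register pays off: when $v$ already holds $r$ distinct values and $a$ is fresh, storing $a$---or reading-and-discarding it, which a single-valued fresh transition is forced to store---would clash with a live slot, so I first park $a$ in the buffer $r_{r+1}$ and only then recompute $\lambda'$ and free the slots vacated by $U$, keeping $v'$ injective on non-empty registers throughout. I would argue $r+1$ slots always suffice, since at most $r$ distinct values are logically live and at most one transient copy is ever required. Finally, determinism is preserved because the map from reachable configurations of $\sra$ to those of $\sra'$ is functional, and each original transition is refined into read/fresh transitions whose guards are mutually exclusive on a fixed $\lambda$; hence two transitions of $\sra'$ are simultaneously enabled on some input only if the underlying transitions of $\sra$ already were.
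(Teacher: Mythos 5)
Your construction is, at its skeleton, the same as the paper's: your representative map $\lambda$ is the paper's $f \colon R \to R$, your invariant $v(r) = v'(\lambda(r))$ is the paper's $v = w \circ f$, and your buffer $r_{r+1}$ is the paper's extra register $\hat{r}$. However, there is a genuine gap in how you treat \emph{stale} physical values, and it breaks the bisimulation. In this model a register can never be erased---it can only be overwritten by the current input symbol---so your phrase ``$v'$ keeps exactly one physical copy of each distinct value of $v$'' cannot be maintained as ``all other slots are empty'': whenever a whole $\lambda$-class is re-pointed away from a physical register $s$ (or a discarded symbol is parked in the buffer), $s$ keeps its old value in $v'$ even though $\lambda'$ no longer refers to it. Your $\srasim$ correctly tolerates such stale contents, but your transition relation does not: for a logically fresh input ($v^{-1}(a) = \emptyset$, $E = \emptyset$) you emit \emph{only} a fresh transition $(p,\lambda) \tr{\varphi}{\fresh{t}} (q,\lambda')$, and you emit read transitions $\checkreg{s}$ only for $s \in \img(\lambda)$. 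In a single-valued \SRA a fresh transition fires only on symbols not stored in \emph{any} physical register, so if $\sra$ reads $a \notin \img(v)$ that happens to equal a stale value of $v'$ (e.g.\ the symbol parked in $r_{r+1}$ at an earlier $U = \emptyset$ step, or a value abandoned by an overwritten class), then in $\clts(\sra')$ the fresh transition is blocked and no read transition applies: $\sra'$ is stuck on $a$, and the direction $\sra \prec \sra'$ fails. The paper closes exactly this hole in two places: its rule (\textsc{reg}) is instantiated also at registers with \emph{empty} preimage $f^{-1}(r) = \emptyset$ (the side condition $E \subseteq f^{-1}(r)$, $I \cap f^{-1}(r) = \emptyset$ then just requires $E = \emptyset$), which yields read transitions on stale registers; and when desugaring the read-fresh-but-discard label $\bullet$ it adds a companion $p \tr{\varphi}{\checkreg{\hat{r}}} q$ for \emph{every} fresh transition $p \tr{\varphi}{\fresh{r}} q$, covering collisions with the parked value.

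Two secondary points. First, your buffer bookkeeping for the $U \neq \emptyset$ case is both unnecessary and ill-typed: unnecessary because when $U \neq \emptyset$ one can always store the fresh symbol directly into a physical register $t$ with $\lambda^{-1}(t) \subseteq U$ (if every physical register has a nonempty class these classes partition $R$ and $t = \lambda(u)$ for $u \in U$ works; otherwise pick $t$ with empty class), which is precisely the paper's rule (\textsc{fresh}); ill-typed because you declared $\lambda \colon R \to R \cup \{\sharp\}$, so $\lambda'$ cannot point the registers of $U$ at $r_{r+1}$---the buffer is genuinely needed only when $U = \emptyset$. Second, determinism preservation needs the target register of the emitted fresh transition to be chosen canonically as a function of $(\lambda, U)$; otherwise one original transition spawns several simultaneously enabled fresh transitions storing into different registers. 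You assume this implicitly (``the map \ldots is functional''), while the paper has to state it as an explicit side condition on rule (\textsc{fresh}) and uses it in the determinism argument.
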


\medskip
\noindent {\bf Normalization.}  
While our techniques are inspired by analogous ones for non-symbolic \RAs, \SRAs present an additional challenge: they can have arbitrary predicates on transitions. Hence, the values that each transition can read, and thus which configurations it can reach, depend on the history of past transitions and their predicates. 
This problem emerges when checking reachability and similarity, because a transition may be \emph{disabled} by particular register values, and so lead to unsound conclusions, a problem that does not exist in register automata.

\begin{example} 
Consider the \SRA below, defined over the BA of integers.

\centerline{    
$\sra = 
\begin{aligned}[c]
 \includegraphics[scale=.25]{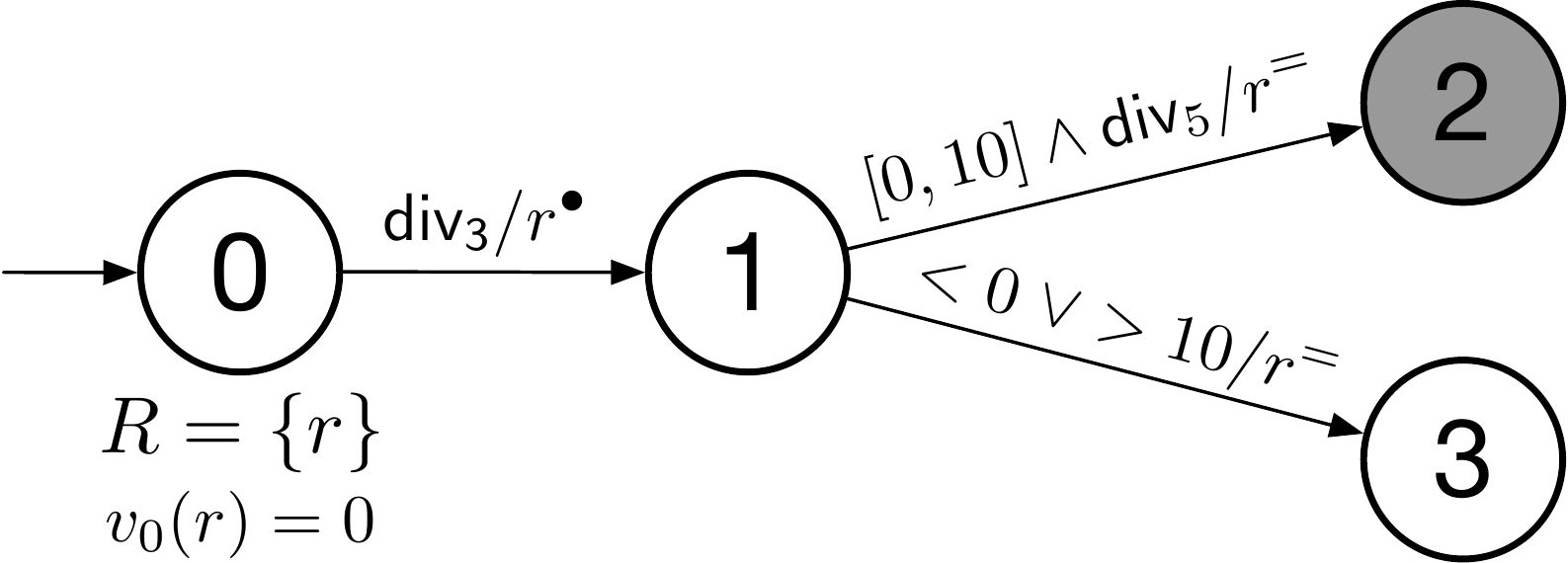}
\end{aligned}
$
}
\noindent
All predicates on transitions are satisfiable, yet $\lang(\sra) = \emptyset$. To go from 0 to 1, $\sra$ must read a value $n$ such that $\mathsf{div}_3(n)$ and $n \neq 0$ and then $n$ is stored into $r$. The transition from 1 to 2 can only happen if the content of $r$ also satisfies $\mathsf{div}_{5}(n)$ and $n \in [0,10]$. 
However, there is no $n$ satisfying $\mathsf{div}_3(n) \land n \neq 0 \land \mathsf{div_5}(n) \land n \in [0,10]$, hence the transition from 1 to 2 never happens.
\label{ex:sra-norm}
\end{example}
To handle the complexity caused by predicates, we introduce a way of \emph{normalizing} an \SRA to an equivalent one that 
\emph{stores additional information about input predicates}. 
We first introduce some notation and terminology.

A register abstraction $\theta$ for $\sra$, used to ``keep track'' of the domain of registers, is a family of predicates indexed by the registers $R$ of $\sra$. Given a register assignment $v$, we write $v \models \theta$ whenever $v(r) \in \den{\theta_r}$ for $v(r) \neq \sharp$, and $\theta_r = \bot$ otherwise.
Hereafter we shall only consider ``meaningful'' register abstractions, for which there is at least one assignment $v$ such that $v \models \theta$.

With the contextual information about register domains given by $\theta$, we say that a transition $p \tr{\varphi}{\ell} q \in \Delta$ is \emph{enabled by} $\theta$ whenever it has at least an instance $(p,v) \xrightarrow{a} (q,w)$ in $\clts(\sra)$, for all $v \models \theta$. Enabled transitions are important when reasoning about reachability and similarity.

Checking whether a transition has at least one realizable instance in the $\clts$ is difficult in practice, especially when $\ell = \fresh{r}$, because it amounts to checking whether $\den{\varphi} \setminus \img(v) \neq \emptyset$, for all injective $v \models \theta$. 

To make the check for enabledness practical we will use minterms. For a set of predicates $\Phi$, a \emph{minterm} is a minimal satisfiable Boolean combination of all predicates that occur in $\Phi$. Minterms are the analogue of atoms in a complete atomic Boolean algebra. 
E.g. the set of predicates $\Phi=\{x>2,x<5\}$ over the theory of linear
integer arithmetic has minterms
$\mint(\Phi)=\{x>2\wedge x<5,\ \neg x>2\wedge x<5,\
x>2\wedge \neg x<5\}$. Given $\psi \in \mint(\Phi)$ and $\varphi \in \Phi$, we will write $\varphi \inm \psi$ whenever $\varphi$ appears non-negated in $\psi$, for instance $(x > 2) \inm (x>2\wedge \neg x<5)$. A crucial property of minterms is that they do not overlap, i.e., $\sat(\psi_1 \land \psi_2)$ if and only if $\psi_1 = \psi_2$, for $\psi_1$ and $\psi_2$ minterms.

\begin{lemma}[Enabledness] 
Let $\theta$ be a register abstraction such that $\theta_r$ is a minterm, for all $r \in R$. If $\varphi$ is a minterm, then $p \tr{\varphi}{\ell} q$ is enabled by $\theta$ iff:

(1) if $\ell = \checkreg{r}$, then $\varphi = \theta_r$; \hspace{3ex} (2) 
 if $\ell = \fresh{r}$, then $\card{\den{\varphi}} > \regab(\theta,\varphi)$,

\noindent where $\regab(\theta,\varphi) = \card{\{ r \in R \mid \theta_r = \varphi \}}$ is the \# of registers with values from $\den{\varphi}$.
 \label{lem:minterm-enabled}
\end{lemma}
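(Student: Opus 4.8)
The plan is to prove both directions of the biconditional for each label type, using the crucial non-overlap property of minterms and the contextual domain information carried by $\theta$. The key observation is that when $\theta_r$ is a minterm for every $r$, and $\varphi$ is a minterm, then by the non-overlap property either $\varphi = \theta_r$ or $\den{\varphi} \cap \den{\theta_r} = \emptyset$; there is no partial overlap to reason about. This reduces every satisfiability question to a simple equality-or-disjointness dichotomy.

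First I would handle the read transition $\ell = \checkreg{r}$. By definition, enabledness means that for \emph{all} $v \models \theta$ there is an instance $(p,v)\xrightarrow{a}(q,w)$ in $\clts(\sra)$, which for a read transition requires reading the value $a = v(r)$ with $a \in \den{\varphi}$. Since $v \models \theta$ forces $v(r) \in \den{\theta_r}$, such an $a$ exists iff $\den{\theta_r} \subseteq \den{\varphi}$ (so that every admissible register content satisfies $\varphi$). Because both $\theta_r$ and $\varphi$ are minterms, the non-overlap dichotomy makes $\den{\theta_r} \subseteq \den{\varphi}$ equivalent to $\den{\theta_r} \cap \den{\varphi} \neq \emptyset$, equivalent to $\theta_r = \varphi$. (One must note $\theta_r \neq \bot$ since $\theta$ is meaningful and $r$ is being read, so $\den{\theta_r}$ is nonempty.) This gives condition (1).

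Next I would treat the fresh transition $\ell = \fresh{r}$, which is the more delicate case. Here enabledness requires, for every injective $v \models \theta$, the existence of a \emph{fresh} value $a \in \den{\varphi} \setminus \img(v)$. The number of registers whose content is forced by $\theta$ to lie in $\den{\varphi}$ is exactly $\regab(\theta,\varphi) = \card{\{r \mid \theta_r = \varphi\}}$, again using the dichotomy: a register $s$ can hold a value in $\den{\varphi}$ only if $\theta_s = \varphi$, since otherwise $\den{\theta_s}$ is disjoint from $\den{\varphi}$. Thus among the values of any injective $v \models \theta$, precisely $\regab(\theta,\varphi)$ of them occupy $\den{\varphi}$, and they are distinct by injectivity. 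A fresh $a \in \den{\varphi}$ avoiding all of $\img(v)$ exists for every such $v$ iff $\den{\varphi}$ contains strictly more than $\regab(\theta,\varphi)$ elements, i.e.\ $\card{\den{\varphi}} > \regab(\theta,\varphi)$. For the forward direction one instantiates the worst case where $v$ fills $\den{\varphi}$ maximally; for the backward direction, a pigeonhole count shows a fresh value always remains.

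The main obstacle I anticipate is the fresh case, specifically making the counting argument fully rigorous for \emph{all} injective $v \models \theta$ rather than a single witness. The subtlety is that enabledness is a universally quantified condition, so I must show that no admissible injective assignment can block the transition; the cardinality inequality $\card{\den{\varphi}} > \regab(\theta,\varphi)$ must be shown both necessary (exhibit a bad $v$ when it fails, filling $\den{\varphi}$ exactly with register values—which requires $\card{\den{\varphi}} = \regab(\theta,\varphi)$ to be achievable by an injective assignment consistent with $\theta$ on the other registers) and sufficient (a clean pigeonhole argument). I would carefully confirm that an injective $v \models \theta$ realizing the extremal occupancy of $\den{\varphi}$ actually exists, which relies on $\theta$ being meaningful and on the minterm domains $\den{\theta_s}$ for $s$ with $\theta_s \neq \varphi$ being disjoint from $\den{\varphi}$ so that the remaining registers can be assigned without interference.
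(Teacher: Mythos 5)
Your proposal is correct and takes essentially the same approach as the paper's proof: the read case is settled by the non-overlap property of minterms, and the fresh case by the counting argument that any injective $v \models \theta$ occupies exactly $\regab(\theta,\varphi)$ distinct elements of $\den{\varphi}$ (by injectivity plus the equality-or-disjointness dichotomy), so a fresh value exists for all such $v$ iff $\card{\den{\varphi}} > \regab(\theta,\varphi)$. The ``worst case'' construction you worry about at the end is not actually needed: since \emph{every} injective $v \models \theta$ realizes the extremal occupancy, meaningfulness of $\theta$ alone supplies the blocking assignment, which is precisely how the paper argues the forward direction.
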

Intuitively, (1) says that if
 the transition reads a symbol stored in $r$ satisfying $\varphi$, the symbol must also satisfy $\theta_r$, the range of $r$. Because $\varphi$ and $\theta_r$ are minterms, this only happens when $\varphi = \theta_r$. (2) says that the enabling condition $\den{\varphi} \setminus \img(v) \neq \emptyset$, for all injective $v \models \theta$, holds if and only if there are fewer registers storing values from $\varphi$ than the cardinality of $\varphi$. That implies we can always find a fresh element in $\den{\varphi}$ to enable the transition. Registers holding values from $\varphi$ are exactly those $r \in R$ such that $\theta_r = \varphi$. 
Both conditions can be effectively checked: the first one is a simple predicate-equivalence check, while the second one amounts to checking whether $\varphi$ holds for at least a certain number $k$ of distinct elements. 
This can be achieved by checking satisfiability of $\varphi \land \neg \atom(a_1) \land \dots \land \neg \atom(a_{k-1})$, for $a_1,\dots,a_{k-1}$ distinct elements of $\den{\varphi}$.
\begin{remark}
Using single-valued \SRAs to check enabledness might seem like a restriction. However, if one would start from a generic \SRA, the process to check enabledness would contain an extra step: for each state $p$, we would have to keep track of all possible equations among registers. In fact, register equalities determine whether (i) register constraints of an outgoing transition are satisfiable; (ii) how many elements of the guard we need for the transition to happen, analogously to condition 2 of Lemma~\ref{lem:minterm-enabled}. Generating such equations is the key idea behind Theorem~\ref{thm:sra-to-sv}, and corresponds precisely to turning the \SRA into a single-valued one. 
\end{remark}
Given any \SRA, we can use the notion of register abstraction to build an equivalent \emph{normalized} SRA, where 
\rone states keep track of how the domains of registers change along transitions,
\rtwo transitions are obtained by breaking the one of the original \SRA into minterms and discarding the ones that are disabled according to Lemma~\ref{lem:minterm-enabled}.
In the following we write $\mint(\sra)$ for the minterms for the set of predicates $\{ \varphi \mid p \tr{\varphi}{\ell} q \in \Delta \} \cup \{ \atom(v_0(r)) \mid v_0(r) \in \els, r \in R \}$. 
Observe that an atomic predicate always has an equivalent minterm, hence we will use atomic predicates to define the initial register abstraction.
\begin{definition}[Normalized \SRA]
Given an \SRA $\sra$, its normalization $\nrm{\sra}$ is the \SRA $(R,\nrm{Q},\nrm{q_0},v_0,\nrm{F},\nrm{\Delta})$ where:
\begin{itemize}
	\item $\nrm{Q} = \{ \theta \mid \text{$\theta$ is a register abstraction over $\mint(\sra) \cup \{\bot\}$} \} \times Q$; we will write $\ns{q}{\theta}$ for $(\theta,q) \in \nrm{Q}$.
	\item $\nrm{q_0} = \ns{q_0}{\theta_0}$, where $(\theta_0)_r = \atom(v_0(r))$ if $v_0(r) \in \els$, and $(\theta_0)_r = \bot$ if $v_0(r) = \sharp$;
	\item $\nrm{F} = \{ \ns{p}{\theta} \in \nrm{Q} \mid p \in F \}$
	\item $
		\begin{aligned}[t]	
			\nrm{\Delta} = &\{ \ns{p}{\theta} \tr{\theta_r}{\checkreg{r}} \ns{q}{\theta} \mid p \tr{\varphi}{\checkreg{r}} q \in \Delta, \varphi \inm \theta_r  \} 
			\; \cup \\
			&\{ \ns{p}{\theta} \tr{\psi}{\fresh{r}} \ns{q}{\theta[r \mapsto \psi]} \mid  
			p \tr{\varphi}{\fresh{r}} q \in \Delta, \varphi \inm \psi, \card{\den{\psi}} > \regab(\theta,\psi) \}
		\end{aligned}	
		$
\end{itemize}
\label{def:nrm-cons}
\end{definition}
The automaton $\nrm{\sra}$ enjoys the desired property: each transition from $\ns{p}{\theta}$ is enabled by $\theta$, by construction.
$\nrm{\sra}$ is always \emph{finite}. In fact, suppose $\sra$ has $n$ states, $m$ transitions and $r$ registers. Then $\nrm{\sra}$ has at most $m$ predicates, and $\card{\mint(\sra)}$ is $\bigO(2^m)$. 
Since the possible register abstractions are $\bigO(r2^m)$, $\nrm{\sra}$ has $\bigO(n r 2^m)$ states and $\bigO(m r^2 2^{3m})$ transitions.

\begin{example}
We now show the normalized version of Example~\ref{ex:sra-norm}. The first step is computing the set $\mint(\sra)$ of minterms for $\sra$, i.e., the satisfiable Boolean combinations of $\{ \atom(0), \mathsf{div_3}, [0,10] \land \mathsf{div_5}, <0 \lor > 10\}$. For simplicity, we represent minterms as bitvectors where a 0 component means that the corresponding predicate is negated, e.g., $[1,1,1,0]$ stands for the minterm $\atom(0) \land ([0,10] \land \mathsf{div_3}) \land \mathsf{div_5} \land \neg( <0 \lor > 10)$. Minterms and the resulting \SRA $\nrm{\sra}$ are shown below. 
\centerline{
	\includegraphics[scale=.3]{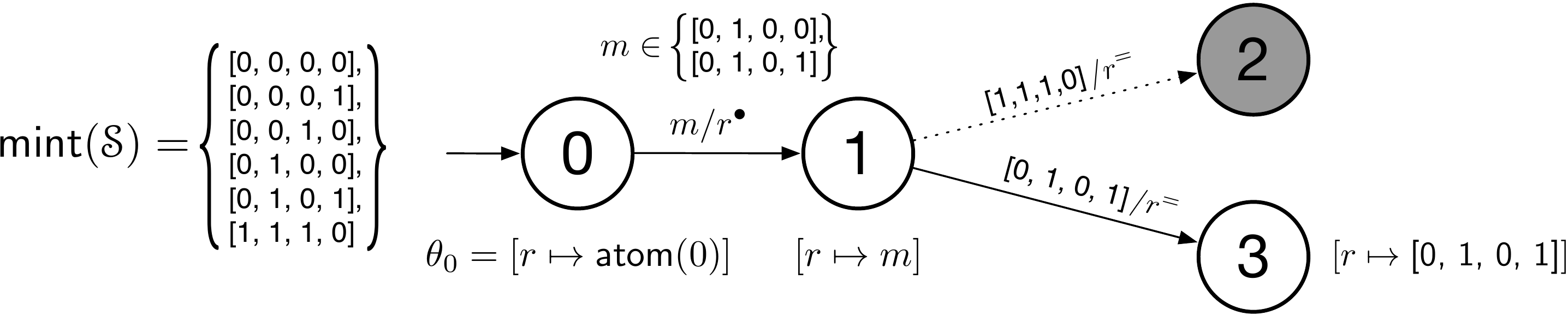}
}
On each transition we show how it is broken down to minterms, and for each state we show the register abstraction (note that state 1 becomes two states in $\nrm{\sra}$). The transition from 1 to 2 is \emph{not} part of $\nrm{\sra}$ -- this is why it is dotted. In fact, in every register abstraction $[r \mapsto m]$ reachable at state 1, the component for the transition guard $[0,10] \land \mathsf{div_5}$ in the minterm $m$ (3rd component) is 0, i.e., $([0,10] \land \mathsf{div_5}) \not \inm m$. Intuitively, this means that $r$ will never be assigned a value that satisfies $[0,10] \land \mathsf{div_5}$. As a consequence, the construction of Definition~\ref{def:nrm-cons} will not add a transition from 1 to 2. 
\end{example}
Finally, we show that the normalized \SRA behaves exactly as the original one.
\begin{proposition}
$(p,v) \sim (\ns{p}{\theta},v)$, for all $p \in Q$ and $v \models \theta$. Hence, $\sra \sim \nrm{\sra}$.
\label{prop:norm-bisim}
\end{proposition}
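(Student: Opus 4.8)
The plan is to exhibit an explicit bisimulation relating each configuration of $\sra$ to the corresponding configuration of $\nrm{\sra}$. Working in the single-valued setting of this section (so register assignments are injective on non-empty registers), I would take
\[
\ltssim \;=\; \{\, ((p,v),\,(\ns{p}{\theta},v)) \mid p \in Q,\ v \models \theta,\ v \text{ injective on non-empty registers} \,\}.
\]
The finality clause holds in both directions at once, since $\nrm{F} = \{\ns{p}{\theta} \mid p \in F\}$ gives $p \in F$ iff $\ns{p}{\theta} \in \nrm{F}$. It then remains to show that $\ltssim$ and $\ltssim^{-1}$ match transitions, which I would do by case analysis on the two transition shapes ($\checkreg{r}$ and $\fresh{r}$), repeatedly using the minterm property: for a base guard $\varphi$ and a minterm $\psi$ one has $\sat(\varphi \land \psi)$ iff $\varphi \inm \psi$, in which case $\den{\psi} \subseteq \den{\varphi}$.

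For the forward direction ($\sra \to \nrm{\sra}$), assume $v \models \theta$ and $(p,v) \xrightarrow{a} (q,w)$ in $\clts(\sra)$. If this arises from a read $p \tr{\varphi}{\checkreg{r}} q$, then $a = v(r) \in \den{\varphi}$ and $w = v$; since $v \models \theta$ forces $a \in \den{\theta_r}$ with $\theta_r$ a non-$\bot$ minterm, satisfiability of $\varphi \land \theta_r$ yields $\varphi \inm \theta_r$, so $\ns{p}{\theta} \tr{\theta_r}{\checkreg{r}} \ns{q}{\theta} \in \nrm{\Delta}$, and it reads $a$ because $a = v(r) \in \den{\theta_r}$. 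If it arises from a fresh $p \tr{\varphi}{\fresh{r}} q$, let $\psi$ be the unique minterm with $a \in \den{\psi}$; then $\varphi \inm \psi$, and the candidate match is $\ns{p}{\theta} \tr{\psi}{\fresh{r}} \ns{q}{\theta[r\mapsto\psi]}$. In each case the target assignment is $w=v$ (resp.\ $w = v[r\mapsto a]$), and one checks $w \models \theta$ (resp.\ $w \models \theta[r\mapsto\psi]$), so the targets are again $\ltssim$-related.

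The main obstacle is verifying the side-condition $\card{\den{\psi}} > \regab(\theta,\psi)$ that guarantees the normalized fresh transition actually lies in $\nrm{\Delta}$. Here I would argue exactly as in Lemma~\ref{lem:minterm-enabled}: the registers $r'$ with $\theta_{r'} = \psi$ are non-empty and, by injectivity of $v$, carry $\regab(\theta,\psi)$ pairwise-distinct values of $\den{\psi}$; since the read symbol $a \in \den{\psi}$ is fresh, i.e.\ $a \notin \img(v)$, it is an additional distinct element, whence $\card{\den{\psi}} \ge \regab(\theta,\psi)+1$. This is the one place where single-valuedness is essential (for a non-injective $v$ the count collapses and the match can genuinely fail), and it is precisely the bridge between the concrete semantics of $\sra$ and the syntactic enabledness test used to prune $\nrm{\Delta}$.

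The backward direction ($\nrm{\sra} \to \sra$) is then routine: any transition of $\nrm{\sra}$ out of $\ns{p}{\theta}$ arises, by construction of $\nrm{\Delta}$, from a transition of $\sra$ out of $p$ with guard $\varphi$ satisfying $\varphi \inm \theta_r$ (read) or $\varphi \inm \psi$ (fresh); since the normalized guard denotes a subset of $\den{\varphi}$, any $a$ firing the normalized transition satisfies $\varphi$ and the same equality/freshness constraint, hence fires the original transition to a $\ltssim$-related target. Finally, the ``Hence'' follows by instantiating $\ltssim$ at the initial configurations: the definition of $\theta_0$ gives $v_0 \models \theta_0$ (with $v_0$ injective since $\sra$ is single-valued), so $(q_0,v_0) \ltssim (\nrm{q_0},v_0)$ and therefore $\sra \sim \nrm{\sra}$.
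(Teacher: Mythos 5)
Your proposal is correct and follows essentially the same route as the paper's proof: the same relation $\{((p,v),(\ns{p}{\theta},v)) \mid v \models \theta\}$, the same case split on $\checkreg{r}$ versus $\fresh{r}$ transitions in both directions, and the same use of minterm disjointness and the enabledness counting condition (your direct count via the fresh symbol $a$ is just an unfolded version of the paper's appeal to Lemma~\ref{lem:minterm-enabled}(2)).
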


\subsubsection{Emptiness and Determinism.}
The transitions of $\nrm{\sra}$ are always enabled by construction, 
therefore every path in $\nrm{\sra}$ always corresponds to a run in $\clts(\nrm{\sra})$.
\begin{lemma}
The state $\ns{p}{\theta}$ is reachable in $\nrm{\sra}$ if and only if there is a reachable configuration $(\ns{p}{\theta},v)$ in $\clts(\nrm{\sra})$ such that $v \models \theta$. Moreover, if $(\ns{p}{\theta},v)$ is reachable, then all configurations $(\ns{p}{\theta},w)$ such that  $w \models \theta$ are reachable.
\label{lem:path-reach}
\end{lemma}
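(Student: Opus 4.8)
The plan is to prove the two directions of the equivalence separately and then establish the ``moreover'' part, which in fact subsumes the nontrivial direction of the equivalence. The single structural fact I will rely on throughout is that in a single-valued \SRA every reachable configuration has an \emph{injective} assignment on non-empty registers: this holds for $v_0$ by assumption, read transitions leave the assignment unchanged, and fresh transitions store a value not already present in any register, so injectivity is preserved along every run. I will therefore read the statement with $w$ ranging over injective assignments consistent with $\theta$, matching exactly the class of assignments that can actually occur.

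The easy direction ($\Leftarrow$) is a projection argument: given a run $(\nrm{q_0},v_0) \xrightarrow{a_1} \dots \xrightarrow{a_n} (\ns{p}{\theta},v)$ in $\clts(\nrm{\sra})$, each step instantiates a transition of $\nrm{\sra}$ by definition of the configuration LTS, so erasing the register assignments yields a path $\nrm{q_0} \to \dots \to \ns{p}{\theta}$ in $\nrm{\sra}$, witnessing reachability of the state. For the converse together with the ``moreover'' part I will prove the single stronger claim: \emph{if $\ns{p}{\theta}$ is reachable in $\nrm{\sra}$, then every injective $w \models \theta$ gives a reachable configuration $(\ns{p}{\theta},w)$}, by induction on the length of a path to $\ns{p}{\theta}$. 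In the base case $\ns{p}{\theta} = \nrm{q_0} = \ns{q_0}{\theta_0}$, and since each $(\theta_0)_r$ is either $\atom(v_0(r))$ (a singleton) or $\bot$ (forcing $w(r)=\sharp$), the only $w \models \theta_0$ is $v_0$ itself, so the claim is immediate.

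For the inductive step I take the last transition $\ns{p'}{\theta'} \to \ns{p}{\theta}$ on the path; by the induction hypothesis every injective $w' \models \theta'$ yields a reachable $(\ns{p'}{\theta'},w')$, and I must realize an arbitrary injective $w \models \theta$ by choosing a suitable predecessor and firing the transition. If the step is a read $\ns{p'}{\theta'}\tr{\theta'_r}{\checkreg{r}}\ns{p}{\theta}$ then $\theta=\theta'$ and the assignment is unchanged, so taking $w'=w$ works directly (the guard $\theta'_r$ is a satisfiable minterm, hence $w(r) \in \den{\theta'_r}$). If the step is a fresh transition $\ns{p'}{\theta'}\tr{\psi}{\fresh{r}}\ns{p}{\theta}$ with $\theta=\theta'[r\mapsto\psi]$, I define $w'$ to agree with $w$ off $r$ and set $w'(r)=c$ for a carefully chosen $c\in\den{\theta'_r}$ (or $\sharp$ when $\theta'_r=\bot$); firing the fresh transition then reads $w(r)\in\den{\psi}$, which must be fresh with respect to $w'$, and restores exactly $w$.

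The crux of the whole argument is the fresh case, namely showing that a legal value $c \in \den{\theta'_r}$ with $c \notin \img(w)$ always exists, so that $w'$ is injective, satisfies $w' \models \theta'$, and keeps $w(r)$ fresh. Because minterms have disjoint denotations, the elements of $\img(w)$ lying in $\den{\theta'_r}$ are exactly the values $w(s)$ of registers $s$ with $\theta_s = \theta'_r$, and by injectivity of $w$ there are precisely $\card{\{s : \theta_s = \theta'_r\}}$ of them. I will bound this count strictly below $\card{\den{\theta'_r}}$ in two ways according to whether $\psi = \theta'_r$: when $\psi \neq \theta'_r$, reachability of the predecessor supplies an injective $v' \models \theta'$, whose distinct values on $\{s : \theta'_s = \theta'_r\}$ force $\card{\den{\theta'_r}} \geq \card{\{s : \theta'_s = \theta'_r\}}$, with strict slack because register $r$ itself is no longer counted; when $\psi = \theta'_r$, the enabledness side-condition $\card{\den{\psi}} > \regab(\theta',\psi)$ built into $\nrm{\Delta}$ by Definition~\ref{def:nrm-cons} gives the strict inequality directly. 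Either way $\den{\theta'_r}\setminus\img(w)\neq\emptyset$, so $c$ exists. I expect this cardinality bookkeeping --- and in particular the observation that it is \emph{reachability} of the predecessor (not merely its being a state of $\nrm{\sra}$) that guarantees an injective realization of $\theta'$ --- to be the main obstacle; the remaining work is the routine verification that the constructed $w'$ meets all three requirements and that the fired transition lands on $(\ns{p}{\theta},w)$.
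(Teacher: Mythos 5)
Your proposal is correct and takes essentially the same route as the paper's own proof: induction on the length of the path, exploiting that every transition of $\nrm{\sra}$ is enabled by its source register abstraction, with the crux being the cardinality argument (minterm disjointness plus injectivity of reachable assignments) that lets you restore a legal old value in register $r$ before a fresh transition and then re-fire it to read $w(r)$. If anything, your handling of the fresh case is slightly more careful than the paper's, which only argues that some $b$ makes $w[r \mapsto b]$ injective and glosses over the need for $b \notin \img(w)$ when $\psi = \theta'_r$; your explicit case split, invoking the enabledness bound $\card{\den{\psi}} > \regab(\theta',\psi)$ from Definition~\ref{def:nrm-cons} in that case, closes exactly this gap.
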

Therefore, using Proposition~\ref{prop:norm-bisim}, we can reduce the reachability and emptiness problems of $\sra$ to that of $\nrm{\sra}$.
\begin{theorem}[Emptiness]
There is an algorithm to decide reachability of any configuration of $\sra$, hence whether $\lang(\sra) = \emptyset$.
\end{theorem}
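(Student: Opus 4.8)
The plan is to reduce both reachability and emptiness for an arbitrary \SRA $\sra$ to a purely finite-state graph-reachability problem on its normalization $\nrm{\sra}$. Definition~\ref{def:nrm-cons} guarantees that $\nrm{\sra}$ is finite, and I would first argue that it is also effectively computable: building it requires computing $\mint(\sra)$ and, for each candidate transition, discarding the disabled ones via Lemma~\ref{lem:minterm-enabled}. Both steps reduce to satisfiability checks in the underlying Boolean algebra---for the fresh case, testing $\card{\den{\psi}} > \regab(\theta,\psi)$ is exactly the bounded-cardinality check spelled out after Lemma~\ref{lem:minterm-enabled}---and hence are decidable since $\alg$ is assumed decidable.

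The central reduction rests on the observation, made immediately before the statement, that in $\nrm{\sra}$ every transition is enabled by construction, so every syntactic path from $\nrm{q_0}$ lifts to an actual run in $\clts(\nrm{\sra})$. I would make this precise through Lemma~\ref{lem:path-reach}: a state $\ns{p}{\theta}$ is reachable in the finite graph $\nrm{\sra}$ if and only if some configuration $(\ns{p}{\theta},v)$ with $v \models \theta$ is reachable in $\clts(\nrm{\sra})$, and in that case all $(\ns{p}{\theta},w)$ with $w \models \theta$ are reachable as well. Thus reachability of configurations of $\nrm{\sra}$ collapses to reachability of states in a finite directed graph, which I would decide by a standard search (BFS/DFS) from $\nrm{q_0}$.

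To transfer the result back to $\sra$ I would invoke Proposition~\ref{prop:norm-bisim}, which provides the bisimulation $(p,v) \sim (\ns{p}{\theta},v)$ for every $v \models \theta$. For emptiness this is immediate: by Proposition~\ref{prop:det-sim-incl} a bisimulation yields language inclusion in both directions, so $\lang(\sra) = \lang(\nrm{\sra})$, and therefore $\lang(\sra) = \emptyset$ iff no final state $\ns{p}{\theta}$ with $p \in F$ is reachable in the graph of $\nrm{\sra}$. For reachability of a prescribed configuration $(p,v)$ of $\sra$, I would first compute the unique register abstraction $\theta$ induced by $v$---since minterms partition $\els$, each nonempty $v(r)$ satisfies exactly one minterm $\theta_r$, while $\theta_r = \bot$ when $v(r) = \sharp$---and then test reachability of $\ns{p}{\theta}$ in $\nrm{\sra}$ by the same graph search.

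The step I expect to require the most care is justifying that reachability of $(p,v)$ in $\clts(\sra)$ matches reachability of $\ns{p}{\theta}$ in $\nrm{\sra}$ for this specific $\theta$, rather than merely language equivalence. The key is that the bisimulation of Proposition~\ref{prop:norm-bisim} relates configurations with identical state and register assignment, and that along any run of $\nrm{\sra}$ the abstraction component stays consistent with the concrete assignment: read transitions leave both unchanged, whereas a fresh transition updates $r$ to a value satisfying the recorded minterm $\psi = \theta_r$. Hence a run of $\sra$ ending in $(p,v)$ is matched step-for-step by a run of $\nrm{\sra}$ ending in $(\ns{p}{\theta},v)$ with the very $\theta$ determined by $v$, and conversely. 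The genuine content---that no syntactically available path is semantically blocked by an unsatisfiable accumulated predicate constraint, as in Example~\ref{ex:sra-norm}---is precisely what normalization via Lemma~\ref{lem:minterm-enabled} buys, and it is already packaged into Lemma~\ref{lem:path-reach}.
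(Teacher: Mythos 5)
Your proposal is correct and takes essentially the same route as the paper's own proof: both reduce reachability of a configuration $(p,v)$ of $\sra$ to reachability of the state $\ns{p}{\theta}$ (with $v \models \theta$) in the finite graph $\nrm{\sra}$, justified by Proposition~\ref{prop:norm-bisim} together with Lemma~\ref{lem:path-reach}, and then decide it by a standard graph visit, stopping at a final state for emptiness. The additional details you supply---effective computability of $\nrm{\sra}$ via satisfiability checks in the decidable algebra, and uniqueness of the abstraction $\theta$ induced by $v$ since minterms partition the domain---are elaborations of points the paper leaves implicit, not a different argument.
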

\begin{proof}
Let $(p,v)$ a configuration of $\sra$. To decide whether it is reachable in $\clts(\sra)$, we can perform a visit of $\nrm{\sra}$ from its initial state, stopping when a state $\ns{p}{\theta}$ such that $v \models \theta$ is reached. If we are just looking for a final state, we can stop at any state such that $p \in F$.
In fact, by Proposition~\ref{prop:norm-bisim}, there is a run in $\clts(\sra)$ ending in $(p,v)$ if and only if there is a run in $\clts(\nrm{\sra})$ ending in $(\ns{p}{\theta},v)$ such that $v \models \theta$. By Lemma~\ref{lem:path-reach}, the latter holds if and only if there is a path in $\nrm{\sra}$ ending in $\ns{p}{\theta}$. 
This algorithm has the complexity of a standard visit of $\nrm{\sra}$, namely $\bigO(n r 2^m + m r^2 2^{3m})$.
\qed
\end{proof}

Now that we characterized what transitions are reachable, we define what it means for a normalized \SRA to be deterministic and we show that determinism
is preserved by the translation from \SRA.
\begin{proposition}[Determinism]
$\nrm{\sra}$ is \emph{deterministic} if and only if for all reachable transitions $p \tr{\varphi_1}{\ell_1} q_1, p \tr{\varphi_2}{\ell_2} q_2 \in \nrm{\Delta}$ the following holds: $\varphi_1 \neq \varphi_2$ whenever either (1) $\ell_1 = \ell_2$ and $q_1 \neq q_2$, or; (2) $\ell_1 = \fresh{r}$, $\ell_2 = \fresh{s}$, and $r \neq s$;
\label{prop:red-det}
\end{proposition}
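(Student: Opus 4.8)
The plan is to reduce determinism of $\nrm{\sra}$ to a purely local condition on pairs of outgoing transitions, and then to classify exactly which pairs can produce two distinct $a$-successors in $\clts(\nrm{\sra})$. First I would unfold the definition: $\nrm{\sra}$ is deterministic precisely when no \emph{reachable} configuration $(\ns{p}{\theta},v)$ admits two transitions $(\ns{p}{\theta},v) \xrightarrow{a} (\ns{q_1}{\theta_1},w_1)$ and $(\ns{p}{\theta},v) \xrightarrow{a} (\ns{q_2}{\theta_2},w_2)$ with distinct targets, for some $a \in \els$. Using Lemma~\ref{lem:path-reach}, the state $\ns{p}{\theta}$ is reachable iff all configurations $(\ns{p}{\theta},v)$ with $v \models \theta$ are reachable, so I can work with an arbitrary $v \models \theta$ and with the transitions of $\nrm{\Delta}$ out of $\ns{p}{\theta}$, all of which are enabled by $\theta$ by construction. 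This turns the statement into: nondeterminism occurs iff some reachable state $\ns{p}{\theta}$ has two outgoing transitions whose instances share a triggering symbol and lead to distinct configurations.

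The core is a case analysis on the label types of the two transitions $p \tr{\varphi_1}{\ell_1} q_1$ and $p \tr{\varphi_2}{\ell_2} q_2$. For two \emph{read} labels $\ell_1 = \checkreg{r_1}$, $\ell_2 = \checkreg{r_2}$, a common symbol $a$ requires $v(r_1) = a = v(r_2)$, and since $\nrm{\sra}$ is single-valued (injective on non-empty registers) this forces $r_1 = r_2$; by the construction both guards then equal $\theta_{r_1}$, so the successors differ iff $q_1 \neq q_2$, which is precisely the situation that clause~(1) forbids. For a \emph{read} and a \emph{fresh} label, the read instance needs $a$ stored in a register while the fresh instance needs $a$ stored in none, so no symbol triggers both and no constraint is needed. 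For two \emph{fresh} labels $\ell_1 = \fresh{r}$, $\ell_2 = \fresh{s}$, a common fresh $a$ lies in $\den{\varphi_1} \cap \den{\varphi_2}$, and since minterms do not overlap this forces $\varphi_1 = \varphi_2$; then if $r = s$ the successors coincide except possibly for the target state, which is forbidden by clause~(1), whereas if $r \neq s$ the updated assignments $v[r \mapsto a]$ and $v[s \mapsto a]$ already differ at register $s$ (because the fresh $a$ satisfies $v(s) \neq a$), so the successors are distinct regardless of the targets, which is forbidden by clause~(2).

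To turn this classification into the stated equivalence I would argue both directions. If $\nrm{\sra}$ is nondeterministic, a witnessing reachable configuration and symbol yield, by the case analysis, a reachable pair violating clause~(1) or~(2); conversely, given a reachable pair violating~(1) or~(2), I must produce a genuine witness of nondeterminism. This is the step I expect to be the main obstacle: I need a single symbol $a$ that triggers \emph{both} transitions from some reachable $v \models \theta$. For the read case $a = v(r)$ works; for the fresh case I must exhibit a fresh $a \in \den{\varphi}$ with $\varphi = \varphi_1 = \varphi_2$, which is guaranteed precisely by the enabledness condition built into $\nrm{\Delta}$, namely $\card{\den{\varphi}} > \regab(\theta,\varphi)$: since $v \models \theta$ and minterms are disjoint, at most $\regab(\theta,\varphi)$ register values lie in $\den{\varphi}$ and they are distinct by single-valuedness, so some element of $\den{\varphi}$ is unoccupied and hence fresh. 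Combined with Lemma~\ref{lem:path-reach} to supply a reachable $v \models \theta$, this produces the required two distinct successors, and the contrapositive of the whole argument gives the claimed characterization.
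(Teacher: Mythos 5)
Your proof is correct and follows essentially the same route as the paper's: both directions are argued by analysing pairs of transition instances sharing a triggering symbol, with minterm disjointness forcing $\varphi_1 = \varphi_2$, the enabledness condition built into $\nrm{\Delta}$ supplying the fresh witness, and the reachability lemma supplying a reachable configuration $v \models \theta$. The only cosmetic difference is that you organize the case analysis by label types (read/read, read/fresh, fresh/fresh) and make the injectivity (single-valuedness) argument explicit, whereas the paper splits on whether the successor assignments coincide; the substance is identical.
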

One can check determinism of an SRA by looking at its normalized version.
\begin{proposition}
$\sra$ is deterministic if and only if $\nrm{\sra}$ is deterministic.	
\label{prop:sra-nrm-det}
\end{proposition}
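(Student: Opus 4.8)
The plan is to reduce the statement to an isomorphism between the reachable parts of the two configuration LTSs, so that their local branching structure — and hence determinism — is literally identical. First I would restate determinism locally: since runs start at the initial configuration, $\sra$ is deterministic iff every \emph{reachable} configuration of $\clts(\sra)$ has, for each $a \in \els$, at most one outgoing $a$-transition (by induction on word length, two $a$-transitions from a reachable configuration yield two runs on the same word, and conversely a single successor per letter forces uniqueness of runs). The same reformulation applies to $\nrm{\sra}$. Thus it suffices to exhibit a label- and transition-preserving bijection between the reachable configurations of $\clts(\sra)$ and $\clts(\nrm{\sra})$.

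The map I would use is the forgetful projection $\pi(\ns{p}{\theta},v) = (p,v)$, restricted to \emph{consistent} configurations, i.e.\ those with $v \models \theta$. The crucial observation — and the reason that ordinary bisimilarity (Proposition~\ref{prop:norm-bisim}) is not by itself enough, as bisimulation does not preserve determinism — is that $\theta$ is \emph{uniquely determined} by $v$: since minterms partition $\els$, for $v(r) \neq \sharp$ there is a single minterm containing $v(r)$, forcing $\theta_r$, while $v(r) = \sharp$ forces $\theta_r = \bot$. Hence $\pi$ is a bijection from the consistent configurations of $\nrm{\sra}$ onto \emph{all} configurations of $\sra$. I would then check that the initial configuration is consistent and that every $\nrm{\Delta}$-transition maps consistent configurations to consistent ones (immediate from Definition~\ref{def:nrm-cons}, since a fresh transition updates $\theta_r$ to exactly the minterm $\psi$ of the stored value), so that $\pi$ maps the initial configuration to the initial configuration and restricts to an isomorphism between the reachable parts (consistently with Lemma~\ref{lem:path-reach}).

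The heart of the argument is the transition-by-transition matching out of a consistent $(\ns{p}{\theta},v)$ versus $(p,v)$, for a fixed input $a$. For a read transition coming from $p \tr{\varphi}{\checkreg{r}} q$, firing requires $v(r) = a$; then $a \in \den{\theta_r}$ automatically and $\theta_r$ is the minterm containing $a$, so I would use the minterm fact $\varphi \inm \theta_r \iff \den{\theta_r}\subseteq\den{\varphi} \iff a \in \den{\varphi}$ to conclude that the normalized transition $\ns{p}{\theta}\tr{\theta_r}{\checkreg{r}}\ns{q}{\theta}$ exists and fires on $a$ exactly when the original one does, with matching targets $(q,v)$ and $(\ns{q}{\theta},v)$. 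For a fresh transition from $p \tr{\varphi}{\fresh{r}} q$ fired by a fresh $a$, I would let $\psi$ be the unique minterm containing $a$: then $a \in \den{\varphi} \iff \varphi \inm \psi$, and — the step to handle with care — the enabledness side-condition $\card{\den{\psi}} > \regab(\theta,\psi)$ of Definition~\ref{def:nrm-cons} holds \emph{automatically} precisely because the firing value $a$ is a fresh witness in $\den{\psi}$, so by Lemma~\ref{lem:minterm-enabled} the transition $\ns{p}{\theta}\tr{\psi}{\fresh{r}}\ns{q}{\theta[r\mapsto\psi]}$ is present, fires on $a$, and lands on the consistent image of $(q,v[r\mapsto a])$. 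This yields a bijection between the $a$-successors on both sides, so a reachable configuration has two distinct $a$-transitions in one LTS iff it does in the other; combined with the local reformulation this gives $\sra$ deterministic $\iff$ $\nrm{\sra}$ deterministic.

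The main obstacle is exactly the fresh-transition case: one must show that the enabledness bookkeeping introduced by normalization neither creates nor destroys a configuration-level transition relative to $\sra$ — that is, that the counting condition $\card{\den{\psi}} > \regab(\theta,\psi)$ is equivalent to the existence of the very fresh witness $a$ that triggers the transition — rather than merely arguing that the two automata are bisimilar.
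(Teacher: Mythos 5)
Your proof is correct, and it follows the same underlying route as the paper---transferring local nondeterminism witnesses across the correspondence $(p,v) \leftrightarrow (\ns{p}{\theta},v)$---but it is a genuine strengthening of what the paper actually writes. The paper's own proof is three lines: given a nondeterminism witness at a reachable $(p,v)$ in $\clts(\sra)$, it asserts ``by Proposition~\ref{prop:norm-bisim}'' that two distinct $a$-successors of $(p,v)$ exist if and only if two distinct $a$-successors of $(\ns{p}{\theta},v)$ exist in $\clts(\nrm{\sra})$. As you correctly observe, the \emph{statement} of Proposition~\ref{prop:norm-bisim} (bisimilarity) cannot justify this on its own, since bisimilar LTSs need not agree on determinism; the paper is implicitly using the transition-level matching established inside that proposition's proof, together with the unstated fact that distinct configurations $(\ns{q_1}{\theta_1},v_1) \neq (\ns{q_2}{\theta_2},v_2)$ of $\nrm{\sra}$ project to distinct configurations of $\sra$. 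Your observation that $\theta$ is uniquely determined by $v$ on consistent configurations (because minterms partition $\els$) is precisely what closes that second gap, and your isomorphism $\pi$ of reachable configuration graphs packages both ingredients so that determinism transfers for free. One point you should make fully explicit: deducing $\card{\den{\psi}} > \regab(\theta,\psi)$ from the existence of the fresh firing value $a$ uses injectivity of $v$ on non-empty registers, which holds for \emph{reachable} configurations of a single-valued \SRA but not for arbitrary consistent ones---so your restriction to reachable parts is doing real mathematical work in the fresh case, not mere bookkeeping. In sum, the paper's proof is shorter because it reuses Proposition~\ref{prop:norm-bisim} as a black box (soundly, but only if one inspects that proof); yours is self-contained and repairs the logical gap in the black-box appeal.
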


\subsubsection{Similarity and bisimilarity.}

We now introduce a symbolic technique to decide similarity and bisimilarity of \SRAs.
The basic idea is similar to \emph{symbolic (bi)simulation}~\cite{Tzevelekos11,MurawskiRT15} for RAs. Recall that \RAs are \SRAs whose transition guards are all $\top$.
Given two RAs $\sra_1$ and $\sra_2$ a symbolic simulation between them is defined over their state spaces $Q_1$ and $Q_2$, not on their configurations. For this to work, one needs to add an extra piece of information about how registers of the two states are related. More precisely, a symbolic simulation is a relation on triples $(p_1,p_2,\sigma)$, where $p_1 \in Q_1, p_2 \in Q_2$ and $\sigma \subseteq R_1 \times R_2$ is a \emph{partial injective} function. This function encodes constraints between registers: $(r,s) \in \sigma$ is an equality constraint between $r \in R_1$ and $s \in R_2$, and $(r,s) \notin \sigma$ is an inequality constraint. Intuitively, $(p_1,p_2,\sigma)$ says that all configurations $(p_1,v_1)$ and $(p_2,v_2)$ such that $v_1$ and $v_2$ satisfy $\sigma$ -- e.g., $v_1(r) = v_2(s)$ whenever $(r,s) \in \sigma$ -- are in the simulation relation $(p_1,v_1) \prec (p_2,v_2)$.
In the following we will use $v_1 \bowtie v_2$ to denote the function encoding constraints among $v_1$ and $v_2$, explicitly: $\sigma(r) = s$ if and only if $v_1(r) = v_2(s)$ and $v_1(r) \neq \sharp$. 
\begin{definition}[Symbolic (bi)similarity~\cite{Tzevelekos11}]
A symbolic simulation is a relation $\srasim \subseteq Q_1 \times Q_1 \times \Pow(R_1 \times R_2)$ such that if $(p_1,p_2,\sigma) \in \srasim$, then $p_1 \in F_1$ implies $p_2 \in F_2$, and if $p_1 \xrightarrow{\ell} q_1 \in \Delta_1$\footnote{We will keep the $\top$ guard implicit for succinctness.} then:
\begin{enumerate}
	\item if $\ell = \checkreg{r}$:
	\begin{enumerate} 
		\item if $r \in \dom(\sigma)$, then there is $p_2 \xrightarrow{\checkreg{\sigma(r)}} q_2 \in \Delta_2$ such that $(q_1,q_2,\sigma) \in \srasim$.
		\item if $r \notin \dom(\sigma)$ then there is $p_2 \xrightarrow{\fresh{s}} q_2 \in \Delta_2$ s.t. $(q_1,q_2, \sigma[r \mapsto s]) \in \srasim$.
	\end{enumerate}
	\item if $\ell=\fresh{r}$:
	\begin{enumerate} 
		\item for all $s \in R_2 \setminus \img(\sigma)$, there is $p_2 \xrightarrow{\checkreg{s}} q_2 \in \Delta_2$ such that $(q_1,q_2,\sigma[r \mapsto s]) \in \srasim$, and;
		\item there is $p_2 \xrightarrow{\fresh{s}} q_2 \in \Delta_2$ such that $(q_1,q_2,\sigma[r \mapsto s]) \in \srasim$.
	\end{enumerate}
\end{enumerate}
Here $\sigma[r \mapsto s]$ stands for $\sigma \setminus ( \sigma^{-1}(s), s ) \cup (r,s)$, which ensures that $\sigma$ stays injective when updated. 

Given a symbolic simulation $\srasim$, its inverse is defined as $\srasim^{-1} = \{ t^{-1} \mid t \in \srasim \}$, where $(p_1,p_2,\sigma)^{-1} = (p_2,p_1,\sigma^{-1})$. A \emph{symbolic bisimulation} $\srasim$ is a relation such that both $\srasim$ and $\srasim^{-1}$ are symbolic simulations. 
\label{def:sym-bisim}
\end{definition}
Case 1 deals with cases when $p_1$ can perform a transition that reads the register $r$. If $r \in \dom(\sigma)$, meaning that $r$ and $\sigma(r) \in R_2$ contain the same value, then $p_2$ must be able to read $\sigma(r)$ as well. If $r \notin \dom(\sigma)$, then the content of $r$ is fresh w.r.t.\ $p_2$, so $p_2$ must be able to read any fresh value --- in particular the content of $r$.
Case 2 deals with the cases when $p_1$ reads a fresh value. It ensures that $p_2$ is able to read all possible values that are fresh for $p_1$, be them already in some register $s$ -- i.e., $s \in R_2 \setminus \img(\sigma)$, case 2(a) -- or fresh for $p_2$ as well -- case 2(b). In all these cases, $\sigma$ must be updated to reflect the new equalities among registers. 

Keeping track of equalities among registers is enough for \RAs, because the actual content of registers does not determine the capability of a transition to fire (\RA transitions have implicit $\top$ guards). As seen in Example~\ref{ex:sra-norm}, this is no longer the case for \SRAs: a transition may or may not happen depending on the register assignment being compatible with the transition guard.

As in the case of reachability, normalized \SRAs provide the solution to this problem. We will reduce the problem of checking (bi)similarity of $\sra_1$ and $\sra_2$ to that of checking symbolic (bi)similarity on $\nrm{\sra_1}$ and $\nrm{\sra_2}$, with minor modifications to the definition. To do this, we need to assume that minterms for both $\nrm{\sra_1}$ and $\nrm{\sra_2}$ are computed over the union of predicates of $\sra_1$ and $\sra_2$.
\begin{definition}[\textsf{N}-simulation]
A \textsf{N}-simulation on $\sra_1$ and $\sra_2$ is a relation $\srasim \subseteq \nrm{Q_1} \times \nrm{Q_2} \times \Pow(R_1 \times R_2)$, defined as in Definition~\ref{def:sym-bisim}, with the following modifications: 
\begin{enumerate}[label=(\roman*),topsep=1pt,noitemsep]
	\item we require that $\ns{p_1}{\theta_1} \tr{\varphi_1}{\ell_1} \ns{q_1}{\theta_1'} \in \nrm{\Delta_1}$ must be matched by transitions $\ns{p_2}{\theta_2} \tr{\varphi_2}{\ell_2} \ns{q_2}{\theta_2'} \in \nrm{\Delta_2}$ such that $\varphi_2 = \varphi_1$.\label{rb:item1} \medskip
	\item we modify case 2 as follows (changes are underlined):
	\begin{enumerate}[label=2(\alph*)']
		\item for all $s \in R_2 \setminus \img(\sigma)$ \underline{such that $\varphi_1 = (\theta_{2})_s$}, there is $\ns{p_2}{\theta_2} \tr{\varphi_1}{\checkreg{s}} \ns{q_2}{\theta_2'} \in \nrm{\Delta_2}$ such that $(\ns{q_1}{\theta_1'},\ns{q_2}{\theta_2'},\sigma[r \mapsto s]) \in \srasim$, and;\label{rb:fresh-read}
		\item \underline{if $\regab(\theta_1,\varphi_1) + \regab(\theta_2,\varphi_1) < \card{\den{\varphi_1}}$, then} there is $\ns{p_2}{\theta_2} \tr{\varphi_1}{\fresh{s}} \ns{q_2}{\theta_2'} \in \nrm{\Delta_2}$ such that $(\ns{q_1}{\theta_1'},\ns{q_2}{\theta_2'},\sigma[r \mapsto s]) \in \srasim$.\label{rb:fresh-fresh}
	\end{enumerate}
	\label{rb:item2}
\end{enumerate}
A \textsf{N}-bisimulation $\srasim$ is a relation such that both $\srasim$ and $\srasim^{-1}$ are \textsf{N}-simulations. We write $\sra_1 \nrmsim \sra_2$ (resp.\ $\sra_1 \nrmbsim \sra_2$) if there is a \textsf{N}-simulation (resp.\ bisimulation) $\srasim$ such that $(\nrm{q_{01}},\nrm{q_{02}},v_{01} \bowtie v_{02}) \in \srasim$. 
\label{def:red-sym-bisim}
\end{definition} 
The intuition behind this definition is as follows. Recall that, in a normalized \SRA, transitions are defined over minterms, which cannot be further broken down, and are mutually disjoint. Therefore two transitions can read the same values if and only if they have the same minterm guard. 
Thus condition \ref{rb:item1} makes sure that matching transitions can read exactly the same set of values. Analogously, condition \ref{rb:item2} restricts how a fresh transition of $\nrm{\sra_1}$ must be matched by one of $\nrm{\sra_2}$: \ref{rb:fresh-read} only considers transitions of $\nrm{\sra_2}$ reading registers $s \in R_2$ such that $\varphi_1 = (\theta_{2})_s$ because, by definition of normalized \SRA, $\ns{p_2}{\theta_2}$ has no such transition if this condition is not met. Condition \ref{rb:fresh-fresh} amounts to requiring a fresh transition of $\nrm{\sra_2}$ that is enabled by both $\theta_1$ and $\theta_2$ (see Lemma~\ref{lem:minterm-enabled}), i.e., that can read a symbol that is fresh w.r.t.\ both $\nrm{\sra_1}$ and $\nrm{\sra_2}$.

\textsf{N}-simulation is sound and complete for standard simulation.
\begin{theorem}
$\sra_1 \prec \sra_2$ if and only if $\sra_1 \nrmsim \sra_2$.
\label{thm:nrmsim-sound-complete}
\end{theorem}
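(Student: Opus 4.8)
The plan is to first remove the normalization from the picture and then to establish an exact correspondence between semantic simulations and symbolic \textsf{N}-simulations. By Proposition~\ref{prop:norm-bisim} we have $\sra_i \sim \nrm{\sra_i}$ for $i=1,2$. A bisimulation is a simulation in both directions, and similarity is transitive (the relational composition of two simulations is again a simulation, and the witnesses compose through the initial configurations). Chaining $\nrm{\sra_1}\prec\sra_1\prec\sra_2\prec\nrm{\sra_2}$ and, conversely, $\sra_1\prec\nrm{\sra_1}\prec\nrm{\sra_2}\prec\sra_2$ yields $\sra_1 \prec \sra_2$ iff $\nrm{\sra_1} \prec \nrm{\sra_2}$. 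Since \textsf{N}-simulation is, by Definition~\ref{def:red-sym-bisim}, phrased directly on the normalized automata, it then suffices to prove that a semantic simulation between $\clts(\nrm{\sra_1})$ and $\clts(\nrm{\sra_2})$ relating the initial configurations exists iff an \textsf{N}-simulation $\srasim$ containing the initial triple $(\nrm{q_{01}},\nrm{q_{02}},v_{01}\bowtie v_{02})$ exists. The bridge between the two worlds is a concretization map $\gamma$ sending a triple $t=(\ns{p_1}{\theta_1},\ns{p_2}{\theta_2},\sigma)$ to the set of configuration pairs $((\ns{p_1}{\theta_1},v_1),(\ns{p_2}{\theta_2},v_2))$ with $v_1\models\theta_1$, $v_2\models\theta_2$ and $v_1\bowtie v_2=\sigma$. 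The key structural fact (homogeneity) is that the $\clts$-behaviour of a configuration $(\ns{p_i}{\theta_i},v_i)$ is invariant under any value-renaming by a bijection of $\els$ that preserves minterm membership: enabledness depends only on $\theta_i$ (Lemma~\ref{lem:minterm-enabled}) and the matching of a concrete symbol depends only on the equality pattern. Thus all pairs in $\gamma(t)$ are behaviourally interchangeable, which is what lets a single triple faithfully abstract an entire class of configuration pairs.

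\textbf{Soundness (``if'').} Given an \textsf{N}-simulation $\srasim$ containing the initial triple, I would set $\ltssim=\bigcup_{t\in\srasim}\gamma(t)$. The initial configurations are related because $v_{0i}\models\theta_{0i}$ (by construction $(\theta_0)_r=\atom(v_0(r))$) and $\sigma=v_{01}\bowtie v_{02}$. To verify $\ltssim$ is a simulation, take a pair in $\gamma(t)$ and a concrete move $(\ns{p_1}{\theta_1},v_1)\xrightarrow{a}(\ns{q_1}{\theta_1'},w_1)$; by disjointness of minterms it instantiates a \emph{unique} symbolic transition $\ns{p_1}{\theta_1}\tr{\varphi_1}{\ell_1}\ns{q_1}{\theta_1'}$ with $a\in\den{\varphi_1}$. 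I would then case split: if $\ell_1=\checkreg{r}$ then $a=v_1(r)$, and $r\in\dom(\sigma)$ lets $\sra_2$ read $\sigma(r)$ (case 1(a), since $v_2(\sigma(r))=a$) while $r\notin\dom(\sigma)$ forces $a$ fresh for $v_2$, matched by a fresh move (case 1(b)); if $\ell_1=\fresh{r}$ then either $a=v_2(s)$ for some $s\notin\img(\sigma)$ with $\varphi_1=(\theta_2)_s$, matched by \ref{rb:fresh-read}, or $a$ is fresh for $v_2$ as well, matched by \ref{rb:fresh-fresh}, whose counting side-condition is exactly what guarantees such a commonly-fresh $a\in\den{\varphi_1}$ can be read. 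In every case the target triple $t'$ satisfies $w_1\bowtie w_2=\sigma[r\mapsto s]$ and $w_i\models\theta_i'$, so the resulting pair lies in $\gamma(t')\subseteq\ltssim$; the final-state condition transfers verbatim.

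\textbf{Completeness (``only if'').} Given a semantic simulation $\ltssim$ with related initial configurations, I would first \emph{close it under minterm-preserving value-renamings}; by homogeneity the result is still a simulation and is now a union of full $\gamma$-classes, so I may define $\srasim=\{\,t\mid\gamma(t)\subseteq\ltssim\,\}$. Checking that $\srasim$ is an \textsf{N}-simulation proceeds by choosing, for a symbolic move of $\nrm{\sra_1}$, a convenient representative pair together with a concrete symbol $a$ realizing the move, applying $\ltssim$, and reading the required symbolic transition off the concrete match. Single-valuedness (injectivity of $v_2$ on non-empty registers) ensures that a concrete move reading a value already stored in $v_2$ is forced to be the read transition $\checkreg{s}$ for the unique such $s$, and disjointness of minterms forces $\varphi_2=\varphi_1$, giving condition \ref{rb:item1}. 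For \ref{rb:fresh-read} one picks $a=v_2(s)$ (fresh for $v_1$ since $s\notin\img(\sigma)$, and in $\den{\varphi_1}=\den{(\theta_2)_s}$); for \ref{rb:fresh-fresh} one picks $a\in\den{\varphi_1}$ fresh for both assignments, which exists precisely when $\regab(\theta_1,\varphi_1)+\regab(\theta_2,\varphi_1)<\card{\den{\varphi_1}}$.

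\textbf{Main obstacle.} The crux is the homogeneity/renaming-invariance argument underpinning the completeness direction: one must show that $\gamma(t)$ behaves as a single symbolic state, so that one representative configuration pair witnesses the behaviour of the whole class and the passage from a semantic to a symbolic relation is sound. Intertwined with this is the delicate quantitative matching in \ref{rb:fresh-fresh}, where the existence of a symbol fresh for \emph{both} automata must be shown equivalent to the bound $\regab(\theta_1,\varphi_1)+\regab(\theta_2,\varphi_1)<\card{\den{\varphi_1}}$; this is exactly the point where the minterm counting machinery of Lemma~\ref{lem:minterm-enabled} carries the argument, and getting the two register counts to combine correctly across the two automata is the easiest place to slip.
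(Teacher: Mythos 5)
Your overall architecture is the same as the paper's: reduce via Proposition~\ref{prop:norm-bisim} to showing $\nrm{\sra_1} \prec \nrm{\sra_2}$ iff $\sra_1 \nrmsim \sra_2$, then translate between semantic simulations on the configuration LTSs and \textsf{N}-simulations on the finite symbolic state spaces. Your soundness direction is the paper's construction essentially verbatim: concretize each triple $t$ into the set $\gamma(t)$ of compatible configuration pairs (the paper writes this union out explicitly as a relation $\srasim'$), and verify the cases, with the counting argument for \ref{rb:fresh-fresh} playing exactly the role you describe. Where you genuinely diverge is completeness. The paper defines the symbolic relation \emph{existentially}: a triple $(\ns{p_1}{\theta_1},\ns{p_2}{\theta_2},\sigma)$ is included as soon as \emph{some} pair $((\ns{p_1}{\theta_1},v_1),(\ns{p_2}{\theta_2},v_2))$ with $v_1 \models \theta_1$, $v_2 \models \theta_2$ and $\sigma = v_1 \bowtie v_2$ lies in the given simulation. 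With that definition no renaming closure is needed: each condition of Definition~\ref{def:red-sym-bisim} is checked by instantiating the symbolic move with a concrete symbol read off the single witness pair, and the target triples are again witnessed by the resulting concrete pairs. Your route instead closes $\ltssim$ under minterm-preserving bijections of $\els$ and defines $\srasim$ \emph{universally} as $\{t \mid \gamma(t) \subseteq \ltssim\}$. Your homogeneity claim is in fact true (such a bijection induces an automorphism of the normalized configuration LTS, can be applied simultaneously to both components of a pair, and any two pairs in the same $\gamma$-class are related by one, extending the finite partial map minterm-by-minterm), so this route can be made to work --- but it costs an extra lemma that the existential formulation simply avoids.

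There is, however, one genuine slip in your completeness direction: as literally stated, $\srasim = \{t \mid \gamma(t) \subseteq \ltssim\}$ vacuously contains every triple with $\gamma(t) = \emptyset$, for instance triples whose $\sigma$ identifies registers $r \in R_1$, $s \in R_2$ with $(\theta_1)_r \neq (\theta_2)_s$ two distinct minterms. Such junk triples need not satisfy the \textsf{N}-simulation conditions: condition \ref{rb:item1} together with case 1 would demand a transition of $\nrm{\sra_2}$ on $\checkreg{\sigma(r)}$ with guard $(\theta_1)_r$, which cannot exist in $\nrm{\sra_2}$ since every $\checkreg{s}$-transition there carries guard $(\theta_2)_s$; the final-state requirement is equally unverifiable. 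So your $\srasim$ as defined may fail to be an \textsf{N}-simulation. The fix is to require $\gamma(t) \neq \emptyset$ in the definition --- at which point, by your own homogeneity argument, your universal definition collapses to the paper's existential one.
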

As a consequence, we can decide similarity of \SRAs via their normalized versions. \textsf{N}-simulation is a relation over a finite set, namely $\nrm{Q_1} \times \nrm{Q_2} \times \Pow(R_1 \times R_2)$, therefore \textsf{N}-similarity can always be decided in finite time. 
We can leverage this result to provide algorithms for checking language inclusion/equivalence for deterministic \SRAs (recall that they are undecidable for non-deterministic ones).

\begin{theorem}
Given two deterministic \SRAs $\sra_1$ and $\sra_2$, there are algorithms to decide $\lang(\sra_1) \subseteq \lang(\sra_2)$ and $\lang(\sra_1) = \lang(\sra_2)$.
\end{theorem}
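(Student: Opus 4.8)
The plan is to reduce both decision problems to checking \textsf{N}-(bi)similarity on the normalized automata---which is already known to be effective---by chaining together the completion construction, the language-inclusion characterization of Proposition~\ref{prop:det-sim-incl}, and the soundness/completeness of \textsf{N}-simulation (Theorem~\ref{thm:nrmsim-sound-complete}).

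First I would treat inclusion. Given deterministic $\sra_1$ and $\sra_2$, I would apply the completion construction mentioned after Proposition~\ref{prop:det-sim-incl} to obtain deterministic and \emph{complete} \SRAs $\sra_1^c$ and $\sra_2^c$ with $\lang(\sra_i^c) = \lang(\sra_i)$: completion only adds transitions into a fresh non-accepting sink so that every $a \in \els$ can be read from every state, which preserves both determinism and the recognized language. With both hypotheses of Proposition~\ref{prop:det-sim-incl} now in force, we get that $\lang(\sra_1) \subseteq \lang(\sra_2)$ holds if and only if $\sra_1^c \prec \sra_2^c$.

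Next I would make the similarity test effective. By Theorem~\ref{thm:nrmsim-sound-complete}, $\sra_1^c \prec \sra_2^c$ if and only if $\sra_1^c \nrmsim \sra_2^c$, where, as required for cross-automaton comparison, the minterms used to build $\nrm{\sra_1^c}$ and $\nrm{\sra_2^c}$ are computed over the \emph{union} of the predicates of the two automata, so that matching transitions range over a common minterm alphabet. Since an \textsf{N}-simulation is a relation on the finite set $\nrm{Q_1} \times \nrm{Q_2} \times \Pow(R_1 \times R_2)$, the largest such relation is computable by a standard greatest-fixpoint iteration: start from all triples satisfying the final-state condition and repeatedly discard those violating the matching clauses of Definition~\ref{def:red-sym-bisim} until the relation stabilizes. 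Termination follows from finiteness of the carrier, and each discarding step is effective because the minterm and enabledness side conditions reduce, via Lemma~\ref{lem:minterm-enabled}, to satisfiability checks in the (decidable) Boolean algebra. One then tests whether the initial triple $(\nrm{q_{01}},\nrm{q_{02}},v_{01}\bowtie v_{02})$ lies in this largest relation; this decides $\sra_1^c \nrmsim \sra_2^c$, hence $\lang(\sra_1) \subseteq \lang(\sra_2)$.

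For equivalence I would observe that $\lang(\sra_1) = \lang(\sra_2)$ iff $\lang(\sra_1) \subseteq \lang(\sra_2)$ and $\lang(\sra_2) \subseteq \lang(\sra_1)$, so running the inclusion algorithm in both directions decides equivalence; alternatively one runs a single \textsf{N}-\emph{bisimulation} fixpoint on $\nrm{\sra_1^c}$ and $\nrm{\sra_2^c}$ and appeals to the bisimulation analogue of Theorem~\ref{thm:nrmsim-sound-complete} together with Proposition~\ref{prop:det-sim-incl}. The main obstacle is not the fixpoint computation but correctly discharging the hypotheses of Proposition~\ref{prop:det-sim-incl}: one must verify that completion genuinely preserves determinism (so the full biconditional, not merely one implication, is available), that normalization preserves determinism (Proposition~\ref{prop:sra-nrm-det}), and that minterms are taken over the shared predicate set so the two normalized automata are comparable. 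Decidability throughout ultimately rests on decidability of $\sat$ in the underlying Boolean algebra.
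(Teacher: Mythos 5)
You are correct, and your reduction is essentially the paper's: decide $\lang(\sra_1) \subseteq \lang(\sra_2)$ by testing $\sra_1 \nrmsim \sra_2$, justified by Proposition~\ref{prop:det-sim-incl} together with Theorem~\ref{thm:nrmsim-sound-complete}, with decidability coming from finiteness of $\nrm{Q_1} \times \nrm{Q_2} \times \Pow(R_1 \times R_2)$, and equivalence handled as the two-sided variant. Two differences are worth recording. First, you make the completion step explicit, and rightly so: the converse direction of Proposition~\ref{prop:det-sim-incl} needs determinism \emph{and} completeness, and without completing the automata the similarity test is sound but not complete for inclusion---a deterministic $\sra_2$ lacking some transition can fail to simulate $\sra_1$ even though $\lang(\sra_1) \subseteq \lang(\sra_2)$ holds, so the algorithm could wrongly report non-inclusion. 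The paper's proof invokes the proposition without mentioning completion, relying implicitly on the remark following it and on Proposition~\ref{prop:sra-complete} in the appendix; your treatment closes that gap, modulo the detail that the appendix completion is stated for single-valued \SRAs, so it should be applied after the translation of Theorem~\ref{thm:sra-to-sv} (both steps preserve determinism and the language). Second, you compute the \emph{greatest} \textsf{N}-simulation by pruning the full set of triples, whereas the paper works forward from the single required triple $(\nrm{q_{01}},\nrm{q_{02}},v_{01} \bowtie v_{02})$, adding exactly the successor triples forced by Definition~\ref{def:red-sym-bisim}. That on-the-fly style is where determinism of $\nrm{\sra_1}$ and $\nrm{\sra_2}$ (Proposition~\ref{prop:sra-nrm-det}) is actually used---it makes the forced successors unique---and it is what yields the paper's iteration bound $\bigO(n_1 n_2 r_1 r_2 2^{m_1+m_2+r_1 r_2})$ and permits early termination on failure. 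Your greatest-fixpoint variant is equally correct (\textsf{N}-simulations are closed under union, so a largest one exists) and does not need determinism for the fixpoint computation itself, only for the link between similarity and language inclusion; it is, however, never cheaper, since it enumerates all triples up front.
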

\begin{proof}
By Proposition~\ref{prop:det-sim-incl} and Theorem~\ref{thm:nrmsim-sound-complete}, we can decide $\lang(\sra_1) \subseteq \lang(\sra_2)$ by checking $\sra_1 \nrmsim \sra_2$. This can be done algorithmically by iteratively building a relation $\srasim$ on triples that is an \textsf{N}-simulation on $\nrm{\sra_1}$ and $\nrm{\sra_2}$. The algorithm initializes $\srasim$ with $(\nrm{q_{01}},\nrm{q_{02}},$ $v_{01} \bowtie v_{02} )$, as this is required to be in $\srasim$ by Definition~\ref{def:red-sym-bisim}. Each iteration considers a candidate triple $t$ and checks the conditions for $\textsf{N}$-simulation. If satisfied, it adds $t$ to $\srasim$ and computes the next set of candidate triples, i.e., those which are required to belong to the simulation relation, and adds them to the list of triples still to be processed. If not, the algorithm returns $\lang(\sra_1) \not \subseteq \lang(\sra_2)$. The algorithm terminates returning $\lang(\sra_1) \subseteq \lang(\sra_2)$ when no triples are left to process. Determinism of $\sra_1$ and $\sra_2$, and hence of $\nrm{\sra_1}$ and $\nrm{\sra_2}$ (by Proposition~\ref{prop:sra-nrm-det}), ensures that computing candidate triples is deterministic. To decide $\lang(\sra_1) = \lang(\sra_2)$, at each iteration we need to check that both $t$ and $t^{-1}$ satisfy the conditions for $\textsf{N}$-simulation.

If $\sra_1$ and $\sra_2$ have, respectively, $n_1,n_2$ states, $m_1,m_2$ transitions, and $r_1,r_2$ registers, the normalized versions have $\bigO(n_1 r_1 2^{m_1})$ and $\bigO(n_2 r_2 2^{m_2})$ states. Each triple, taken from the finite set $\nrm{Q_1} \times \nrm{Q_2} \times \Pow(R_1 \times R_2)$, is processed exactly once, so the algorithm iterates $\bigO(n_1 n_2 r_1 r_2 2^{m_1 + m_2 + r_1 r_2})$ times.\qed
\end{proof}
\section{Evaluation}
\label{sec:eval}

We have implemented \SRAs in the open-source Java library SVPALib~\cite{svpalib}.
In our implementation, constructions are computed lazily when possible (e.g., the normalized \SRA for emptiness and (bi)similarity checks).
All experiments were performed on a machine with 3.5 GHz Intel Core i7 CPU with 16GB of RAM (JVM 8GB), with a timeout value of 300s. 
The goal of our evaluation is to answer the following research questions:
\begin{description}[topsep=1pt,noitemsep]
    \item [\textbf{Q1}:] Are \SRAs more succinct than existing models when processing strings over large but finite alphabets?  (\S~\ref{sec:eval:succinctness})
    \item [\textbf{Q2:}]  What is the performance of membership for deterministic \SRA and how 
    does it compare to the matching algorithm in \texttt{java.util.regex}? (\S~\ref{sec:eval:membership})
    \item [\textbf{Q3:}]  Are \SRAs decision procedures practical? (\S~\ref{sec:eval:decision})
\end{description}

\begin{figure}[t]%
\newcolumntype{C}[1]{>{\centering\arraybackslash}m{#1}}
\begin{tabular}{@{}C{.5\linewidth}@{\qquad} C{.5\linewidth}}
\multirow[t]{2}{*}[-6em]{	
\subfloat[][		Size of \SRAs vs \SFAs. (---) denotes the \SFA didn't fit in memory.
		$|$reg$|$ denotes how many different characters a register stored.
			\label{tab:srasfa}
]{
\setlength{\tabcolsep}{2pt}
	\centering
	\scriptsize
	\begin{tabular}{c|rrrr|rr@{\quad}}
						 & \multicolumn{4}{c|}{\textbf{\SRA}} & \multicolumn{2}{c}{\textbf{SFA}}  \\
						 & \multicolumn{1}{c}{states} & \multicolumn{1}{c}{tr} & \multicolumn{1}{c}{reg}  & 
						 	\multicolumn{1}{c|}{$|$reg$|$} & \multicolumn{1}{c}{states} & \multicolumn{1}{c}{tr} \\		
		\hline
	 IP2 &  44 & 46 & 3 & 10 & 4,013 & 4,312\\ 
	 IP3 &  44 & 46 & 4 & 10 & 39,113 & 42,112\\
	 IP4 &  44 & 46 & 5 & 10 & 372,113 & 402,112\\
	 IP6 &  44 & 46 & 7 & 10 & --- & ---\\
	 IP9 &  44 & 46 & 10 & 10 & --- & ---\\
	 Name-F & 7 & 10 & 2 & 26 & 201 & 300\\
	 Name-L & 7 & 10 & 2 & 26 & 129 & 180\\	 
	 Name & 7 & 10 & 3 & 26 & 3,201 & 4,500\\
     XML & 12 & 16 & 4 & 52 & --- & ---\\
     Pr-C2  & 26 & 28 & 3 & $2^{16}$ &  --- & ---\\
     Pr-C3 & 28 & 30 & 4 & $2^{16}$ &--- & ---\\
     Pr-C4    & 30 & 32 & 5 & $2^{16}$ &--- & ---\\  
     Pr-C6    & 34 & 36 & 7 & $2^{16}$ &--- & ---\\  
     Pr-C9   & 40 & 42 & 10 & $2^{16}$ &--- & ---\\ 
     Pr-CL2 & 26 & 28 & 3 & $2^{16}$ &--- & ---\\
     Pr-CL3 & 28 & 30 & 4 & $2^{16}$ &--- & ---\\
     Pr-CL4  & 30 & 32 & 5 & $2^{16}$ &--- & ---\\    
     Pr-CL6    & 34 & 36 & 7 & $2^{16}$ &--- & ---\\  
     Pr-CL9    & 40 & 42 & 10 & $2^{16}$ &--- & --- 
	\end{tabular}
}
}
&
\subfloat[][Performance of decision procedures. In the table $\lang_i = \lang(\sra_i)$, for $i=1,2$.\label{tab:decision}]{
	\centering
	\scriptsize
	\begin{tabular}{cc|rrr}
		\SRA $\sra_1$ & \SRA $\sra_2$ & \multicolumn{1}{c}{$\lang_1{=}\emptyset$}  & \multicolumn{1}{c}{$\lang_1{=}\lang_1$} & \multicolumn{1}{c}{$\lang_2 \subseteq \lang_1$}  \\		
		\hline
		Pr-C2 & Pr-CL2 & 0.125s & 0.905s & 3.426s \\
		Pr-C3 & Pr-CL3 & 1.294s & 5.558s & 24.688s \\
		Pr-C4 & Pr-CL4 & 13.577s & 55.595s & --- \\
		Pr-C6 & Pr-CL6 & --- & --- & --- \\
		Pr-CL2 & Pr-C2 & 1.067s & 0.952s & 0.889s \\
		Pr-CL3 & Pr-C3 & 10.998s & 11.104s & 11.811s \\
		Pr-CL4 & Pr-C4 & --- & --- & --- \\
		Pr-CL6 & Pr-C6 & --- & --- & --- \\
		IP-2 & IP-3 & 0.125s & 0.408s & 1.845s \\	
		IP-3 & IP-4 & 1.288s & 2.953s & 21.627s \\	
		IP-4 & IP-6 & 18.440s & 42.727s & --- \\	
		IP-6 & IP-9 & --- & --- & --- \\	
	\end{tabular}
}\\
&\subfloat[][\SRA membership and Java \texttt{regex} matching performance. Missing data points for Java are stack overflows.]{
\includegraphics{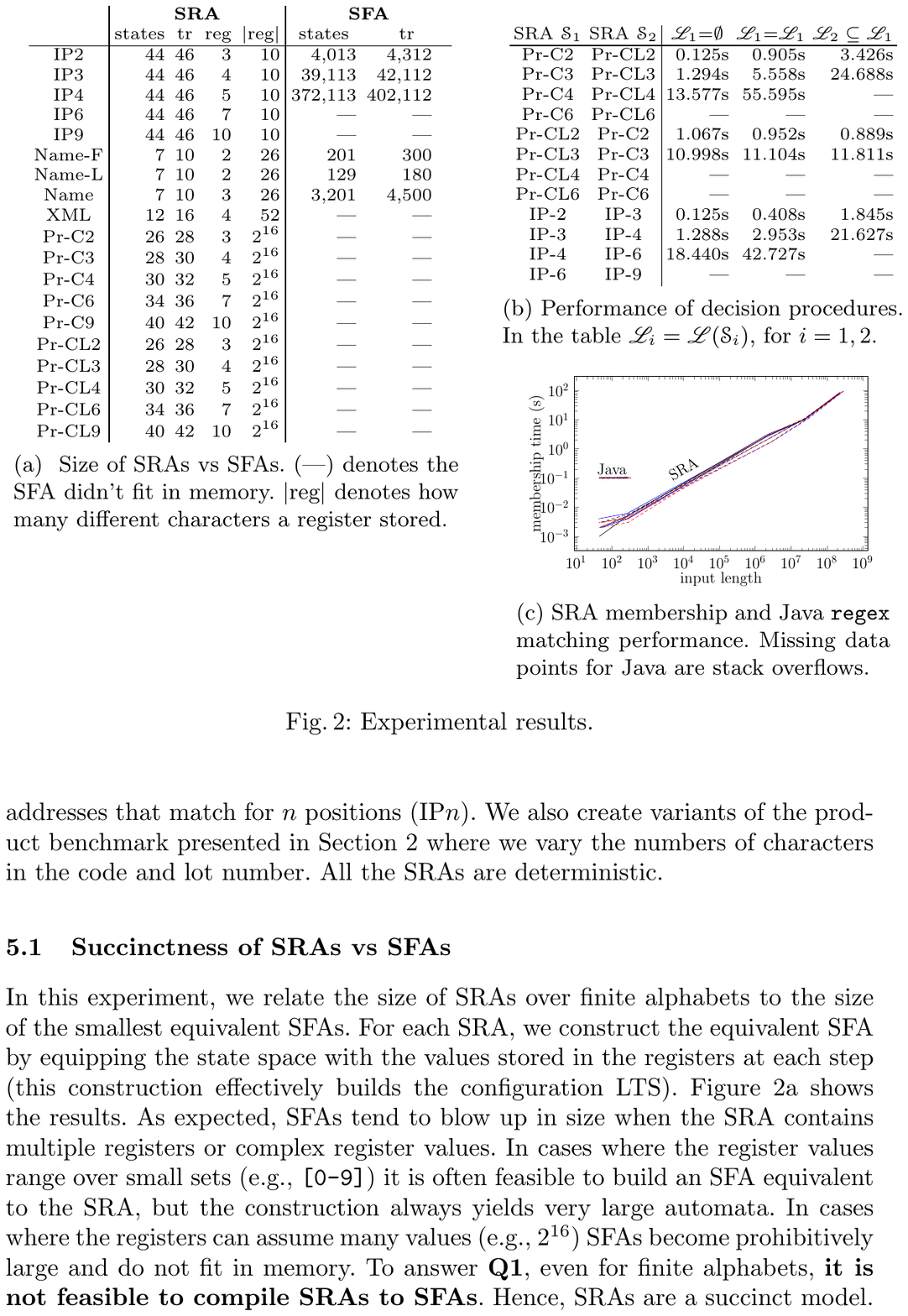}
\label{fig:sra-membership}
}
\end{tabular}
\caption{Experimental results.}
\end{figure}

\paragraph{\bf Benchmarks.} We focus on regular expressions with back-references,
therefore all our benchmarks operate over the Boolean algebra of Unicode characters with interval---i.e., the set of
characters is the set of all $2^{16}$ UTF-16 characters and the predicates are union of intervals (e.g., \texttt{[a-zA-Z]}).\footnote{
Our experiments are over finite alphabets, but the Boolean algebra can be infinite by taking the alphabet to be positive integers and
allowing intervals to contain $\infty$ as upper bound. 
This modification does not affect the running time of our procedures, therefore we do not report it.}
Our benchmark set contains 19 \SRAs that represent variants of regular expressions with back-references obtained
from the regular-expression crowd-sourcing website RegExLib~\cite{regexlib}. 
The expressions check whether inputs have, for example, matching first/last name initials or both (Name-F, Name-L and Name), correct Product Codes/Lot number of total length $n$ (Pr-C$n$, Pr-CL$n$), matching XML tags (XML), and IP addresses that match for $n$ positions (IP$n$).
We also create variants of the product benchmark presented in Section~\ref{sec:mot-ex} where we vary the numbers of characters in the code and lot number.
All the \SRAs are deterministic.

\subsection{Succinctness of \SRAs vs \SFAs}
\label{sec:eval:succinctness}
In this experiment, we relate the size of \SRAs over finite alphabets to the size
 of the smallest equivalent \SFAs.
For each \SRA, we construct the equivalent \SFA by equipping the state space with the values stored in the registers at each step (this
construction effectively builds the configuration LTS).
Figure~\ref{tab:srasfa} shows the results.
As expected, \SFAs tend to blow up in size when the \SRA contains multiple registers or complex register values.
In cases where the register values range over small sets (e.g., \texttt{[0{-}9]}) it is often feasible to build an \SFA equivalent to the \SRA,
but the construction always yields very large automata. In cases where the registers can assume many values (e.g., $2^{16}$) \SFAs become 
prohibitively large and do not fit in memory.
To answer \textbf{Q1}, even for finite alphabets,
\textbf{it is not feasible to compile \SRAs to \SFAs}. Hence, \SRAs are a succinct model.

\subsection{Performance of membership checking}
\label{sec:eval:membership}
In this experiment, we measure the performance of \SRA membership, and we compare it with the performance of the \texttt{java.util.regex} matching algorithm.
For each benchmark, we generate inputs of length varying between approximately 100 and 
10$^8 $ characters and 
measure the time taken to check membership.
Figure~\ref{fig:sra-membership} shows the results. The performance of SRA (resp. Java) is not particularly affected by the size of the expression.
Hence, the lines for different expressions mostly overlap. 
As expected, for \SRAs the time taken to check membership grows linearly in the size of the input (axes are log scale).
Remarkably, even though our implementation does not employ particular input processing optimizations,
it can still check membership for strings with tens of millions of characters in less than 10 seconds. We have found that our implementation is more efficient than the Java \texttt{regex} library, matching the same input an average of 50 times faster than \texttt{java.util.regex.Matcher}. \texttt{java.util.regex.Matcher} seems to make use of a recursive algorithm to match back-references, which means it does not scale well. Even when given the maximum stack size, the JVM will return a Stack Overflow for inputs as small as 20,000 characters. Our implementation can match such strings in less than 2 seconds.
To answer \textbf{Q2},
\textbf{deterministic \SRAs can be  efficiently executed on large inputs and perform better than the \texttt{java.util.regex} matching algorithm}.

\vspace{-2ex}
\subsection{Performance of decision procedures}
\label{sec:eval:decision} 
In this experiment, we measure the performance of \SRAs simulation and bisimulation algorithms.
Since all our \SRAs are deterministic, these two checks correspond to language equivalence and inclusion.
We select pairs of benchmarks for which the above tests are meaningful (e.g., variants of the problem discussed at the end of Sec.~\ref{sec:mot-ex}).
The results are shown in Figure~\ref{tab:decision}.
As expected, due to the translation to single-valued \SRAs, our decision procedures do not scale well in the number of registers.
This is already the case for classic register automata and it is not a surprising result.
However, our technique can still check equivalence and  inclusion for regular expressions that no existing tool can handle.
To answer \textbf{Q3},
\textbf{bisimulation and simulation algorithms for \SRAs only scale to small numbers of registers}.
\section{Conclusions}
\label{sec:related}

In this paper we have presented \emph{Symbolic Register Automata}, a novel class of automata that can handle complex alphabet theories while allowing symbol comparisons for equality. \SRAs encompass -- and are strictly more powerful -- than both Register and Symbolic Automata. We have shown that they enjoy the same closure and decidability properties of the former, despite the presence of arbitrary guards on transitions, which are not allowed by \RAs. Via a comprehensive set of experiments, we have concluded that \SRAs are vastly more succinct than \SFAs and membership is efficient on large inputs. Decision procedures do not scale well in the number of registers, which is already the case for basic \RAs.

\noindent
\textbf{Related work.}
\RAs were first introduced in \cite{KaminskiF94}. There is an extensive literature on register automata, their formal languages and decidability properties~\cite{DemriL09,NevenSV04,SakamotoI00,CasselHJMS15,MurawskiRT18}, 
including variants with \emph{global freshness}~\cite{Tzevelekos11,MurawskiRT15} and totally ordered data~\cite{BenediktLP10,FigueiraHL16}. \SRAs are based on the original model of~\cite{KaminskiF94}, but are much more expressive, due to the presence of guards from an arbitrary decidable theory.

In recent work, variants over richer theories have appeared. In \cite{ChenLTW17} \RA over rationals were introduced. They allow for a restricted form of linear arithmetic among registers (\RAs with arbitrary linear arithmetic subsume two-counter automata, hence are undecidable). \SRAs do not allow for operations on registers, but encompass a wider range of theories without any loss in decidability. Moreover, \cite{ChenLTW17} does not study Boolean closure properties.
In~\cite{CasselHJS16,IsbernerHS14}, \RAs allowing guards over a range of theories -- including (in)equality, total orders and increments/sums -- are studied. Their focus is different than ours as they are  interested primarily in \emph{active learning} techniques, and several restrictions are placed on models for the purpose of the learning process. We can also relate \SRAs with \emph{Quantified Event Automata}~\cite{BarringerFHRR12}, which allow for guards and assignments to registers on transitions. However, in QEA guards can be arbitrary, which could lead to several problems, e.g. undecidable equivalence.

Symbolic automata were first introduced in~\cite{5477051} and many variants of them 
have been proposed~\cite{DAntoniV17}.  The one that is closer to \SRAs is Symbolic Extended Finite Automata (\SEFA)~\cite{DAntoni2015}. \SEFAs are \SFAs in which transition can read more than one character at a time. A transition
of arity $k$ reads $k$ symbols which are consumed if they satisfy the predicate $\varphi(x_1,\ldots,x_k)$.  \SEFAs allow arbitrary $k$-ary predicates over the input theory, which results in most problems being undecidable (e.g., equivalence and intersection emptiness) and in the model not being closed under Boolean operations.
Even when deterministic, \SEFAs are not closed under union and intersection.
In terms of expressiveness, \SRAs and \SEFAs are incomparable. \SRAs can only use equality, but can compare symbols at arbitrary points in the input while \SEFAs can only compare symbols within a constant window, but using arbitrary predicates.

Several works study matching techniques for extended regular expressions~\cite{BecchiC08,BispoSCV06,ReidenbachS10,Komendantsky12}. These works introduce automata models with ad-hoc features for extended regular constructs -- including back-references -- but focus on efficient matching, without studying closure and decidability properties. It is also worth noting that \SRAs are not limited to alphanumeric or finite alphabets. On the negative side, \SRAs cannot express capturing groups of an unbounded length, due to the finitely many registers. This limitation is essential for decidability.

\noindent 
\textbf{Future work.}
In \cite{MurawskiRT18} a polynomial algorithm for checking language equivalence of deterministic \RAs is presented. This crucially relies on closure properties of symbolic bisimilarity, some of which are lost for \SRAs. We plan to investigate whether this algorithm can be adapted to our setting. 
Extending \SRAs with more complex comparison operators other than equality (e.g., a total order $<$) is an interesting research question, but most extensions of the model quickly lead to undecidability.
We also plan to study active automata learning for \SRAs, building on techniques for \SFAs~\cite{ArgyrosD18}, \RAs~\cite{BolligHLM13,CasselHJS16,IsbernerHS14} and nominal automata~\cite{MoermanS0KS17}.

\bibliographystyle{abbrv}
\bibliography{cav}

\begin{appendix}

\section{Proofs}

\begin{proof}[of Proposition~\ref{prop:union-int}]
Intersection and union are defined as follows:
\begin{enumerate} \item 
$\sra_1 \cap \sra_2 = (R_\cap, Q_\cap, q_{0\cap},v_{0\cap},F_\cap,\Delta_\cap)$ where $R_\cap = R_1 \uplus R_2$; $Q_\cap = Q_1 \times Q_2$; $q_{0\cap} = (q_{01},q_{02})$; $v_{0\cap} = [v_{01}, v_{02}]$; $F_\cap = F_1 \times F_2$; and 
\[
	\Delta_\cap = \{ (p_1,p_2) \tr{\varphi_1 \land \varphi_2}{\ell_1 \cup \ell_2} (q_1,q_2) \mid \forall i=1,2 \colon p_i \tr{\varphi_i}{\ell_i} q_i \in \Delta_i \}
\]
where $(E_1,I_1,U_1) \cup (E_2,I_2,U_2) = E_1 \cup E_2, I_1 \cup I_2,U_1 \cup U_2$.
\item $\sra_1 \cup \sra_2 = (R_\cup, Q_\cup, q_{0\cup},v_{0\cup},F_\cup,\Delta_\cup)$ where $R_\cup = R_1 \uplus R_2$; $Q_\cup = Q_1 \cup Q_2 \cup \{\hat{q}_0\}$; $q_{0\cup} = \hat{q}_0$; $v_{0\cup} = [v_{01}, v_{02}]$; $F_\cup = F_1 \cup F_2 \cup I$, where $I = \{ \hat{q}_0 \}$ if $q_{0i} \in F_i$, for some $i \in \{1,2\}$, and $I = \emptyset$ otherwise. Transitions are given by
\[
	\Delta_\cup = \bigcup_{i \in \{1,2\}} \Delta_i \cup 
	\{ \hat{q}_0 \tr{\varphi}{E,I,U} q \mid q_{0i} \tr{\varphi}{E,I,U} q \in \Delta_i\} 
\]
\end{enumerate}
\end{proof}

\begin{proof}[of Theorem~\ref{thm:sra-to-sv}]
\newcommand{\rr}{(\textsc{reg})\xspace}
\newcommand{\fr}{(\textsc{fresh})\xspace}
\newcommand{\nr}{(\textsc{nop})\xspace}	
%
For notational convenience, we introduce an additional label for single-valued \SRAs: $p \tr{\varphi}{\bullet} q$ whenever a fresh symbol is read, but not stored anywhere. This can be translated back to ordinary single-valued transitions by adding a new register $\hat{r}$, replacing $p \tr{\varphi}{\bullet} q$ with a pair of transitions \begin{tikzcd}[column sep=8ex] p \arrow[r,"{\varphi/\fresh{\hat{r}}}",bend left=20] \arrow[r,"{\varphi/\checkreg{\hat{r}}}"',bend right=20] & q\end{tikzcd}, and by adding $p \tr{\varphi}{\checkreg{\hat{r}}} q$ for every transition $p \tr{\varphi}{\fresh{r}} q$.

Let $S = (R,Q,q_0,v_0,F,\Delta)$. The SRA $\sra' = (R,Q',q_0',v_0',F',\Delta')$ is defined as follows:
\begin{itemize}
	\item $Q' = \{(q,f) \mid q \in Q, f \colon R \to R \}$;
	\item 
	$v_0'$ is any function $R \to \els \cup \{\sharp\}$ injective on non-empty registers such that $\img(v_0') = \img(v_0)$;
	\item $q_0' = (q_0,f_0)$, where $f_0$ is such that $v_0 = v_0' \circ f_0$;
	\item $F' = \{ (q,f) \in Q' \mid q \in F\}$;
	\item $\Delta'$ is generated by the following rules
\begin{gather*}
	\rr \; \frac{p \tr{\varphi}{E,I,U} q \quad S \propto E,I \quad f^{-1}(r) = S}
	{(p,f) \tr{\varphi}{\checkreg{r}} (q,f[U \mapsto r])} %
	\qquad	
	\nr \; \frac{p \tr{\varphi}{E,I,\emptyset} q \quad \emptyset \propto E,I }
	{(p,f) \tr{\varphi}{\bullet} (q,f)} 	
	\\[1ex]
	\fr \; \frac{p \tr{\varphi}{E,I,U} q \quad \emptyset \propto E,I \quad f^{-1}(r) \subseteq U \quad U \neq \emptyset}
	{(p,f) \tr{\varphi}{\fresh{r}} (q,f[U \mapsto r])} 
\end{gather*}
where $S \propto E,I$ whenever $E \subseteq S$ and $I \cap S = \emptyset$.
\end{itemize}
Moreover, for \fr, we assume that the choice of $r$ only depends on $f$ and $U$, i.e., given any two transitions $p_1 \tr{\varphi_1}{E_1,I_1,U} q_1$ and $p_2 \tr{\varphi_2}{E_2,I_2,U} q_2$ of $\sra$, the rule \fr produces a unique pair $(p_1,f) \tr{\varphi_1}{\fresh{r}} (q_1,f[U \mapsto r])$ and $(p_2,f) \tr{\varphi_2}{\fresh{r}} (q_2,f[U \mapsto r])$. This can be achieved by ordering the registers and picking the least $r$ such that $f^{-1}(r) \subseteq U$.


We now prove that the following relation is a bisimulation between $\sra$ and $\sra'$:
\[
	\srasim = \{ ((p,v),((p,f),w) \mid v = w \circ f \}
\]
We first prove the following, which will be useful in all the cases below. 

\begin{lemma}
Let $X \subseteq R$ and let $v = w \circ f$, with $v$, $w$, $f$ as above. For any $r \in R$ such that $f^{-1}(r) \subseteq X$, and any $a \in \els$, we have
\[
	v[X \mapsto a] = w[r \mapsto a] \circ f[X \mapsto r]
\]
\label{lem:decomp}
\end{lemma}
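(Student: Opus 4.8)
The plan is to prove the equality of the two functions \emph{pointwise}. Both sides are maps $R \to \els \cup \{\sharp\}$, so it suffices to fix an arbitrary $s \in R$, evaluate each side at $s$, and check they agree. The natural way to organize this is a case split on whether $s \in X$, since both the update $v[X \mapsto a]$ on the left and the update $f[X \mapsto r]$ on the right are defined by cases on membership in $X$.

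First I would dispatch the easy case $s \in X$. Here the left-hand side is immediate from the multiple-update notation: $v[X \mapsto a](s) = a$. For the right-hand side, $s \in X$ gives $f[X \mapsto r](s) = r$, and then $w[r \mapsto a](r) = a$; so both sides equal $a$. The interesting case is $s \notin X$. Now the left-hand side is $v[X \mapsto a](s) = v(s)$, which by the hypothesis $v = w \circ f$ equals $w(f(s))$. On the right, $s \notin X$ yields $f[X \mapsto r](s) = f(s)$, so the right-hand side is $w[r \mapsto a](f(s))$, and to finish I need this to equal $w(f(s))$.

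The crux of the argument — and the only place the hypothesis is actually used — is showing $f(s) \neq r$ whenever $s \notin X$, since $w[r \mapsto a]$ changes the value of $w$ only at the single point $r$. This is exactly what $f^{-1}(r) \subseteq X$ buys us: if instead $f(s) = r$, then $s \in f^{-1}(r) \subseteq X$, contradicting $s \notin X$. Hence $f(s) \neq r$, the update at $r$ leaves the value unchanged, and the right-hand side is $w(f(s)) = v(s)$, matching the left. I do not expect any genuine obstacle beyond keeping the update notation straight; once the case split is set up, the whole lemma is a routine unfolding of the definitions, with the single containment hypothesis used precisely to rule out the one value at which $w[r \mapsto a]$ differs from $w$.
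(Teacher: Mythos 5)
Your proof is correct and is essentially identical to the paper's: the same pointwise case split on $s \in X$, with the hypothesis $f^{-1}(r) \subseteq X$ used in exactly the same way to conclude $f(s) \neq r$ in the case $s \notin X$.
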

\begin{proof}
We proceed by cases. Let $s \in X$. Then we have $v[X \mapsto a](s) = a$ and $(w[r \mapsto a] \circ f[X \mapsto r])(s) = w[r \mapsto a](r) = a$. If $s \notin X$, then we must have $f(s) \neq r$, because $f^{-1}(r) \subseteq X$, so $(w[r \mapsto a] \circ f[X \mapsto r])(s) = w[r \mapsto a](f(s)) = w(f(s)) = v(s) = v[X \mapsto a](s)$, by the hypothesis $v = w \circ f$.
\end{proof}

Clearly we have $((q_0,v_0), ((p,f_0),v_0')) \in \srasim$, by construction. We first prove that $\srasim$ is a simulation. Suppose we have $(p,v) \xrightarrow{a} (q,v') \in \clts(\sra)$, instance of $p \tr{\varphi}{E,I,U} q$, so $v^{-1}(a) \propto E,I$ and $v' = v[U \mapsto a]$. We need to find a matching transition in $\clts(\sra')$ from any $((p,f),w)$ such that $w$ is injective on non-empty registers and $v = w \circ f$. We have two cases:
\begin{itemize}
	\item Suppose $a \in \img(v)$, and let $S = v^{-1}(a)$. Since $v = w \circ f$, there must be $r \in R$ (unique, by injectivity of $w$) such that $f(S) = \{r\}$. Therefore, recalling the assumption $S \propto E,I$, by \rr there is $p \tr{\varphi}{\checkreg{r}} q$ and, since $w(r) = (w \circ f)(E) = v(E) = a$, there is $((p,f),w) \xrightarrow{a} ((q,f[U \mapsto r]),w) \in \clts(\sra')$. We have 
	\begin{align*}
		v[U \mapsto a] &= v[U \cup S \mapsto a] && (v(S) = a) \\
		&= (w \circ f)[E \cup S \mapsto a] && (v = w \circ f ) \\
		&= w[r \mapsto a] \circ f[S \cup U \mapsto r] && (\text{Lemma~\ref{lem:decomp} and $f^{-1}(r) = S \subseteq S \cup U$}) \\
		&= w \circ f[U \mapsto r] && (a = w(r), f^{-1}(r) = S) 
	\end{align*}
	from which we get $(q,v[U \mapsto a]) \srasim ((q,f[U \mapsto r]),w)$.
	
	\item Suppose $a \in \den{\varphi} \setminus \img(v)$. Then we have $\emptyset \propto E,I$. If $a \in \img(w)$, with $r = w(a)$, we must have $f^{-1}(r) = \emptyset$, therefore $f^{-1}(r) \propto E,I$. We can now use \rr to obtain $(p,f) \tr{\varphi}{\checkreg{r}} (q,f[U \mapsto r]) \in \Delta'$. This transition has an instance $((p,f),w) \xrightarrow{a} ((q,f[U \mapsto r]),w)$ in $\clts(\sra')$. We have $v[U \mapsto a] = w \circ f[U \mapsto r]$, because $w(r) = a$, so $(q,v[U \mapsto a]) \srasim ((q,f[U \mapsto r]),w)$.	
	If $a \notin \img(w)$, we can apply either $\rr$ or $\fr$, depending on $U$:
			\begin{itemize} 
				\item Suppose $U \neq \emptyset$, and consider $r \in R$ such that that $f^{-1}(r) \subseteq U$. Such $r$ must exist: if $f$ is injective then it is obvious (this crucially depends on $U \neq \emptyset$), otherwise there is $r$ such that $f^{-1}(r) = \emptyset \subseteq U$. Therefore by \fr there is a transition $(p,f) \tr{\varphi}{\fresh{r}} (q,f[U \mapsto r]) \in \Delta'$, which has an instance $((p,f),w) \xrightarrow{a} ((q,f[U \mapsto r]),w[r \mapsto a])$ in $\clts(\sra')$. By Lemma~\ref{lem:decomp} and $f^{-1}(r) \subseteq U$ (see previous point), we have $v[U \mapsto a] = w[r \mapsto a] \circ f[U \mapsto r]$, from which $(q,v[U \mapsto a]) \srasim ((q,f[U \mapsto r]),w[r \mapsto a])$ follows.
				\item Suppose $U = \emptyset$. Then by \nr there is $(p,f) \tr{\varphi}{\bullet} (q,f) \in \Delta'$. This transition has an instance $((p,f),w) \xrightarrow{a} ((q,f),w)$ in $\clts(\sra')$. By assumption $v = w \circ f$, so $(q,v) \srasim ((q,f),w)$.
			\end{itemize}
			
	\end{itemize}
	Now we prove that $\srasim^{-1}$ is a simulation. Suppose we have $((p,f),w) \xrightarrow{a} ((q,f'),w')$ in $\clts(\sra')$. We need to find a matching transition in $\clts(\sra)$ from any $(p,v)$ such that $v = w \circ f$. We proceed by cases:
	\begin{itemize}
		\item Suppose $a \in \img(w)$, with $w(r) = a$. Then there must be $(p,f) \tr{\varphi}{\checkreg{r}} (q,f')$ in $\Delta'$ and we have $w' = w$. This transition is given by \rr, so  we have $p \tr{\varphi}{E,I,U} q \in \Delta$,
			$f' = f[U \mapsto r]$ and $f^{-1}(r) = S \propto E,I$. By injectivity of $w$ and $v = w \circ f$, we must have $v^{-1}(a) = S$. Therefore there is $(p,v) \xrightarrow{a} (q,v[U \mapsto a])$ in $\clts(\sra)$. We have $v[U \mapsto a] = w \circ f[U \mapsto r]$, so $((q,f[U \mapsto r]),w) \srasim^{-1} (q,v[U \mapsto a])$.
		\item Suppose $a \in \den{\varphi} \setminus \img(w)$. Then we must have $a \notin \img(v)$, because $\img(v) \subseteq \img(w)$. We have two cases:
		\begin{itemize}
			\item Suppose $(p,f) \tr{\varphi}{\fresh{r}} (q,f') \in \Delta'$. Then we have $w' = w[r \mapsto a]$. By \fr, there is $p \tr{\varphi}{E,I,U} q \in \Delta$. Because $a \notin \img(v)$ implies $v^{-1}(a) = \emptyset$, and we have $\emptyset \propto E,I$, there is $(p,v) \xrightarrow{a} (q,v[U \mapsto a])$ in $\clts(\sra)$. By $f^{-1}(r) \subseteq U$ and Lemma~\ref{lem:decomp}, there is $v[U \mapsto a] = w[r \mapsto a] \circ f[U \mapsto r]$, so $((q,f[U \mapsto r]),w[r \mapsto a]) \srasim^{-1} (q,v[U \mapsto a])$. 
			\item Suppose $(p,f) \tr{\varphi}{\bullet} (q,f) \in \Delta'$. Then by \nr there must be $p \tr{\varphi}{E,I,\emptyset} q \in \Delta$. Because $a \notin \img(v)$, we have $v^{-1}(a) = \emptyset$, and because $\emptyset \propto E,I$ there is $(p,v) \xrightarrow{a} (q,v)$ in $\clts(\sra)$. By the assumption $v = w \circ f$, we get $((q,f),v) \srasim^{-1} (q,w)$.
		\end{itemize}
	\end{itemize}
Now we show that, if $\sra$ is deterministic, so is $\sra'$. We proceed by contradiction. Suppose $\sra'$ is not deterministic, i.e., there is a reachable configuration $((p,f),w)$ in $\clts(\sra')$ and two transitions $((p,f),w) \xrightarrow{a} ((q_1,f_1),w_1)$ and $((p,f),w) \xrightarrow{a} ((q_2,f_2),w_2)$ such that $((q_1,f_1),w_1) \neq ((q_2,f_2),w_2)$. Then these transitions must be instances of $(p,f) \tr{\varphi_i}{\ell_i} (q_i,f_i) \in \Delta'$, for $i=1,2$. We have three cases:
\begin{itemize}
	\item if $\ell_1 = \checkreg{r}$, then we must also have $\ell_2 = \checkreg{r}$ (and vice versa), because $a = w(r)$, and $w_1 = w_2 = w$. Since $\sra \sim \sra'$, there must be a reachable configuration $(p,v)$ in $\clts(\sra)$ such that $v = w \circ f$, and two transitions $(p,v) \xrightarrow{a} (q_i,v_i)$ such that $v_i = w \circ f_i$, for $i=1,2$. Now, we must have either $q_1 \neq q_2$ or $v_1 \neq v_2$. In fact, suppose $q_1 = q_2$ and $v_1 = v_2$, then we would have
	\begin{align*}
		v_1 = w \circ f_1 &\implies w \circ f_1 = w \circ f_2  && (v_2 = w \circ f_2 \land v_1 = v_2 )\\
		&\implies f_1 = f_2 && (\text{injectivity of $w$})
	\end{align*}
	which implies $((q_1,f_1),w_1) = ((q_2,f_2),w_2)$, contradicting our initial assumption on non-determinism of $\sra$. Therefore $(q_1,v_1) \neq (q_2,v_2)$, and $\sra$ is non-deterministic as well.
	\item if $\ell_1 = \fresh{r}$, then we have $\ell_2 \in \{ \bullet, \fresh{s}\}$ and $w_1 = w[r \mapsto a]$, $w_2 \in \{ w_2, w[s \mapsto a] \}$, because $a \notin \img(w)$. Suppose $\ell_2 = \bullet$. By the definition of $\Delta'$, the two transitions of $\sra'$ we are considering must correspond to transitions $p \tr{\varphi_i}{E_i,I_i,U_i} q_i \in \Delta$ such that $\emptyset \propto E_i,I_i$ where $U_2$ can be empty, and $f_i = f[U_i \mapsto r]$, for $i=1,2$. Let $(p,v)$ be a reachable configuration in $\clts(\sra)$ such that $v = w \circ f$, which must exist by $\sra \sim \sra'$. Then those transitions in $\Delta$ have instances $(p,v) \xrightarrow{a} (q_i,v[U_i \mapsto a])$, for $i=1,2$. We distinguish the two cases:
	\begin{itemize}
		\item If $\ell_2 = \bullet$, then $v[U_2 \mapsto a] = v \neq v[U_1 \mapsto a]$, because $a \notin \img(w)$ and $v = w \circ f$ imply $a \notin \img(v)$, and because $U_1 \neq \emptyset$, by the premise of \fr. Therefore we have $(q_1,v[U_1 \mapsto a]) \neq (q_2,v)$, from which non-determinism of $\sra$ follows.
		\item If $\ell_2 = \fresh{s}$, then we must have either $q_1 \neq q_2$ or $v[U_1 \mapsto a] \neq v[U_2 \mapsto a]$. In fact, suppose $q_1 = q_2$ and $v[U_1 \mapsto a] = v[U_2 \mapsto a]$. Because $a \notin \img(v)$, the latter equality implies $U_1 = U_2$, thus $r = s$, because we assumed that \fr picks a unique $r$ for each pair $(f,U_1)$ ($ = (f,U_2)$). It follows that $f_1 = f_2$ and $w_1 = w_2$, from which we get the contradiction $((q_1,f_1),w_1) = ((q_2,f_2),w_2)$. 
	\end{itemize}
	
	\item the case $\ell_1 = \bullet$ is symmetrical to the one above.
	\qed
\end{itemize}
\end{proof}

%

\begin{proof}[of Lemma~\ref{lem:minterm-enabled}]
\hfill
\begin{enumerate}
	\item $p \tr{\varphi}{\checkreg{r}} q$ is enabled if and only if $v(r) \in \den{\varphi}$. Since both $\theta_r$ and $\varphi$ are minterms, this holds if and only if $\theta_r = \varphi$. 
	\item Suppose the transition is enabled, that is, for any $(p,v)$ such that $v \models \theta$ there is $(p,v) \xrightarrow{a} (q,v[r \mapsto a])$ in $\clts(\sra)$, with $a \in \den{\varphi} \setminus \img(v)$. Since $v$ is injective on non-empty registers and $\varphi \neq \bot$, by definition of minterm, $v$ picks $\card{\{ r \in R \mid \theta_r = \varphi \}}$ distinct elements from $\den{\varphi}$. Therefore, for $a$ to exist, $\den{\varphi}$ must have at least $\card{\{ r \in R \mid \theta_r = \varphi \}} + 1$ elements.

	Viceversa, suppose $\card{\den{\varphi}} > \card{\{ r \in R \mid \theta_r = \varphi \}} = k$, and take any $v \models \theta$. By injectivity of $v$ on non-empty registers, we have $\card{\img(v) \cap \den{\varphi}} = k$. Since $\card{\den{\varphi}} > k$, there exists $a \in \den{\varphi} \setminus \img(v)$, therefore there is $(p,v) \xrightarrow{a} (p,v[r \mapsto a])$ in $\clts(\sra)$.
\end{enumerate}	
\qed
\end{proof}

\begin{proof}[of Proposition~\ref{prop:norm-bisim}]
We will prove that 
\[
	\ltssim = \{ ( (p,v) , (\ns{p}{\theta},v) \mid v \models \theta \} 
\]
is a bisimulation. We first prove that it is a simulation. Consider $(p,v) \in \clts(\sra)$ and any transition $(p,v) \xrightarrow{a} (q,w)$. We will prove that this transition can be simulated by one from $((p,v),\theta)$ such that $v \models \theta$. We proceed by cases on which transition in $\Delta$ has generated $(p,v) \xrightarrow{a} (q,w)$:
\begin{itemize}
	\item $p \tr{\varphi}{\checkreg{r}} q$: then $w = v$, $a \in \den{\varphi}$ and $v(r) = a$ and $a \in \den{\varphi}$. By $v \models \theta$ we also have $a \in \den{\theta_r}$, therefore $\varphi \inm \theta_r$, which implies $\ns{p}{\theta} \tr{\theta_r}{\checkreg{r}} (q,\theta) \in \nrm{\Delta}$. By Lemma~\ref{lem:minterm-enabled}(1) this transition is enabled, therefore there is $(\ns{p}{\theta},v) \xrightarrow{a} ((q,\theta),v)$ in $\clts(\nrm{\sra})$, which is the required simulating transition. In fact, $v \models \theta$ implies $(q,v)\ltssim((q,\theta),v)$.
	\item $p \tr{\varphi}{\fresh{r}} q$: then $w = v[r \mapsto a]$ and $a \in \den{\varphi} \setminus \img(v)$. Take the minterm $\psi \in \mint(\sra)$ such that $a \in \den{\psi}$ (there is only one, by definition of minterm). We will prove that $\ns{p}{\theta} \tr{\psi}{\fresh{r}} (q,\theta[r \mapsto \psi]) \in \nrm{\Delta}$. Clearly we have $\psi \inm \varphi$, as $a \in \den{\varphi}$. We have to check $\card{\den{\psi}} > \card{\{ r \in R \mid \theta_r = \psi\}}$: the proof is the same as Lemma~\ref{lem:minterm-enabled}(2), showing by contradiction that $(p,v) \xrightarrow{a} (q,v[r \mapsto a])$ such that $a \in \den{\psi}$ cannot be generated from $p \tr{\varphi}{\fresh{r}} q$ whenever $\card{\den{\psi}} \leq \card{\{ r \in R \mid \theta_r = \psi\}}$. Therefore $\ns{p}{\theta} \tr{\psi}{\fresh{r}} (q,\theta[r \mapsto \psi]) \in \nrm{\Delta}$ and, by Lemma~\ref{lem:minterm-enabled}(2), it is enabled, i.e., there is $(\ns{p}{\theta},v) \xrightarrow{b} ((p,\theta[r \mapsto \psi]),v[r \mapsto b])$ such that $b \in \den{\psi} \setminus \img(v)$. Notice that there is such a transition for any $b \in \den{\psi} \setminus \img(v)$, in particular for $b = a$ we get the required simulating transition. In fact, because of the assumptions $v \models \theta$ and $a \in \den{\psi}$, we have $v[r \mapsto a] \models \theta[r \mapsto \psi]$ from which, by definition of $\ltssim$, we get $(p,v[r \mapsto a])\ltssim((p,\theta[r \mapsto \psi]),v[r \mapsto a])$. 	
\end{itemize}
We now prove that $\ltssim^{-1}$ is a simulation. Consider $(\ns{p}{\theta},v) \in \clts(\nrm{\sra})$ such that $v \models \theta$ and any transition $(\ns{p}{\theta},v) \xrightarrow{a} ((q,\gamma),w)$. We will prove that this transition can be simulated by one from $(p,v)$. We proceed by cases on which transition in $\nrm{\Delta}$ has generated $(\ns{p}{\theta},v) \xrightarrow{a} ((q,\gamma),w)$:
\begin{itemize}
	\item $\ns{p}{\theta} \tr{\theta_r}{\checkreg{r}} (q,\theta)$: then $\theta = \gamma$, $w = v$, $a \in \den{\theta_r}$, $v(r) = a$. By definition of $\nrm{\Delta}$, there is $p \tr{\varphi}{\checkreg{r}} q \in \Delta$ such that $\varphi \inm \theta_r$, which implies $\den{\theta_r} \subseteq \den{\varphi}$. Therefore $a \in \den{\varphi}$, and this transition from $\Delta$ instantiates to $(p,v) \xrightarrow{a} (q,v) \in \clts(\sra)$. This is the required simulating transition, because $v \models \theta$ implies $((q,\theta),v) \ltssim^{-1} (q,v)$.
	\item $\ns{p}{\theta} \tr{\psi}{\fresh{r}} (q,\theta[r \mapsto \psi])$: then $\gamma = \theta[r \mapsto \psi]$, $w = v[r \mapsto a]$ and $a \in \den{\psi} \setminus \img(v)$. By definition of $\nrm{\Delta}$, there is $p \tr{\varphi}{\fresh{r}} q \in \Delta$ such that $\varphi \inm \psi$. Therefore we have $\den{\psi} \subseteq \den{\varphi}$, which implies $a \in \den{\varphi} \setminus \img(v)$, so there is $(p,v) \xrightarrow{a} (q,v[r \mapsto a])$ in $\clts(\sra)$ corresponding to $p \tr{\varphi}{\fresh{r}} q$. This is the required simulating transition. In fact, because of the assumptions $v \models \theta$ and $a \in \den{\psi}$, we have $v[r \mapsto a] \models \theta[r \mapsto \psi]$ from which, by definition of $\ltssim$, we get $((p,\theta[r \mapsto \psi]),v[r \mapsto a]) \ltssim^{-1} (p,v[r \mapsto a])$.	
\end{itemize}
\qed
\end{proof}

\begin{proof}[of Lemma~\ref{lem:path-reach}]
For the first part of the claim, we shall prove that there is a path $\ns{q_0}{\theta_0} \tr{\varphi_1}{\ell_1} \ns{q_1}{\theta_1} \tr{\varphi_2}{\ell_2} \dots \tr{\varphi_n}{\ell_n} \ns{q_n}{\theta_n}$ in $\nrm{\sra}$ if and only if there is a run $(\ns{q_0}{\theta_0}, v_0) \xrightarrow{a_1} (\ns{q_1}{\theta_1},v_1) \xrightarrow{a_2} \dots \xrightarrow{a_n} (\ns{q_n}{\theta_n},v_n)$ in $\clts(\nrm{\sra})$ such that $v_n \models \theta_n$.
The right-to-left part is obvious, because every run in $\clts(\nrm{\sra})$ is instance of a path of $\nrm{\sra}$.
For the other direction, we will proceed by the length $n$ of the paths.
%
It is obvious for $n = 0$. For $n > 0$, suppose it holds for a path ending in $\ns{q_n}{\theta_n}$, so there is a run ending in $(\ns{q_n}{\theta_n},v_n)$ for all $v_n \models \theta_n$. Suppose there is 
\begin{equation}
	\ns{q_n}{\theta_n} \tr{\varphi_{n+1}}{\ell_{n+1}} \ns{q_{n+1}}{\theta_{n+1}} \in \nrm{\Delta}.
	\label{eq:reach-tr}
\end{equation}	
Because $v_n \models \theta_n$ by induction hypothesis and \eqref{eq:reach-tr} is enabled by $\theta_n$, there is $(\ns{q_n}{\theta_n},v_n) \xrightarrow{a} (\ns{q_{n+1}}{\theta_{n+1}},v_{n+1})$ in $\clts(\nrm{\sra})$. We either have $\theta_{n+1} = \theta_n$ and $v_{n+1} = v_n$, if $\ell_{n+1} \in R$, or $\theta_{n+1} = \theta_n[r \mapsto \varphi_{n+1}]$ and $v_{n+1} = v_n[a \mapsto r]$, if $\ell_{n+1} = \fresh{r}$, with $a \in \den{\varphi_{n+1}}$. In both cases we have $v_{n+1} \models \theta_{n+1}$, which concludes this part of the claim.

For the second part, we will show that if there is a path of length $n$ reaching $(\ns{p}{\theta},v)$ then there is one of the same length reaching $(\ns{p}{\theta},w)$, for all $w \models \theta$. We proceed again by induction on $n$. It is obvious for $n=0$, because only $(\ns{q_0}{\theta_0},v_0)$ is reachable. Suppose it holds for $n > 0$, and suppose $(\ns{q_{n+1}}{\theta_{n+1}},v_{n+1})$ can be reached via a path of length $n+1$. Then we have to show that all $(\ns{q_{n+1}}{\theta_{n+1}},w_{n+1})$ such that $w_n \models \theta_n$ are reachable. Suppose $(\ns{q_n}{\theta_n},v_n) \xrightarrow{a} (\ns{q_{n+1}}{\theta_{n+1}},v_{n+1})$ is the last transition in the run to $(\ns{q_{n+1}}{\theta_{n+1}},v_{n+1})$. Then this is instance of a transition of $\nrm{\sra}$ of the form \eqref{eq:reach-tr}. We proceed by cases on $\ell_{n+1}$:
\begin{itemize}
	\item if $\ell_{n+1} = \checkreg{r}$, then $\theta_{n+1} = \theta_n$ and $v_{n+1} = v_n$. By inductive hypothesis, we have that all $(\ns{q_n}{\theta_n},w_n)$ such that $w_n \models \theta_n$ are reachable. Since \eqref{eq:reach-tr} is enabled by $\theta_n$, it follows that there are $(\ns{q_n}{\theta_n},w_n) \xrightarrow{w_n(r)}  (\ns{q_{n+1}}{\theta_n},w_n)$, for all $w_n \models \theta_n$, which concludes the proof for this case.
	\item if $\ell_{n+1} = \fresh{r}$, then $\theta_{n+1} = \theta_n[r \mapsto \varphi_{n+1}]$ and $v_{n+1}(r) = a$. Consider any $w_{n+1} \models \theta_{n+1}$, with $w_{n+1} \neq v_{n+1}$. We will show that there must be $b \in \den{(\theta_n)_r}$ such that $w_{n+1}[r \mapsto b]$ is injective, which implies $w_{n+1}[r \mapsto b] \models \theta_n$. The claim would then follow: in fact, $(\ns{q_n}{\theta_n},w_{n+1}[r \mapsto b])$ would be reachable by the inductive hypothesis, and since \eqref{eq:reach-tr} is enabled by $\theta_n$, there is an instance of \eqref{eq:reach-tr} going to $(\ns{q_{n+1}}{\theta_{n+1}},w_{n+1})$. Suppose there is no such $b$. Then $\den{(\theta_n)_r}$ is already completely contained in the image of $\fres{w_{n+1}}{(R \setminus \{r\})}$. Since $w_{n+1}$ is injective, and $\theta_n$ and $\theta_{n+1}$ are made of minterms, we have 
	\[
		\card{\{ r' \in R \setminus \{r\} \mid (\theta_{n+1})_{r'} =  (\theta_n)_r \} } = \card{\den{(\theta_n)_r}}
	\]
	Now, $(\theta_n)_{r'} = (\theta_{n+1})_{r'}$ for $r' \in R \setminus \{r\}$, so from the equation above we get that $\theta_n$ maps ${\den{(\theta_n)_r}} + 1$ registers to $(\theta_n)_r$. This contradicts reachability of $(\ns{q_n}{\theta_n},v_n)$, because there cannot be any injective $v_n$ such that $v_n \models \theta_n$.\qed
\end{itemize}
\end{proof}

\begin{proof}[of Proposition~\ref{prop:red-det}]
For the right-to-left implication, suppose $\nrm{\sra}$ is not deterministic. Then there are two reachable transitions $(\ns{p}{\theta},v) \xrightarrow{a} (\ns{q_1}{\theta_1},v_1)$ and $(\ns{p}{\theta},v) \xrightarrow{a}(\ns{q_2}{\theta_2},v_2)$ in $\clts(\nrm{\sra})$ such that $(\ns{q_1}{\theta_1},v_1) \neq (\ns{q_2}{\theta_2},v_2)$. Suppose these two transitions are respectively instances of $\ns{p}{\theta} \tr{\varphi_1}{\ell_1} \ns{q_1}{\theta_1}$ and $\ns{p}{\theta} \tr{\varphi_2}{\ell_2} \ns{q_2}{\theta_2}$. Then we have the following cases, corresponding to the ones in the statement:
\begin{enumerate}
	\item Suppose $v_1 = v_2$, then we must have $\ell_1 = \ell_2$ and $\ns{q_1}{\theta_1} \neq \ns{q_2}{\theta_2}$, because we assumed $(\ns{q_1}{\theta_1},v_1) \neq (\ns{q_2}{\theta_2},v_2)$. Since $a \in \den{\varphi_1 \land \varphi_2}$ and $\varphi_1,\varphi_2$ are minterms, we obtain $\varphi_1 = \varphi_2$.
	\item Suppose $v_1 \neq v_2$, then we must have $\ell_1 = \fresh{r}$ and $\ell_2 = \fresh{s}$, for $r \neq s$, because assignments are only changed by transitions reading fresh symbols. Again, we have $a \in \den{\varphi_1 \land \varphi_2}$, thus $\varphi_1 = \varphi_2$.
\end{enumerate}

For the other direction, we assume that there are two reachable transitions $\ns{p}{\theta} \tr{\varphi_1}{\ell_1} \ns{q_1}{\theta_1}, \ns{p}{\theta} \tr{\varphi_2}{\ell_2}\ns{q_2}{\theta_2} \in \nrm{\Delta}$ that violate the condition, and we show that $\nrm{\sra}$ is not deterministic. We proceed by cases:
\begin{itemize}
	\item  $\ell_1 = \ell_2$, $q_1 \neq q_2$ and $\varphi_1 = \varphi_2$. Then $\theta_1 = \theta_2$. Take a reachable configuration $(\ns{p}{\theta},v)$, which exists because $\ns{p}{\theta}$ is reachable, by Proposition~\ref{prop:norm-bisim}. Then the two transitions can be instantiated to $(\ns{p}{\theta},v) \xrightarrow{v(r)} (\ns{q_1}{\theta},v)$ and $(\ns{p}{\theta},v) \xrightarrow{v(r)} (\ns{q_2}{\theta},v)$. Nondeterminism follows from $q_1 \neq q_2$.
	\item $\ell_1 = \fresh{r}$, $\ell_2 = \fresh{s}$, $r \neq s$ and $\varphi_1 = \varphi_2$. Take a reachable configuration $(\ns{p}{\theta},v)$, then since the two transitions are enabled by construction, there is $a \in \den{\varphi_1} \setminus \img(v)$. Therefore the two transitions can be instantiated to $(\ns{p}{\theta},v) \xrightarrow{a} (\ns{q_1}{\theta_1},v[r \mapsto a])$ and $(\ns{p}{\theta},v) \xrightarrow{a} (\ns{q_2}{\theta_2},v[s \mapsto a])$. By $r \neq s$, we have $v[r \mapsto a] \neq v[s \mapsto a]$, from which nondeterminism follows.
\end{itemize}
\qed
\end{proof}

\subsection{Proof of Theorem~\ref{thm:nrmsim-sound-complete}}
Given any two \SRAs $\sra_1$ and $\sra_2$, by Theorem~\ref{prop:norm-bisim} we have $\sra_1 \prec \sra_2$ if and only if $\nrm{\sra_1} \prec \nrm{\sra_2}$. We will now show that $\nrm{\sra_1} \prec \nrm{\sra_2}$ if and only if $\sra_1 \nrmsim \sra_2$. We show the two directions separately in the following lemmata.

\begin{lemma}
Let $\srasim$ be a $\textsf{N}$-simulation such that 
\[
(\nrm{q_{01}},\nrm{q_{02}},v_{01} \bowtie v_{02}) \in \srasim.
\] 
Then the following relation:
\begin{align*}
	\srasim' = \{ ((\ns{p_1}{\theta_1},v_1),(\ns{p_2}{\theta_2},v_2)) \mid &(\ns{p_1}{\theta_1},\ns{p_2}{\theta_2},\sigma) \in \srasim, \\
	&  v_1 \models \theta_1, v_2 \models \theta_2, v_1,v_2 \models \sigma \}
\end{align*}	
is a simulation on $\nrm{\sra_1}$ and $\nrm{\sra_2}$ such that $((\nrm{q_{01}},v_{01}),(\nrm{q_{02}},v_{02})) \in \srasim'$.
\end{lemma}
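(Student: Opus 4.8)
The plan is to verify directly that $\srasim'$ satisfies the two clauses defining a simulation on $\clts(\nrm{\sra_1})$ and $\clts(\nrm{\sra_2})$, using the closure conditions of the given $\textsf{N}$-simulation $\srasim$ to supply the matching transitions. Throughout I fix a pair $((\ns{p_1}{\theta_1},v_1),(\ns{p_2}{\theta_2},v_2)) \in \srasim'$, witnessed by some triple $(\ns{p_1}{\theta_1},\ns{p_2}{\theta_2},\sigma) \in \srasim$ with $v_1 \models \theta_1$, $v_2 \models \theta_2$ and $\sigma = v_1 \bowtie v_2$. The last equality is what I exploit repeatedly: $\sigma$ records \emph{exactly} the equalities between the two assignments, so $r \in \dom(\sigma)$ gives $v_2(\sigma(r)) = v_1(r)$, while $r \notin \dom(\sigma)$ together with $v_1(r)\neq\sharp$ forces $v_1(r) \notin \img(v_2)$, and symmetrically.

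First I would dispatch the easy parts. The membership $((\nrm{q_{01}},v_{01}),(\nrm{q_{02}},v_{02})) \in \srasim'$ follows from the hypothesis $(\nrm{q_{01}},\nrm{q_{02}},v_{01}\bowtie v_{02}) \in \srasim$ together with $v_{0i} \models \theta_0$ (immediate from the definition of the initial register abstraction, where each component is $\atom(v_0(r))$ or $\bot$) and $v_{01},v_{02} \models v_{01}\bowtie v_{02}$ (immediate from the definition of $\bowtie$). The final-state clause is also immediate: $p_1 \in F_1$ implies $p_2 \in F_2$ straight from the acceptance condition of $\srasim$ in Definition~\ref{def:red-sym-bisim}.

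The core is matching an arbitrary move $(\ns{p_1}{\theta_1},v_1) \xrightarrow{a} (\ns{q_1}{\theta_1'},w_1)$ of $\clts(\nrm{\sra_1})$. Since $\nrm{\sra_1}$ is normalized and single-valued, this instantiates either a read $\ns{p_1}{\theta_1}\tr{\varphi_1}{\checkreg{r}}\ns{q_1}{\theta_1}$ (so $a = v_1(r)$, $\varphi_1 = (\theta_1)_r$, $w_1 = v_1$) or a fresh transition $\ns{p_1}{\theta_1}\tr{\varphi_1}{\fresh{r}}\ns{q_1}{\theta_1[r\mapsto\varphi_1]}$ (so $a \in \den{\varphi_1}\setminus\img(v_1)$, $w_1 = v_1[r\mapsto a]$). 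I case-split following the shape of Definition~\ref{def:sym-bisim} refined by the $\textsf{N}$-modifications. In the read case with $r \in \dom(\sigma)$, clause~1(a) and modification~(i) give a matching $\checkreg{\sigma(r)}$-transition of $\nrm{\sra_2}$ with the \emph{same} guard $\varphi_1$, which instantiates reading $a$ because $v_2(\sigma(r)) = a$; with $r \notin \dom(\sigma)$ we have $a \notin \img(v_2)$, so clause~1(b) supplies a $\fresh{s}$-transition that instantiates reading the fresh-for-$v_2$ value $a$. In the fresh case I split on whether $a \in \img(v_2)$: if $a = v_2(s)$ then $s \in R_2\setminus\img(\sigma)$ (else $a\in\img(v_1)$) and $(\theta_2)_s = \varphi_1$ (both minterms contain $a$ and minterms are disjoint), so clause~2(a)' yields a $\checkreg{s}$-match; otherwise $a$ is fresh for both and clause~2(b)' is to yield a $\fresh{s}$-match. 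Each subcase is then closed by the bookkeeping identity $\sigma[r\mapsto s] = w_1 \bowtie w_2$ together with $w_i \models \theta_i'$, which certifies that the target configurations lie in $\srasim'$; this identity is a careful but routine check that the update $\sigma[r\mapsto s] = \sigma \setminus (\sigma^{-1}(s),s) \cup (r,s)$ commutes with storing the common value $a$ on both sides (or on one side, when $a$ is already shared), using injectivity of $v_1,v_2$ on non-empty registers.

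The main obstacle is the fresh/fresh subcase, where I must guarantee that clause~2(b)' actually fires: that the syntactic test $\regab(\theta_1,\varphi_1)+\regab(\theta_2,\varphi_1) < \card{\den{\varphi_1}}$ holds precisely when $\sra_1$ can read a value of $\den{\varphi_1}$ that is fresh for both $v_1$ and $v_2$. The delicate point is counting the values of $\den{\varphi_1}$ already occupied across the two automata: since $v_1 \models \theta_1$ and $v_2 \models \theta_2$ are injective on non-empty registers, $\regab(\theta_1,\varphi_1)$ and $\regab(\theta_2,\varphi_1)$ count the distinct occupied values on each side (Lemma~\ref{lem:minterm-enabled}), whereas the values occupied on \emph{both} sides are exactly those recorded by $\sigma$ as matched register pairs with guard $\varphi_1$. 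Reconciling this overlap with the additive cardinality test of clause~2(b)' — via minterm disjointness and the correspondence between shared $\den{\varphi_1}$-values and $\sigma$-matched pairs — is where the argument does its real work; once a matching $\fresh{s}$-transition is secured, instantiating it on the shared fresh value $a$ and applying the commutation identity finishes the case, and the remaining subcases are mechanical instantiations.
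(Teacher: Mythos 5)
Your overall route is the same as the paper's: unfold $\srasim'$, check the initial pair and the final-state clause, and match each concrete transition of $\clts(\nrm{\sra_1})$ by a case split on whether it instantiates a $\checkreg{r}$- or a $\fresh{r}$-transition, and on whether the value read is already held by $v_2$. The three cases you do complete (read with $r \in \dom(\sigma)$, read with $r \notin \dom(\sigma)$, fresh with $a \in \img(v_2)$) agree with the paper's treatment; in the last of these you are in fact more explicit than the paper, justifying the side condition $\varphi_1 = (\theta_2)_s$ of clause 2(a)' via minterm disjointness where the paper only says the argument is ``similar to point 1(b)''.

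The genuine gap is the fresh/fresh case. You state the obligation correctly --- given $a \in \den{\varphi_1} \setminus (\img(v_1) \cup \img(v_2))$, one must establish the hypothesis of clause 2(b)', namely $\regab(\theta_1,\varphi_1) + \regab(\theta_2,\varphi_1) < \card{\den{\varphi_1}}$, so that the \textsf{N}-simulation supplies a matching $\fresh{s}$-transition --- but you never discharge it: you only remark that ``reconciling this overlap \dots is where the argument does its real work''. That reconciliation is the crux of the entire lemma, not routine bookkeeping, so the proposal as written is a plan rather than a proof. For comparison, the paper discharges this step by contraposition: if the sum were $\geq \card{\den{\varphi_1}}$, then, using injectivity of $v_1,v_2$, the facts $v_i \models \theta_i$, and disjointness of minterms, the set $(\img(v_1) \cup \img(v_2)) \cap \den{\varphi_1}$ would exhaust $\den{\varphi_1}$, contradicting the existence of $a$. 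Note also that the concern you raise is substantive rather than cosmetic: when $\sigma$ contains pairs $(r,s)$ with $(\theta_1)_r = (\theta_2)_s = \varphi_1$, the two occupied sets overlap, the sum strictly over-counts $\card{(\img(v_1) \cup \img(v_2)) \cap \den{\varphi_1}}$, and the biconditional you were aiming for (``the additive test holds precisely when a shared fresh value exists'') is simply unavailable --- indeed the paper's own contraposition tacitly treats the two occupied sets as disjoint, i.e., it ignores exactly the $\sigma$-overlap you identified. So the step you left open is the one place where a real argument (or a sharpened count subtracting $\card{\{(r,s) \in \sigma \mid (\theta_1)_r = \varphi_1\}}$) is required, and without it the case that distinguishes \SRAs from plain \RAs remains unproved.
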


\begin{proof} 
First of all, by definition of $\srasim'$, $(\nrm{q_{01}},\nrm{q_{02}},v_{01} \bowtie v_{02}) \in \srasim$ implies $((\nrm{q_{01}},v_0),(\nrm{q_{02}},\sigma_0)) \in \srasim'$. The other conditions follow by definition of $\theta_{01},\theta_{02}$ and $v_{01} \bowtie v_{02}$.

We will now prove that $\srasim'$ is a simulation.
Suppose $((\ns{p_1}{\theta_1},v_1),(\ns{p_2}{\theta_2},v_2)) \in \srasim'$, hence $(\ns{p_1}{\theta_1},\ns{p_2}{\theta_2},\sigma) \in \srasim$. By Definition~\ref{def:sym-bisim}, we must have that $\ns{p_1}{\theta_1} \in \nrm{F_1}$ implies $\ns{p_2}{\theta_2} \in \nrm{F_2}$. Now, suppose $(\ns{p_1}{\theta_1},v_1) \xrightarrow{a} (\ns{q_1}{\theta_1'},v_1')$. We have to prove that there is $(\ns{p_2}{\theta_2},v_2) \xrightarrow{a} (\ns{q_2}{\theta_2'},v_2')$ such that $((\ns{q_1}{\theta_1'},v_1'),(\ns{q_2}{\theta_2'},v_2')) \in \srasim'$. Let $(\ns{p_1}{\theta_1},v_1) \xrightarrow{a} (\ns{q_1}{\theta_1'},v_1')$ be instance of $\ns{p_1}{\theta_1} \tr{\varphi_1}{\ell_1} \ns{q_1}{\theta_1'}$. We proceed by cases on $\ell_1$:
\begin{enumerate}
	\item $\ell_1 = \checkreg{r}$: Then $\theta_1' = \theta_1$, $v_1' = v_1$, $a = v_1(r)$ and we have two cases. 
	\begin{enumerate}
		\item If $r \in \dom(\sigma)$, then by definition of $\srasim$ there is $\ns{p_2}{\theta_2} \tr{\varphi_1}{\checkreg{\sigma(r)}} \ns{q_2}{\theta_2} \in \nrm{\Delta_2}$ such that $(\ns{q_1}{\theta_1}, \ns{q_2}{\theta_2},\sigma) \in \srasim$. This transition is enabled by $\theta_2$, so by $v_2 \models \theta_2$ and $a = v_2(\sigma(r))$, by definition of $\sigma$, we get the existence of $(\ns{p_2}{\theta_2},v_2) \xrightarrow{a} (\ns{p_2}{\theta_2},v_2)$ in $\clts(\nrm{\sra_2})$. This is the required simulating transition. In fact, from $(\ns{q_1}{\theta_1}, \ns{q_2}{\theta_2},\sigma) \in \srasim$, and the assumptions $v_1 \models \theta_1$ and $v_2 \models \theta_2$, we obtain $((\ns{q_1}{\theta_1},v_1),(\ns{q_2}{\theta_2},v_2)) \in \srasim'$, by definition of $\srasim'$.
		\item If $r \notin \dom(\sigma)$, then by definition of $\srasim$ there is $\ns{p_2}{\theta_2} \tr{\varphi_1}{\fresh{s}} \ns{q_2}{\theta_2[r \mapsto \varphi_1]} \in \nrm{\Delta_2}$ such that $(\ns{q_1}{\theta_1}, \ns{q_2}{\theta_2[r \mapsto \varphi_1]},\sigma[r \mapsto s]) \in \srasim$. Since $r \notin \dom(\sigma)$ and $a \in \den{\varphi_1}$, we have $a \in \den{\varphi_1} \setminus \img(v_2)$. Therefore there is $(\ns{p_2}{\theta_2},v_2) \xrightarrow{a} (\ns{q_2}{\theta_2[r \mapsto \varphi_1]},v_2[s \mapsto a]) \in \nrm{\Delta_2}$. This is the required simulating transition. In fact, we have $v_1 \models \theta_1$ and, from $v_2 \models \theta_2$ and $a \in \den{\varphi_1}$, we get $v_2[s \mapsto a] \models \theta_2[s \mapsto \varphi_1]$. Moreover, we have $(\ns{q_1}{\theta_1}, \ns{q_2}{\theta_2[r \mapsto \varphi_1]},\sigma[r \mapsto s]) \in \srasim$, and $\sigma[r \mapsto s]$ encodes the new equality $v_1(r) = v_2[r \mapsto v_1(r)](s)$. Therefore, by definition of $\srasim'$, we obtain $((\ns{q_1}{\theta_1},v_1),(\ns{q_2}{\theta_2[s \mapsto \varphi_1]},v_2[s \mapsto a])) \in \srasim'$.
	\end{enumerate}
	\medskip
	\item $\ell_1 = \fresh{r}$: Then $\theta_1' = \theta[r \mapsto \varphi_1]$, $v_1' = v_1[r \mapsto a]$, $a \in \den{\varphi_1} \setminus \img(v_1)$, and we have two cases:
	\begin{enumerate}
		\item There is $s \in R_2$ such that $v_2(s) = a$. Since $a \notin \img(v_1)$, we have $s \in R_2 \setminus \dom(\sigma)$, so by Definition~\ref{def:red-sym-bisim} there is $\ns{p_2}{\theta_2} \tr{\varphi_1}{\checkreg{s}} \ns{q_2}{\theta_2} \in \nrm{\Delta_2}$ such that $(\ns{q_1}{\theta_1[r \mapsto \varphi_1]}, \ns{q_2}{\theta_2},\sigma[r \mapsto s]) \in \srasim$, which instantiates to $(\ns{p_2}{\theta_2},v_2) \xrightarrow{a} (\ns{q_2}{\theta_2},v_2)$ in $\clts(\nrm{\sra_2})$. The argument is then similar to the point 1.(b).
		\item No $s \in R_2$ is such that $v_2(s) = a$. Then we must have $\regab(\theta_1,\varphi_1) + \regab(\theta_2,\varphi_1) < \card{\den{\varphi_1}}$. In fact, if this is not the case, we would have 
		\[
			\card{(\img(v_1) \cup \img(v_2)) \cap \den{\varphi_1}} = \card{\den{\varphi_1}}
		\]
		by injectivity of $v_1$ and $v_2$ and $v_1 \models \theta_1, v_2 \models \theta_2$. Therefore there would be no $a \in \den{\varphi_1} \setminus (\img(v_1) \cup \img(v_2))$, a contradiction.
		 Hence, by definition of $\srasim$, there is $\ns{p_2}{\theta_2} \tr{\varphi_1}{\fresh{s}} \ns{q_2}{\theta_2[s \mapsto \varphi_1]} \in \nrm{\Delta_2}$ such that 
		\begin{equation}
			(\ns{q_1}{\theta_1[r \mapsto \varphi_1]}, \ns{q_2}{\theta_2[s \mapsto \varphi_1]},\sigma[r \mapsto s]) \in \srasim.\label{eq:rb-2b}
		\end{equation} 
		Because $a \in \den{\varphi_1} \setminus \img(v_2)$, this transition instantiates to $(\ns{p_2}{\theta_2},v_2) \xrightarrow{a} (\ns{p_2}{\theta_2[r \mapsto \varphi_1]},v_2[s \mapsto a])$ in $\clts(\nrm{\sra}_2)$. Since $v_i \models \theta_i$, $i=1,2$, $a \in \den{\varphi_1}$ and $v_1,v_2 \models \sigma$, we have $v_1[r \mapsto a] \models \theta_2[r \mapsto \varphi_1]$, $v_2[s \mapsto a] \models \theta_2[s \mapsto \varphi_1]$, and $v_1[r \mapsto a],v_2[s \mapsto a] \models \sigma[r \mapsto s]$. Therefore, by definition of $\srasim'$, \eqref{eq:rb-2b} implies \[
			((\ns{q_1}{\theta_1[r \mapsto \varphi_1]},v_1[r \mapsto a]), (\ns{q_2}{\theta_2[s \mapsto \varphi_1]},v_2[s \mapsto a]) \in \srasim'.
		\]
	\end{enumerate}
\end{enumerate}
\end{proof}

\begin{lemma}
Let $\srasim$ be a simulation on $\nrm{\sra_1}$ and $\nrm{\sra_2}$ such that 
\[
	((\nrm{q_{01}},v_{01}),(\nrm{q_{02}},v_{02}) \in \srasim.
\] 
Then the following relation:
\begin{align*}
	\srasim' = \{ (\ns{p_1}{\theta_1},\ns{p_2}{\theta_2},\sigma) \mid &\exists v_1 \models \theta_1 ,v_2 \models \theta_2 : ((\ns{p_1}{\theta_1},v_1),(\ns{p_2}{\theta_2},v_2)) \in  \srasim, \\
	&\sigma = v_1 \bowtie v_2 \} 
\end{align*}
is a \textsf{N}-simulation on $\nrm{\sra_1}$ and $\nrm{\sra_2}$ such that $((\nrm{q_{01}},v_0),(\nrm{q_{02}},\sigma_0)) \in \srasim'$
\end{lemma}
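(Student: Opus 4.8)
The plan is to show that the triple relation $\srasim'$, built by abstracting the concrete simulation $\srasim$ to register-equality constraints $\sigma = v_1 \bowtie v_2$, satisfies every clause of Definition~\ref{def:red-sym-bisim}. The argument is dual to that of the previous lemma, and the guiding pattern throughout is \emph{instantiate--simulate--abstract}: given a departing symbolic transition of $\nrm{\sra_1}$ out of a triple $(\ns{p_1}{\theta_1},\ns{p_2}{\theta_2},\sigma) \in \srasim'$, I fix witnessing assignments $v_1 \models \theta_1$, $v_2 \models \theta_2$ with $((\ns{p_1}{\theta_1},v_1),(\ns{p_2}{\theta_2},v_2)) \in \srasim$ and $\sigma = v_1 \bowtie v_2$; I turn the symbolic transition into a concrete one in $\clts(\nrm{\sra_1})$ by choosing a suitable input $a$ (always possible, since transitions of a normalized \SRA are enabled by construction); I invoke the hypothesis that $\srasim$ is a concrete simulation to obtain a matching concrete transition of $\clts(\nrm{\sra_2})$; and finally I read off the required symbolic transition of $\nrm{\sra_2}$ and the updated constraint, noting that the new configurations lie in $\srasim$, hence the new triple lies in $\srasim'$.

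The initial and final conditions are immediate. The hypothesis $((\nrm{q_{01}},v_{01}),(\nrm{q_{02}},v_{02})) \in \srasim$ together with $v_{0i} \models \theta_{0i}$ and $\sigma_0 = v_{01} \bowtie v_{02}$ places the initial triple in $\srasim'$. For the final-state clause, if $(\ns{p_1}{\theta_1},\ns{p_2}{\theta_2},\sigma) \in \srasim'$ and $p_1 \in F_1$, then $(\ns{p_1}{\theta_1},v_1)$ is a final configuration, so $(\ns{p_2}{\theta_2},v_2)$ is final by $\srasim$, which means $p_2 \in F_2$; this transfers because normalized final states are exactly those whose underlying state is final.

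The bulk of the work is a case analysis on the label $\ell_1$ of $\ns{p_1}{\theta_1} \tr{\varphi_1}{\ell_1} \ns{q_1}{\theta_1'}$. For a read label $\checkreg{r}$ I take $a = v_1(r)$: if $r \in \dom(\sigma)$ then $a = v_2(\sigma(r)) \in \img(v_2)$, so by injectivity the matching concrete transition must read the unique register $\sigma(r)$, giving case 1(a) with constraint $\sigma$; if $r \notin \dom(\sigma)$ then $a \notin \img(v_2)$, forcing a fresh match $\fresh{s}$ and the update $\sigma[r \mapsto s]$, giving case 1(b). For a fresh label $\fresh{r}$ I must produce two kinds of match. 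For the modified clause that matches a fresh transition by a read (condition \ref{rb:fresh-read}), I treat each $s \in R_2 \setminus \img(\sigma)$ with $\varphi_1 = (\theta_2)_s$ separately, instantiating with $a = v_2(s)$; then $a \in \den{\varphi_1}$ follows from $v_2 \models \theta_2$, and $a \notin \img(v_1)$ follows from $s \notin \img(\sigma)$ via $\sigma = v_1 \bowtie v_2$, so $a$ legitimately fires the fresh transition of $\nrm{\sra_1}$ while forcing a read of $s$ in $\nrm{\sra_2}$. For the fresh--fresh clause (condition \ref{rb:fresh-fresh}), assuming $\regab(\theta_1,\varphi_1) + \regab(\theta_2,\varphi_1) < \card{\den{\varphi_1}}$, injectivity of $v_1,v_2$ and $v_i \models \theta_i$ give $\card{(\img(v_1) \cup \img(v_2)) \cap \den{\varphi_1}} \le \regab(\theta_1,\varphi_1) + \regab(\theta_2,\varphi_1) < \card{\den{\varphi_1}}$, so a value $a$ fresh for both automata exists; it fires the fresh transition on both sides, yielding $\sigma[r \mapsto s]$. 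In every case, $a$ satisfies both $\varphi_1$ and the matching guard $\varphi_2$, and disjointness of minterms ($\sat(\varphi_1 \land \varphi_2)$ iff $\varphi_1 = \varphi_2$) delivers the guard-equality requirement \ref{rb:item1}.

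I expect the main obstacle to be the universally quantified read clause \ref{rb:fresh-read}: one must check that the chosen $a = v_2(s)$ is simultaneously a legal fresh input for $\nrm{\sra_1}$ and forces a read of exactly $s$ (not some other register) in $\nrm{\sra_2}$, both of which rest on single-valuedness of the assignments and on the precise correspondence $s \notin \img(\sigma) \iff v_2(s) \notin \img(v_1)$ coming from $\sigma = v_1 \bowtie v_2$. Closely related is the counting step for clause \ref{rb:fresh-fresh}, showing that a globally fresh value exists precisely when the cardinality bound holds; this is exactly where computing minterms over the \emph{union} of the predicates of $\sra_1$ and $\sra_2$ is essential, so that the same minterm $\varphi_1$ is meaningful on both sides and $\regab(\theta_i,\varphi_1)$ correctly counts $\card{\img(v_i) \cap \den{\varphi_1}}$.
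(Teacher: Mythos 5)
Your proposal is correct and follows essentially the same route as the paper's proof: the same \emph{instantiate--simulate--abstract} pattern (instantiate the symbolic transition via enabledness-by-construction, match it with the concrete simulation $\srasim$, then read off the symbolic match and the updated constraint $\sigma$), the same case split on $\checkreg{r}$ versus $\fresh{r}$ with sub-cases $r \in \dom(\sigma)$ / $a = v_2(s)$ / $a$ globally fresh, and the same use of minterm disjointness to force guard equality. Your explicit cardinality argument for clause \ref{rb:fresh-fresh} merely spells out what the paper leaves as ``the argument proceeds similarly,'' so there is no substantive difference.
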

\begin{proof}
First of all, $((\nrm{q_{01}},v_0),(\nrm{q_{02}},\sigma_0)) \in \srasim'$ clearly follows from 
\[
	((\nrm{q_{01}},v_{01}) , (\nrm{q_{02}},v_{02})) \in \srasim.
\]

We will now prove that $\srasim'$ is a $\textsf{N}$-simulation. Suppose $(\ns{p_1}{\theta_1},\ns{p_2}{\theta_2},\sigma) \in \srasim'$. Then, by definition of $\srasim'$, there are $((\ns{p_1}{\theta_1},v_1),(\ns{p_2}{\theta_2},v_2)) \in  \srasim$ such that $v_1 \models \theta_1 ,v_2 \models \theta_2$. Since $\srasim$ is a simulation, we must have that $\ns{p_1}{\theta_1} \in \nrm{F_1}$ implies $\ns{p_1}{\theta_1} \in \nrm{F_2}$. Now, suppose $\ns{p_1}{\theta_1} \tr{\varphi_1}{\ell_1} \ns{q_1}{\theta_1'} \in \nrm{\Delta_1}$. We have to exhibit transitions from $\ns{p_2}{\theta_2}$ that match the definition of normalised symbolic bisimulation:
\begin{enumerate}
	\item $\ell_1 = \checkreg{r}$: then $\theta_1' = \theta_1$. Since $\ns{p_1}{\theta_1} \tr{\varphi_1}{\checkreg{r}} \ns{q_1}{\theta_1}$ is enabled by $\theta_1$, by construction, and $v_1 \models \theta_1$, there is $(\ns{p_1}{\theta_1},v_1) \xrightarrow{a} (\ns{q_1}{\theta_1},v_1)$ in $\clts(\nrm{\sra_1})$. Since $\srasim$ is a bisimulation, there is $(\ns{p_2}{\theta_2},v_2) \xrightarrow{a} (\ns{q_1}{\theta_2'},v_2')$ in $\clts(\nrm{\sra_2})$. We have two cases:
	\begin{enumerate}
		\item $r \in \dom(\sigma)$, then $v_2(\sigma(r)) = a$, and the transition in $\clts(\sra_2)$ must be instance of some $\ns{p_2}{\theta_2} \tr{\varphi_2}{\checkreg{r}} \ns{q_2}{\theta_2} \in \nrm{\Delta_2}$. Now, since $a \in \den{\varphi_1} \cap \den{\varphi_2}$, and $\varphi_1$ and $\varphi_2$ are minterms (recall that $\nrm{\sra_1}$ and $\nrm{\sra_2}$ are defined over the same set of minterms), we must have $\varphi_2 = \varphi_1$. Therefore the given transition of $\nrm{\sra_2}$ is the required simulating one. In fact, because $\srasim$ is a bisimulation, we have $((\ns{q_1}{\theta_1},v_1),(\ns{q_2}{\theta_2},v_2)) \in \srasim$, which implies $(\ns{q_1}{\theta_1},\ns{q_2}{\theta_2},\sigma) \in \srasim'$, by definition of $\srasim'$. 
		\item $r \notin \dom(\sigma)$, then $a \notin \img(v_2)$, and the transition in $\clts(\sra_2)$ must be instance of some $\ns{p_2}{\theta_2} \tr{\varphi_2}{\fresh{s}} \ns{q_2}{\theta_2[s \mapsto \varphi_2]} \in \nrm{\Delta_2}$. By the same reasoning as above, we must have $\varphi_1 = \varphi_2$. Because $\srasim$ is a bisimulation, we have $((\ns{q_1}{\theta_1},v_1),(\ns{q_2}{\theta_2[s \mapsto \varphi_2]},v_2[s \mapsto a])) \in \srasim$, which implies $(\ns{q_1}{\theta_1},\ns{q_2}{\theta_2[r \mapsto \varphi_1]},v_1 \bowtie v_2[s \mapsto a]) \in \srasim'$, by definition of $\srasim'$. 
	\end{enumerate}
	\item $\ell_1 = \fresh{r}$: then $\theta_1' = \theta[r \mapsto \varphi_1]$. Since $\ns{p_1}{\theta_1} \tr{\varphi_1}{\checkreg{r}} \ns{q_1}{\theta[r \mapsto \varphi_1]}$ is enabled by $\theta[r \mapsto \varphi_1]$, by construction, and $v_1 \models \theta_1$, there is $(\ns{p_1}{\theta_1},v_1) \xrightarrow{a} (\ns{q_1}{\theta[r \mapsto \varphi_1]},v_1[r \mapsto a])$ in $\clts(\nrm{\sra_1})$, for all $a \in \den{\varphi_1} \setminus \img(v_1)$. The values of $a$ can be of two kinds:
	\begin{enumerate}
		\item $a = v_2(s)$, for some $s$. This holds if and only if $s \in R_2 \setminus \dom(\sigma)$ and $\varphi_1 = (\theta_2)_s$. In fact, $s \in R_2 \setminus \dom(\sigma)$ if and only if $v_2(s) \notin \img(v_1)$, by definition of $\sigma$, and $\varphi_1 = (\theta_2)_s$ if and only if $v_2(s) \in \den{\varphi_1}$, because the two predicates are minterms. Therefore the condition on $s$ above is equivalent to $v_2(s) \in \den{\varphi_1} \setminus \img(v_1)$.
		
		Because $\srasim$ is a simulation, there is a transition $(\ns{p_2}{\theta_2},v_2) \xrightarrow{a} (\ns{q_2}{\theta_2},v_2)$ in $\clts(\nrm{\sra_2})$. This transition must be instance of some $\ns{p_2}{\theta_2} \tr{\varphi_1}{\checkreg{s}} \ns{q_2}{\theta_2} \in \nrm{\Delta_2}$. This is the required simulating transition. In fact, we have $((\ns{q_1}{\theta_1[r \mapsto \varphi_1]},v_1[r \mapsto a]),(\ns{q_2}{\theta_2},v_2)) \in \srasim$, which implies 
		\[
			(\ns{q_1}{\theta_1},\ns{q_2}{\theta_2},v_1[r \mapsto a] \bowtie v_2) \in \srasim'
		\] 
		Notice that $v_1[r \mapsto a] \bowtie v_2 = \sigma[r \mapsto s]$ follows from the assumption $\sigma = v_1 \bowtie v_2$, and from $v_1[r \mapsto a](r) = v_2(s)$.
		\item $a \notin \img(v_2)$. Then the transition above must be an instance of $\ns{p_2}{\theta_2} \tr{\varphi_1}{\fresh{s}} \ns{q_2}{\theta_2[r \mapsto \varphi_1]}$. The argument proceeds similarly as the point above.
	\end{enumerate}	\qed
\end{enumerate}
\end{proof}

\begin{proof}[of Proposition~\ref{prop:sra-nrm-det}]

Suppose $\sra$ is not deterministic, then there is a reachable configuration $(p,v)$ in $\clts(\sra)$ and two transitions $(p,v) \xrightarrow{a} (q_1,v_1)$ and $(p,v) \xrightarrow{a} (q_2,v_2)$, with $(q_1,v_1) \neq (q_2,v_2)$. By Proposition~\ref{prop:norm-bisim}, these transitions exist if an only if there are two transitions $(\ns{p}{\theta},v) \xrightarrow{a} (\ns{q_1}{\theta_1},v_1)$, $(\ns{p}{\theta},v) \xrightarrow{a} (\ns{q_2}{\theta_2},v_2)$ in $\clts(\nrm{\sra})$ i.e., if and only if $\nrm{\sra}$ is not deterministic.
\qed
\end{proof}

\section{Additional results}

\newcommand{\sink}{\mathsf{sink}}
\begin{proposition}
Given a deterministic single-valued SRA $\sra$, let $\mathsf{pred} \colon Q \times (R \cup \{\bullet \}) \to \Psi$ be the function
\[
	\mathsf{pred}(p,x) = 
	\begin{cases}
		\bigvee \{ \varphi \mid p \tr{\varphi}{x} q \in \Delta \} & x \in R \\
		\bigvee \{ \varphi \mid p \tr{\varphi}{\fresh{r}} q \in \Delta \} & x = \bullet
	\end{cases}
\]
where $\bigvee \emptyset = \bot$.
Let $\sra' = (R, Q \cup \{\sink\}, q_0,v_0,,F,\Delta')$ be given by
\begin{align*}
	\Delta' &= \Delta \cup \Delta_{\neg} \cup \Delta_{\sink}
	\\
	\Delta_{\neg} &= \bigcup_{p \in Q} \{ p \tr{\neg \mathsf{pred}(p,r)}{r} \sink \mid r \in R \} \cup \{ p \tr{\neg \mathsf{pred}(p,\bullet)}{\fresh{\tilde{r}}} \sink \} \\
	\Delta_{\sink} &= \{ \sink \tr{\top}{r} \sink \mid r \in R \} \cup \{ \sink \tr{\top}{\fresh{\tilde{r}}} \sink \}  
\end{align*}
where $\tilde{r} \in R$ is a chosen register. Then $\sra'$ is complete and deterministic, and $\lang(\sra') = \lang(\sra)$.
\label{prop:sra-complete}
\end{proposition}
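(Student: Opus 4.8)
I would prove the three claims — completeness, determinism, and $\lang(\sra') = \lang(\sra)$ — by exploiting two structural facts about the construction: the state $\sink$ is \emph{absorbing} (its only outgoing transitions, those of $\Delta_\sink$, are self-loops) and \emph{non-final} (it is not added to $F$), and every newly added transition in $\Delta_\neg \cup \Delta_\sink$ either enters $\sink$ or stays there. Consequently any run of $\clts(\sra')$ that currently sits at a state $p \in Q$ has used only transitions of $\Delta$, so it is literally a run of $\clts(\sra)$; hence any configuration $(p,v)$ with $p \in Q$ that is reachable in $\clts(\sra')$ is also reachable in $\clts(\sra)$. I would record this first, since it is what lets me invoke determinism of $\sra$. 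I would also note that $\sra'$ is still single-valued: read transitions do not alter the assignment, and a fresh transition stores a value $a \notin \img(v)$, so $v[r \mapsto a]$ stays injective on non-empty registers. Thus in every reachable $(p,v)$, each $a \in \img(v)$ is held by a \emph{unique} register $r_0$, which makes the case analysis below unambiguous.

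Next I would prove completeness and determinism \emph{simultaneously}, by showing that from every reachable configuration and for every input $a \in \els$ there is \emph{exactly one} outgoing transition reading $a$. Fix a reachable $(p,v)$ and an $a$, and split on $p$. If $p = \sink$: when $a \in \img(v)$ only the self-loop $\sink \tr{\top}{r_0} \sink$ on the unique $r_0$ with $v(r_0)=a$ fires (the other read self-loops fail the value test, the fresh self-loop fails freshness), and when $a \notin \img(v)$ only $\sink \tr{\top}{\fresh{\tilde r}} \sink$ fires. If $p \in Q$ and $a \in \img(v)$ with $v(r_0)=a$, then only transitions reading $r_0$ are eligible, and the relevant guards are exactly the disjuncts of $\mathsf{pred}(p,r_0)$ together with the added guard $\neg\mathsf{pred}(p,r_0)$. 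Since $\den{\varphi} \subseteq \den{\mathsf{pred}(p,r_0)}$ for each original disjunct $\varphi$, exactly one side is satisfied: if $a \in \den{\mathsf{pred}(p,r_0)}$ some original read transition fires and, by determinism of $\sra$ (available by the reachability remark), precisely one does while the $\sink$ transition is disabled; if $a \in \den{\neg\mathsf{pred}(p,r_0)}$ no original transition fires and the single transition $p \tr{\neg\mathsf{pred}(p,r_0)}{r_0} \sink$ does. The fresh case $a \notin \img(v)$ is identical with $\mathsf{pred}(p,\bullet)$ and the fresh-to-sink transition replacing the read ones. This gives exactly one successor in every case, which is completeness and determinism at once.

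Finally, for the languages I would argue both inclusions from the absorbing/non-final nature of $\sink$. Since $\Delta \subseteq \Delta'$ and $\sra,\sra'$ share $q_0$, $v_0$, and $F$, every accepting run of $\sra$ is also an accepting run of $\sra'$, so $\lang(\sra) \subseteq \lang(\sra')$. Conversely, an accepting run of $\sra'$ ends in $F \subseteq Q$; as $\sink$ is absorbing and not final, the run can never have entered it, so it uses only $\Delta$-transitions and is an accepting run of $\sra$, giving $\lang(\sra') \subseteq \lang(\sra)$.

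I expect the only delicate points to be the two bookkeeping facts underpinning the central case analysis: that a configuration at an original state reachable in $\clts(\sra')$ is still reachable in $\clts(\sra)$ (needed both to invoke determinism of $\sra$ and to exclude spurious extra original successors), and that the complement guard $\neg\mathsf{pred}(p,-)$ partitions the domain against the union of the original guards \emph{exactly}, so that completeness (at least one successor) and determinism (at most one successor) fall out together rather than requiring separate arguments.
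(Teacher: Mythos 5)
Your proof is correct and takes essentially the same route as the paper's: completeness and determinism both fall out of the case analysis on $\mathsf{sink}$ versus original states and stored versus fresh values, using that the guards $\mathsf{pred}(p,-)$ and $\neg\mathsf{pred}(p,-)$ partition the domain, and $\lang(\sra')=\lang(\sra)$ follows from $\mathsf{sink}$ being absorbing and non-accepting. Your write-up is in fact slightly more careful than the paper's, which leaves implicit the two bookkeeping facts you isolate (that configurations over original states reachable in $\clts(\sra')$ are reachable in $\clts(\sra)$, so determinism of $\sra$ applies, and that single-valuedness is preserved).
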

\begin{proof}
We first show that $\sra'$ is complete, i.e, that for any configuration $(p,v)$ of $\sra'$ and any $a \in \els$ there is a transition $(p,v) \xrightarrow{a} (q,w)$ in $\clts(\sra')$. This is clearly true for $p = \sink$, as its outgoing transitions can read anything. If $p \neq \sink$, then either there is $r \in R$ such that $v(r) = $, or $a \notin \img(v)$. In the first case, either $a \in \den{\mathsf{pred}(p,r)}$, and so a transition in $\Delta$ can read it, or $a \in \neg \den{\mathsf{pred}(p,r)}$, and so a transition in $\Delta_{\neg}$ can read it. The case $a \notin \img(v)$ is analogous.

Notice that this case analysis also tells us that the transitions in $\Delta$, $\Delta_{\neg}$ and $\Delta_{\sink}$ are mutually exclusive, from which determinism follows. 

The last claim $\lang(\sra) = \lang(\sra')$ follows from the fact that all the additional transitions of $\sra'$ go to $\sink$, which is non-accepting.\qed
\end{proof}

\end{appendix}

\end{document}